\documentclass[reprint,aps,pra,onecolumn,superscriptaddress,longbibliography,amsmath,amssymb]{revtex4-2}

\usepackage{bm}
\usepackage{mathtools}
\usepackage{booktabs}
\usepackage{xcolor}
\usepackage{hyperref}
\hypersetup{colorlinks=true,linkcolor=blue,citecolor=blue,urlcolor=blue}
\providecommand{\Eprint}[2]{#2}

\newcommand{\Tr}{\operatorname{Tr}}

\newcommand{\id}{\mathbb I}

\newcommand{\TV}{\operatorname{TV}}

\newcommand{\CPT}{\mathsf{CPT}}
\newcommand{\Free}{\mathsf F}
\newcommand{\Proc}{\mathsf P}
\newcommand{\Inst}{\mathfrak I}
\newcommand{\Cost}{\mathsf c}
\newcommand{\eps}{\varepsilon}

\newtheorem{theorem}{Theorem}
\newtheorem{proposition}{Proposition}
\newtheorem{lemma}{Lemma}
\newtheorem{corollary}{Corollary}
\newtheorem{definition}{Definition}
\newtheorem{remark}{Remark}
\newenvironment{proof}{\par\noindent\textit{Proof.}\ }{\hfill$\Box$\par}

\begin{document}

\title{Temporal-Access Resource Theory for Continuous Process Tensors}

\author{He Wang}
\email{dr\_hewang19@163.com}

\begin{abstract}
The uncontrolled endpoint channel can conceal how well a quantum process can protect a logical task under time-resolved control. We formulate a fixed-task resource theory of continuous process tensors with physical access interfaces, entrance-erasing free objects, and budgeted free conversions. Complete-record total-variation error measures performance; the calibrated capability is monotone under conversions that preserve a common free baseline. For unknown-state transmission the baseline is exactly $1-d^{-2}$. Two qubit processes with the same endpoint channel have optimal errors $0$ and $p=\sin^2(2gh)$ at the same finite pulse budget in an ideal-transparent system-only window architecture, forcing any endpoint-only predictor to err by at least $p/2$ on one of them. An exactly solvable family shows how the allocation of a fixed interaction phase around a control window sets the attainable error, and a nontrivial perfect echo requires and attains total control action $\pi$ in that architecture. The separation extends directly to every qubit unitary target at a budget that includes its boundary gate, while unitary tasks in any dimension have baseline $1-d^{-2}$ whenever a budget-feasible free realization gives the depolarizing output. Complete-record implementation bounds quantify coherent and dissipative control imperfections. Continuous-process randomized compilation supplies physical instrument pullbacks; compatible completion conditions extend their action, and certified dynamical bounds yield achievable task-error and capability guarantees. The resulting theory identifies temporal structure and control investment that endpoint characterization alone cannot resolve.
\end{abstract}

\maketitle

\section{Introduction}

An endpoint channel records the uncontrolled input--output transformation of a quantum device. It does not record when an interaction occurred or whether a finite control window can reverse it. Those missing data determine the best error achievable for a prescribed logical task. Process tensors and quantum combs retain the response to internal interventions \cite{pollock2018,milz2021,chiribella2009}; operational distinguishability compares the resulting experiments \cite{zambon2024}. Continuous process descriptions also accommodate finite-duration instruments and controls \cite{wassner2026}, while dilation results connect such descriptions to system--environment realizations \cite{fuksa2026}. The operational question is how much internal time access improves a fixed task under a fixed physical budget.

Resource theories organize this question through processes, free transformations, and monotones \cite{chitambar2019}. For multi-time processes, the admissible superprocess and its interface determine which temporal correlations can be used \cite{berk2021}. Non-Markovian correlations can be consumed to reduce noise \cite{berk2023}; subsequent work clarified the dependence of dynamical monotones and noise reducibility on restricted control \cite{berk2025monotones,berk2025quantifiers}. Here the primary quantity is instead the optimized error relative to a preassigned logical process. An entrance-erasing process supplies the free benchmark, and a physical conversion is charged to the available budget.

We define that error by the total variation distance of complete experimental records, using the same task tests for the actual and ideal processes. The theory fixes the logical task, the temporal input interface, and the allowed control class before optimization. It constructs a nonempty free family by erasing the complete protected entrance wire and excluding input-bearing bypasses. Causal experiment pullback gives composition closure and a budgeted performance ordering; a common free benchmark turns the calibrated improvement into a monotone. For reference-assisted unknown-state transmission the free benchmark is $1-d^{-2}$.

Our main physical result is an exact, finite-budget separation. Two realizable qubit processes have the same uncontrolled endpoint channel, yet their best errors under the same ideal-transparent system-only window controls are $0$ and $p=\sin^2(2gh)$. Exact endpoint tomography therefore leaves a worst-case capability prediction error of at least $p/2$. A fixed-total-phase family identifies the useful time structure: placing interaction on both sides of a control window permits cancellation, whereas placing it entirely before the window does not. The balanced echo has a sharp total action threshold of $\pi$, attained by bounded-duration pulses. Common commuting window coupling and certified control imperfections admit quantitative stability bounds.

The exact qubit result extends physically to every target $U\in U(2)$ when its implementation cost is included. Separately, output-task and joint input--output covariance establish precise conditional transport statements, and same-dimensional unitary targets retain the free benchmark $1-d^{-2}$ at budgets admitting a depolarizing free realization. Continuous-process randomized compilation (CPRC) \cite{wang2026cprc} supplies a continuous family of physical instrument conversions under its covariance and realization conditions. We establish the required experiment-pullback structure and use the complete-record dynamical guarantee of Ref.~\cite{wang2026cprc} to certify achievable error at a declared budget. The resulting distinction is sharp: the finite-window optima are proved within their stated control class, and the CPRC estimates certify selected realizable protocols.

\section{Operational setting and task interface}
\label{sec:setting}

\begin{table}[t]
\caption{Principal symbols. Distinct fonts distinguish access classes,
physical registers, and operational quantities.}
\label{tab:notation}
\begin{ruledtabular}
\normalsize
\begin{tabular}{ll}
Symbol & Meaning\\
$\Lambda,\Xi_\tau$ & Actual process and prescribed ideal task\\
$a,\tau,B$ & Input access structure, output task, control budget\\
$\mathcal B_{\tau,a}^{\rm feas}$ & Budgets admitting at least one task control\\
$\Inst_a,\Inst_\tau$ & Accessible experiments and fixed task tests\\
$d$ & Protected system dimension; $d_a$ denotes the operational distance at access structure $a$\\
$\eps_\tau,E_{\tau,a,B},R_{\tau,a,B}$ & Task error, optimized error, calibrated advantage; unsubscripted $R$ is the tester reference\\
$\Free_a,\mathsf G$ & Free processes and preparation generators\\
$Q,R,M$ & Protected input, tester reference, complete entrance wire\\
$G_{f\to c},\mathfrak G_{\rm end}$ & Interface restriction and endpoint restriction\\
$G_{\rm app}(t),\mathcal A_F(G_{\rm app})$ & Apparatus field and its compiled physical field; $G_{f\to c}$ is the coarse restriction\\
$F,K_F$ & System frame and its Hamiltonian generator\\
$A_q,\Cost(P)$ & Generator norm cost and full implementation cost\\
$Z_t,\pi_t,P_{s,t}$ & Control state, marginal law, transition kernel\\
$N_{\rm fr},N$ & Frame alphabet size and holding-interval count
\end{tabular}
\end{ruledtabular}
\end{table}

Throughout, $\|\cdot\|_\diamond$ is the full diamond norm; the channel or
generator discrepancy called a half-diamond distance is
$\tfrac12\|\cdot\|_\diamond$.  The factor $1/2$ is included explicitly in
every error radius below.

\subsection{Continuous processes and access structures}

Let $S$ be a finite-dimensional system with a distinguished protected input $Q$ at time zero, and let $[0,T]$ be the physical time interval.  A reference $R$ may initially be entangled with $Q$; it is held by the task tester and is inaccessible to the controller until the terminal measurement.  Controller ancillas and random seeds are initially independent of $QR$.  Access to later logical ports is declared explicitly and cannot be used to reinsert a stored copy of the protected input.

An allowed continuous instrument $\mathcal J$ has a standard Borel record space $(X_{\mathcal J},\Sigma_{\mathcal J})$.  Its record up to $t$ generates a filtration, and its feedback maps are adapted to that filtration.  An instrument is countably additive on record events, completely positive under finite ancillary extension, and normalized when outcomes are summed.  Its Hamiltonian or channel realization and its temporal supports belong to the declared hardware class.  Supports may be finite unions of half-open intervals, with a separate terminal readout.  Continuous records are compared on this same space, or through a fixed, process-independent relabeling specified before optimization.  The experiment class includes causal prefix restrictions, deterministic completion, classical outcome coarse-graining, and the declared ancillary extensions.  Every admissible experiment is a finite physical network or an admissible continuous limit of such networks.  Its completion topology is fixed as part of the instrument class.

The set of complete experiments is denoted by $\Inst_a$, where
\begin{equation}
 a=(T,\delta t,\mathcal A,\mathcal K,\mathcal R)
 \label{eq:access-structure}
\end{equation}
records the time resolution $\delta t$, the allowed supports and physical instrument class $\mathcal A$, the ancillary-system class $\mathcal K$, and the record/feedback class $\mathcal R$, together with their completion topology.  We call $a$ an \emph{access structure}.  Peak amplitude and action are protocol budgets introduced below; a resolution label by itself does not grant arbitrary pulses.  Write $a_f\succeq a_c$ when a physical, outcome-compatible embedding $\iota_{c\to f}:\Inst_{a_c}\to\Inst_{a_f}$ preserves ancillary extensions, deterministic completion, causal prefixes, and the chosen continuous limits.  This access order concerns input experiments, not an enlargement of the output task tests.

A cPT $\Lambda\in\CPT_a$ is, at the operational level used here, a continuous Born functional on the span of admissible instrument events.  It is multilinear in distinct operation slots and affine under classical instrument randomization.  Positivity holds under every declared finite ancillary extension; deterministic experiments give unit mass; and a prefix marginal depends only on the corresponding prefix experiment.  Its event probabilities are countably additive.  These requirements define an \emph{operational cPT}; they do not by themselves assert an extension to every mathematically possible quantum operation.  We write $\CPT_a^{\rm real}\subseteq\CPT_a$ for the subclass supplied by a declared finite comb, CP-instrument network, Hamiltonian dilation, or an admissible continuum construction with the same properties.  All physical control and CPRC claims below apply to this realizable subclass, or conditionally to an operational class when the stated pullback extension exists.  No positivity condition is imposed on individual response amplitudes.  For $\mathcal J\in\Inst_a$, its complete output probability measure is written
\begin{equation}
 p_\Lambda^\mathcal J(dx)=\Lambda[\mathcal J](dx).
 \label{eq:born-cpt}
\end{equation}
For a finite set of slots this reduces to the process-tensor Born rule \cite{chiribella2009,pollock2018}.  A continuum realization may be supplied by the operational completion and dilation constructions of Refs.~\cite{wassner2026,fuksa2026}, with their hypotheses and topologies retained.  Our conversion results require the stated instrument structure and continuity, and apply to any realization satisfying them.  In a regular Fock representation one may write schematically
\begin{equation}
 |\Lambda_T\rangle=\sum_{n\ge0}\sum_{\boldsymbol\nu}
 \int_{0<t_1<\cdots<t_n<T}
 C_{\boldsymbol\nu}^{(n)}(\boldsymbol t)
 |\boldsymbol\nu;\boldsymbol t\rangle\,d\boldsymbol t,
 \label{eq:cpt-fock}
\end{equation}
where the coefficients are complex response amplitudes rather than probabilities.  Equation~\eqref{eq:cpt-fock} is a representation when its sector expansion converges in the chosen continuum space; it is not the definition of $\CPT_a$.  None of the resource monotonicity or separation proofs uses an unweighted Fock norm.  Operationally indistinguishable functionals may be identified; on a separating physical process space no additional quotient is needed.

The operational distance associated with $a$ is
\begin{equation}
 d_a(\Lambda,\Lambda'):=\sup_{\mathcal J\in\Inst_a}
 \TV\!\left(p_\Lambda^\mathcal J,p_{\Lambda'}^\mathcal J\right),
 \qquad
 \TV(P,Q):=\frac12\int|dP-dQ|.
 \label{eq:operational-distance}
\end{equation}
It is a pseudometric when $\Inst_a$ is not separating.  We identify processes with identical event laws on $\Inst_a$ and take operational closures in this quotient; $d_a$ is a metric there.  A closure statement in this metric concerns accessible event laws, without specifying a unique microscopic dilation.  The completion topology of instruments is a separate part of $a$: continuity of a Born functional in that topology does not assert density of regular dilations in $d_a$.  For every event $A$ in the complete record,
\begin{equation}
 |p_\Lambda^\mathcal J(A)-p_{\Lambda'}^\mathcal J(A)|
 \le d_a(\Lambda,\Lambda').
 \label{eq:event-bound}
\end{equation}

\subsection{Logical tasks}

A concrete instrument completion is supplied by finite-dimensional, finite-slot combs with finite outcome sets and a fixed finite ancillary dimension.  Instruments are tuples of CP maps equipped with the sum of their diamond-norm distances.  Outcome summation, deterministic completion, prefix restriction, and declared ancillary extension preserve the comb normalization equations.  Each Born probability is a finite multilinear contraction of Choi operators, hence continuous in this topology.  Refining the slots gives compatible families; a continuous limit is admitted when its Born functional is continuous in the specified instrument completion and retains these normalization and prefix identities.  This example establishes a nonempty operational model without identifying slot refinement with convergence in the uniform process metric $d_a$.

A logical task is a pair $\tau=(\Xi_\tau,\Inst_\tau)$, where $\Xi_\tau\in\CPT_{a_\tau}$ is fixed before any control optimization and $\Inst_\tau\subseteq\Inst_{a_\tau}$ is the set of logical experiments whose output distribution is to be protected.  Whenever a process at an access structure $a$ is compared with this task, the declared task family is required to satisfy
\[
 \Inst_\tau\subseteq\Inst_a\cap\Inst_{a_\tau};
\]
thus the same experiment is physically available at both interfaces.  Claims about a physically implemented task additionally require $\Xi_\tau\in\CPT_{a_\tau}^{\rm real}$.  The same logical experiment must be used when comparing an actual process with $\Xi_\tau$; separate optimization of the two sides would turn an error measure into a discrimination protocol.

The complete record space is part of the task interface, rather than an implementation detail.  By default it contains all declared logical instrument outcomes and the terminal readout record; internal control paths, random-frame labels, and sampling variables are averaged out.  If a control label is retained, its measurable label space and probability law must be fixed before optimization and the ideal task must carry the same label law.  Actual and ideal processes are therefore compared on one common complete record space, with any further coarse-graining specified as a fixed measurable record map.  This convention applies to both endpoint and multi-time tasks.

For $\Lambda$ expressed at the task interface, define
\begin{equation}
\eps_\tau(\Lambda):=\sup_{\mathcal J\in\Inst_\tau}
 \TV\!\left(p_\Lambda^{\mathcal J},p_{\Xi_\tau}^{\mathcal J}\right).
 \label{eq:task-error}
\end{equation}
The ideal task is the fixed resource-rich reference.  Environmental memory does not by itself determine task capability: a process with memory can have zero task error, and a memoryless channel can outperform the free baseline.  Resource advantage will mean an optimized error below that baseline, not simply a nonzero error.  The task family, including allowed input states and terminal measurements, is fixed across all access and control comparisons.  In particular, $0\le\eps_\tau\le1$ and $\eps_\tau(\Lambda)\le d_{a_\tau}(\Lambda,\Xi_\tau)$.

\subsection{Declared process and implementation error}

A cPT contains the environmental action assigned to the actual process, including its dissipation and memory. A control or instrument implementation error compares a realized conversion with its declared reference implementation on that process. If an additional dissipative mechanism belongs to the system--environment dynamics being studied, it defines a different actual cPT whose task error must be recomputed. The complete-record discrepancy used below keeps this distinction explicit and applies to any fixed task interface.

\section{Resource theory for fixed tasks}
\label{sec:resource}

Fixing a task and imposing ideal transparency are distinct choices.  We choose controls whose action on the compatible ideal input process is the prescribed ideal output process.  For example, terminal $Z$ corrects an actual channel $\operatorname{Ad}_Z$ to the identity task, although it sends the ideal identity channel to $\operatorname{Ad}_Z$.  Such a correction can be assessed using Eq.~\eqref{eq:task-error}, but is outside the ideal-transparent architecture studied here.  This restriction fixes the conversion theory and the control-relative optima below.

\subsection{Replacement seeds and the free set}

We use an explicit input-erasing architecture.  Each interface has the same protected external input $Q$ of dimension $d$, a designated process entrance, and a compatible ideal reference $\Xi_a$.  At the entrance a free seed erases the entire incoming wire and prepares a state independent of the unknown input.  On an unencoded $Q$ wire, a one-boundary seed is
\begin{equation}
 \mathcal R_\sigma(\rho)=\sigma\Tr\rho,
 \qquad \sigma\succeq0,\quad\Tr\sigma=1,
 \label{eq:replacement-channel}
\end{equation}
If an allowed encoding enlarges the incoming wire to $M$, the corresponding seed is $\mathcal R^{M}_{\sigma_M}$ on all of $M$, including every encoded ancillary component.  In particular,
\begin{equation}
 (\mathcal R^{M}_{\sigma_M}\otimes\mathrm{id}_R)(\chi_{MR})
 =\sigma_M\otimes\Tr_M\chi_{MR}.
 \label{eq:complete-entrance-erasure}
\end{equation}
The symbol $M$ is a typed entrance system, not an arbitrary auxiliary extension of the replacement channel on $Q$.  An interface declaration therefore includes an admissible encoding type $Q\otimes A_{\rm in}\to M\otimes D$ and the corresponding replacement seed on the whole $M$.  A reference or tester ancilla is not part of $M$ and is never erased by the controller.  If a hardware model permits several entrance types, each type has its own seed family and ideal reference; a conversion may change type only through a declared typed encoder and a seed that erases its complete output.  Thus the equality above is not inferred from $\mathcal R_\sigma^Q\otimes\mathrm{id}_{A_{\rm in}}$.
The continuous seed follows this erasure by an admissible prepared evolution, possibly with a memory-bearing environment and downstream records.  Controller memory, the prepared environment, and downstream preparation parameters are initially independent of $QR$.  The reference is untouched.  Consequently, for a fixed downstream strategy $\mathcal L$ preceding terminal readout,
\begin{equation}
 (\Lambda_{\mathcal L}\otimes\mathrm{id}_R)(\rho_{QR})
 =\omega_{\mathcal L}\otimes\rho_R,
 \qquad \rho_R=\Tr_Q\rho_{QR},
 \label{eq:input-erasing}
\end{equation}
where $\omega_{\mathcal L}$ includes every accessible controller record.  The equation concerns the output state before measurement.  Joint measurement probabilities can depend on $\rho_R$; declaring all probabilities input independent would incorrectly exclude reference-assisted testers.  Downstream interventions may supply new prescribed states, but cannot bring the protected input or its reference back into the controller.

Let $\mathsf G$ consist of input-independent prepared dynamics after erasure, independent ancillary preparation, ancillary discarding, classical mixing and record maps, and physical interface restrictions.  Here $\operatorname{Gen}_a$ means all finite networks built from entrance-erasing seeds and these generators whose output interface is $a$.  Interface-changing generators have specified source and target types; they are not self-maps of a single $\CPT_a$.  Define
\begin{equation}
 \Free_a:=\operatorname{cl}_{d_a}\!\left[
 \operatorname{Gen}_a\left(\{\mathcal R_\sigma\},\mathsf G\right)
 \right].
 \label{eq:free-set}
\end{equation}
Each finite generator network obeys Eq.~\eqref{eq:input-erasing}.  Its operational limit remains indistinguishable from an entrance-erasing process in every allowed experiment: event-law equality passes to a $d_a$ limit, and any allowed test witnessing input dependence would contradict that convergence.  For a complete-record interface, this closure is understood in the declared record topology.  If $\Phi_n$ are finite input-erasing maps, $\Phi_n(\rho)=\omega_n\operatorname{Tr}\rho$, and $\Phi_n\to\Phi$, then fixing one normalized input $\rho_0$ gives $\omega_n=\Phi_n(\rho_0)\to\omega:=\Phi(\rho_0)$.  Continuity of the ancillary extension and tensoring with a reference therefore preserves
\begin{equation}
(\Phi\otimes\operatorname{id}_R)(\rho_{QR})=\omega\otimes\rho_R .
\label{eq:input-erasing-limit}
\end{equation}
For continuous records, $\omega_n\to\omega$ is read as convergence of the corresponding record-valued measures (or, on a restricted interface, of all declared event laws); the same factorization is then asserted at that operational level.  When the output is finite-dimensional and state-tomographic, this reduces to the usual closedness of the finite-dimensional replacement-channel set.  A physical seed at each access layer applies the boundary channel $\mathcal R_{I/d}$ once, or its complete-wire version after an enlarged encoding, and then uses idle system dynamics with an independent prepared environment.  Its finite-interface combs are compatible and its Born functional is continuous on the same instrument limits as idle propagation.  The erasure is a boundary CP map; a hardware model requiring its finite-time implementation includes that interval in the interface.  Thus $\Free_a$ is nonempty at every declared layer.  Arbitrary admissible prepared evolutions after entrance erasure are included among the seeds.  This free family is a specified transmission baseline, rather than the family of Markovian processes.

For a budget $B$, the class $\mathfrak C_{a\to\tau}(B)$ consists of admissible ideal-transparent physical control networks of the architecture below, with cost at most $B$ and output interface $a_\tau$.  Every such control obeys the full entrance-erasure cut of Definition~\ref{def:entrance-cut}, including its no-bypass clause: no input-dependent quantum system or classical record may cross the entrance on a side wire or be retained by the controller.  This condition is imposed on the control network itself, not inferred from preservation of the free set.  Initially independent ancillary supplies cannot carry the input around that cut.  The compatible references satisfy $P\Xi_a=\Xi_\tau$.  Each declared control subarchitecture is generated under identity, typed nesting, and independent classical randomization, with its hardware primitives and budget fixed before optimization.  A passive task restriction belongs to this class only when it preserves the prescribed ideal task.  Changing the permitted controls changes the optimization problem even when the task remains fixed.  Define the task-dependent feasible budget set by
\begin{equation}
 \mathcal B_{\tau,a}^{\mathrm{feas}}
 :=\{B:\mathfrak C_{a\to\tau}(B)\ne\varnothing\}.
 \label{eq:feasible-budget-domain}
\end{equation}
Task capability and its free baseline are defined only for $B\in\mathcal B_{\tau,a}^{\mathrm{feas}}$.  This set is upward closed under the budget order, but need not contain zero: a paid target gate can require positive control cost even on the ideal input process.  Identity conversions still have zero overhead; they are distinct from controls implementing a prescribed output task.  On the feasible budget set, the optimized free baseline is
\begin{equation}
 \eps_{\tau,a,B}^{\mathrm F}:=\inf_{\Lambda\in\Free_a}
 \inf_{P\in\mathfrak C_{a\to\tau}(B)}\eps_\tau(P\Lambda).
 \label{eq:free-baseline}
\end{equation}
Proposition~\ref{prop:replacement-baseline} computes this baseline exactly for unknown-state transmission.  For a different task, baseline positivity and compatibility under interface changes must be proved for that task.

\subsection{Physical free conversions and the entrance cut}

An implementation is a causal physical network with one open process slot, including the cPT's declared internal ports.  The controller can act on the main system wire, use initially independent ancillary states at accessible downstream ports, and use adapted downstream classical feedback when the subarchitecture permits it.  We impose the following condition on implementations, before passing to their operational equivalence classes.

\begin{definition}[Complete entrance-erasure cut]
\label{def:entrance-cut}
In each finite causal realization, unfold feedback into a directed acyclic network.  The protected entrance wire $M$ is a cut separating the external input $Q$ from every accessible output and surviving controller record.  All output wires of an upstream input encoding are routed into $M$ or discarded; no input-dependent quantum memory or classical record crosses the cut on a side wire.  Replacing the slot entrance by $\mathcal R^M_{\sigma_M}$ erases every component of $M$.  All other systems supplied across the cut, including the prepared environment, are initially independent of $QR$.  The tester-held reference $R$ is excluded from the controller network.  In addition, for every deterministic realization $\mathcal N$ and controller supply $\eta_C$ independent of $QR$, the replacement network must obey
\begin{equation}
 \mathcal N\!\left[\mathcal R^M_{\sigma_M}\right]
 (\rho_{QR}\otimes\eta_C)
 =\omega_{\mathcal N,\sigma_M,\eta_C}\otimes\rho_R,
 \qquad \rho_R=\Tr_Q\rho_{QR},
 \label{eq:cut-factorization}
\end{equation}
on all accessible output systems and complete controller records, before terminal measurement and without conditioning on a selected record.  A continuous realization is admitted through a declared instrument limit of networks satisfying these conditions, with Eq.~\eqref{eq:cut-factorization} retained on all accessible event laws.
\end{definition}

\begin{lemma}[No-signalling form of the complete entrance cut]
\label{lem:entrance-nosignalling}
Let an upstream deterministic encoding act on $Q$ and a controller supply $A$, producing the complete entrance wire $M$ and discarded output $D$.  If the supply is initially independent of $QR$, then, for every joint input state $\rho_{QR}$,
\begin{equation}
 \Tr_{MD}\!\left[(\mathcal E_{QA\to MD}\otimes\mathrm{id}_{RB})
 \Pi_{QR,AB}(\rho_{QR}\otimes\eta_{AB})\Pi_{QR,AB}^\dagger\right]
 =\rho_R\otimes\eta_B,
 \qquad \rho_R=\Tr_Q\rho_{QR}.
 \label{eq:entrance-nosignalling}
\end{equation}
After replacing the complete $M$ by $\mathcal R^M_{\sigma_M}$, the state across the cut is therefore $\sigma_M\otimes\rho_R\otimes\eta_B$.  The statement remains valid when $A$ and $B$ are initially correlated, provided the joint supply $AB$ is independent of $QR$ and all input-bearing encoder outputs are routed into $M$ or discarded.
\end{lemma}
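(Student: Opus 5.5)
The identity \eqref{eq:entrance-nosignalling} follows from two facts. A trace-preserving map acting only on $QA$ leaves the reduced state on the untouched systems $RB$ unchanged. An independent product supply has a product marginal. The operator $\Pi_{QR,AB}$ only reorders tensor factors, so I will read it as a fixed permutation of registers into the order $QA\otimes RB$ on which $\mathcal E_{QA\to MD}\otimes\mathrm{id}_{RB}$ acts.

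\textbf{Step 1 (trace preservation on the encoder factor).} Since $\mathcal E$ is deterministic, it is CPTP, so $\Tr_{MD}\circ\mathcal E_{QA\to MD}=\Tr_{QA}$. Tensoring with $\mathrm{id}_{RB}$ gives, for every operator $X$ on $QARB$,
\[
\Tr_{MD}\bigl[(\mathcal E\otimes\mathrm{id}_{RB})X\bigr]=\Tr_{QA}X .
\]
I will check this first on product operators $X=Y_{QA}\otimes Z_{RB}$, where it reads $\Tr[\mathcal E(Y)]Z=\Tr(Y)Z$, and then extend by linearity.

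\textbf{Step 2 (marginal of the independent supply).} Applying Step 1 to $X=\Pi(\rho_{QR}\otimes\eta_{AB})\Pi^\dagger$, and using that the permutation commutes with partial traces after relabelling, gives
\[
\Tr_{QA}\bigl[\rho_{QR}\otimes\eta_{AB}\bigr]=\Tr_Q\rho_{QR}\otimes\Tr_A\eta_{AB}=\rho_R\otimes\eta_B .
\]
This already covers the case where $A$ and $B$ are correlated. The only hypothesis used is that the joint state is a product across the bipartition $QR\,|\,AB$; correlations inside $AB$ play no role. This settles the final sentence of the statement, with one caveat. If input-bearing encoder outputs were not all in $M\cup D$, some extra output would remain and the trace would not cover the full encoder output. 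Routing every output into $M$ or $D$ is exactly what makes $\Tr_{MD}$ the complete output trace in Step 1.

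\textbf{Step 3 (after replacement).} Replacing $M$ by $\mathcal R^M_{\sigma_M}$ and discarding $D$ gives the composite map $\sigma_M\Tr_{MD}(\cdot)$ on the encoder output. This is the $M$-wire analogue of Eq.~\eqref{eq:complete-entrance-erasure}. Applying it with $\mathrm{id}_{RB}$ and using Step 2 yields $\sigma_M\otimes\rho_R\otimes\eta_B$, which is the claimed state across the cut.

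I do not expect a substantive obstacle. The only delicate point is bookkeeping. One must make sure that $\Pi$ is only a register reordering, so it does not entangle the supply with the input. One must also make sure that the partial trace over $MD$ covers the entire encoder output. Both are guaranteed by the hypotheses, and I would state them explicitly as the places where the no-bypass clause of Definition~\ref{def:entrance-cut} enters.
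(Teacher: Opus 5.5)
Your proposal is correct and follows the paper's proof. You treat $\Pi_{QR,AB}$ as the wire permutation to $QA\otimes RB$ and use trace preservation of $\mathcal E$ on its complete output $MD$ to obtain the no-signalling identity; this covers correlated $AB$, because only the product structure across $QR\,|\,AB$ is used. You then apply the complete replacement identity on $M$ to get $\sigma_M\otimes\rho_R\otimes\eta_B$, which matches the paper step for step.
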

\begin{proof}
Here $\Pi_{QR,AB}$ is the canonical unitary wire permutation from the order $QRAB$ to $QARB$, so $\mathcal E$ acts on $QA$ and the identity acts on $RB$.  Trace preservation of $\mathcal E$ gives the no-signalling identity after tracing its complete output, including the discarded register $D$.  Applying it yields Eq.~\eqref{eq:entrance-nosignalling}.  The complete replacement identity Eq.~\eqref{eq:complete-entrance-erasure} then replaces the full $M$ marginal while leaving $R$ and $B$ unchanged.  This establishes Eq.~\eqref{eq:cut-factorization} for the encoder followed by any deterministic downstream channel on $M$ and independent supplies.  No conditioning on an encoder outcome is used, so the conclusion also covers instruments after their complete record is included and summed.
\end{proof}

This definition excludes an encoding that swaps $Q$ into an ancillary wire subsequently spared by the replacement map.  A downstream ancilla may store a state after it has traversed the entrance; allowing such storage is a choice of subarchitecture, and does not restore the input after complete entrance erasure.

Composition means \emph{slot substitution}: insert the one-slot network for $S_1$ into the sole process slot of the network for $S_2$.  This nests two transformations around one original cPT, rather than serially running two copies of that cPT.  Only type-compatible substitutions preserving the entrance cut and admissible instrument limits are included in the generated architecture.

For a chosen subarchitecture, let $S:a\to a'$ be a deterministic causal conversion obtained from these primitives and their admissible continuous limits.  ``Deterministic'' means normalized, without postselection; a normalized random mixture is deterministic in this sense.  A conversion is free when:
\begin{enumerate}
\item Its physical realization is completely positive and causal under the declared ancillary extensions and satisfies the entrance-cut condition.
\item It preserves the compatible ideal task, $S\Xi_a=\Xi_{a'}$.  Randomized primitives used here satisfy this trajectory by trajectory.
\item Its instrument pullback $S^*$ preserves deterministic completion, outcome coarse-graining, causal prefixes, ancillary extension, and continuous instrument limits, and $S^*\Inst_{a'}\subseteq\Inst_a$.  This is the structural conversion condition; for a task control $P:a\to\tau$, the separate requirement is $P^*\Inst_\tau\subseteq\Inst_a$.
\end{enumerate}
Free-set preservation is a consequence of the architecture, proved below.  Here ``free'' denotes preservation of the specified free family and ideal reference; a free conversion can have a nonzero implementation cost.  That cost is debited from the budget attached to the object.  Seed preparation defines the baseline supply, while conversion acts on an already supplied process.  In the baseline below, the idle replacement seed is an available supply; its preparation is not charged as pulse action used to control that supplied process.

\subsection{Implementation costs and budget compatibility}

Costs are assigned to a chosen physical implementation $\kappa$ of a conversion, including its sampling and record hardware; we write $\Cost(\kappa)$, or $\Cost(S)$ when the implementation is fixed by context.  Different implementations of the same operational map may have different costs, and no minimum-cost implementation is assumed to exist.  Scalar budgets, or componentwise ordered vector budgets, obey
\begin{equation}
 \Cost(\id)=0,
 \qquad
 \Cost(S_2\circ S_1)\le \Cost(S_1)+\Cost(S_2).
 \label{eq:cost-axioms}
\end{equation}
For randomization we use a worst-branch budget, augmented by the cost of the sampler and any retained label.  An expected-cost convention is a separate architecture and is not substituted for a pathwise cap.  Section~\ref{sec:protocol-costs} gives concrete subadditive costs.  Identity has zero implementation overhead.  Composition is permitted only with matching interfaces, and capability is evaluated on the feasible budget set in Eq.~\eqref{eq:feasible-budget-domain}.  Thus ``free'' refers to structural admissibility and task preservation, while the selected physical implementation can carry a positive budget debit.

The budget domain contains zero and is closed under addition and admissible componentwise subtraction.  Control compatibility is imposed explicitly: for every admitted $S:a\to a'$ and $P\in\mathfrak C_{a'\to\tau}(B')$,
\begin{equation}
 P\circ S\in\mathfrak C_{a\to\tau}(B'+\Cost(S)).
 \label{eq:control-compatibility}
\end{equation}
We call the resulting structure a \emph{priced task resource theory}: its objects are triples $(\Lambda,a,B)$ with $B\in\mathcal B_{\tau,a}^{\mathrm{feas}}$, free conversions preserve the declared entrance cut and ideal task, and a conversion of implementation cost $c$ maps the budget to $B-c$ whenever $B-c\in\mathcal B_{\tau,a'}^{\mathrm{feas}}$.  Thus ``free'' refers to structural admissibility and target preservation; it does not mean that the physical implementation has zero hardware cost.
For generated physical networks this follows from slot substitution, ideal transparency, the entrance cut, and the implementation cost rule.  A selected parametrized protocol family is not automatically closed and is used as a source of feasible controls within that generated architecture.

\subsection{Contraction and closure}

\begin{lemma}[Physical pullback contraction]
\label{lem:physical-pullback}
For a physical conversion with the above pullback structure,
\begin{equation}
 d_{a'}(S\Lambda,S\Lambda')\le d_a(\Lambda,\Lambda').
 \label{eq:physical-contraction}
\end{equation}
It is therefore continuous in the operational topology.
\end{lemma}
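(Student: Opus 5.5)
The plan is to reduce the contraction to a supremum over a smaller set of experiments, using the pullback duality $p_{S\Lambda}^{\mathcal J}=p_{\Lambda}^{S^*\mathcal J}$. The main work is justifying this identity.

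\emph{Step 1: the pullback identity.} For a finite causal realization of $S$, the one-slot network inserted into $\mathcal J\in\Inst_{a'}$ defines a composite instrument $S^*\mathcal J$ acting on the original process slot. It has the same record space as $\mathcal J$: any internal sampling labels of $S$ are summed out, or are carried along as the fixed label declared in the task interface. Slot substitution is associative, and the Born functional is multilinear in the operation slots. Together these give
\[
 p_{S\Lambda}^{\mathcal J}(A)=\Lambda[S^*\mathcal J](A)
\]
for every measurable record event $A$.

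\emph{Step 2: extension to continuous instruments.} For continuous instruments, I would obtain the same identity by passing to the declared instrument limit. Condition~3 of the free-conversion definition guarantees that $S^*$ preserves the admissible continuous limits, and the Born functional is continuous in the completion topology. Both sides of the identity are therefore limits of equal quantities. Countable additivity then extends the equality from generating events to the whole record $\sigma$-algebra.

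\emph{Step 3: the supremum bound.} Condition~3 also gives $S^*\Inst_{a'}\subseteq\Inst_a$. Hence for every $\mathcal J\in\Inst_{a'}$,
\[
 \TV\!\left(p_{S\Lambda}^{\mathcal J},p_{S\Lambda'}^{\mathcal J}\right)
 =\TV\!\left(p_{\Lambda}^{S^*\mathcal J},p_{\Lambda'}^{S^*\mathcal J}\right)
 \le d_a(\Lambda,\Lambda').
\]
Taking the supremum over $\mathcal J\in\Inst_{a'}$ yields Eq.~\eqref{eq:physical-contraction}. Because the bound is stated on the operational quotient, it also shows that $S$ is well defined there: if $d_a(\Lambda,\Lambda')=0$, then $d_{a'}(S\Lambda,S\Lambda')=0$.

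\emph{Step 4: continuity.} Continuity follows at once, since $S$ is $1$-Lipschitz from $(\CPT_a,d_a)$ to $(\CPT_{a'},d_{a'})$.

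\emph{Main obstacle.} I expect the hardest part to be the measure-theoretic content of Step~1 for continuous records. Two points need care. First, the pulled-back instrument must have feedback maps adapted to the original filtration, which is where causality of $S$ and preservation of causal prefixes enter. Second, the relabeling of records must be process independent, so that both $\Lambda$ and $\Lambda'$ are tested on the same measurable space. I would handle both by establishing the identity on finite-slot approximants, where it is just the comb Born rule, and then invoking the declared completion. For that limit step, total-variation bounds pass to limits because each event probability converges.
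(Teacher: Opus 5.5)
Your proposal is correct and follows the paper's argument: each output tester's law equals that of the pulled-back tester $S^*\mathcal J\in\Inst_a$, and the supremum over this subset of $\Inst_a$ is at most $d_a(\Lambda,\Lambda')$. The paper treats $p_{S\Lambda}^{\mathcal J}=p_\Lambda^{S^*\mathcal J}$ as part of the assumed pullback structure, so it does not re-derive it via finite approximants. It also adds that the same construction at every ancillary extension and causal prefix makes $S\Lambda$ a positive, normalized, causal cPT.
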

\begin{proof}
For each output tester, the two output laws equal the laws of the same pulled-back input tester.  Taking the supremum over the pullback subset of $\Inst_a$ proves the inequality.  The same construction, at every ancillary extension and causal prefix, preserves positivity, normalization, and causality.  Compatibility with continuous limits makes it a conversion on the declared continuum space.
\end{proof}

\begin{proposition}[Identity and composition closure]
\label{prop:closure}
The identity conversion is free.  If $S_1:a_0\to a_1$ and $S_2:a_1\to a_2$ are free conversions, then $S_2\circ S_1:a_0\to a_2$ is free and obeys Eq.~\eqref{eq:cost-axioms}.
\end{proposition}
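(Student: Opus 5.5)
The plan is to check the three defining clauses of a free conversion, first for the identity and then for the slot-substitution composite $S_2\circ S_1$, and to read off the cost statement from Eq.~\eqref{eq:cost-axioms}.

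\emph{Identity.} Its realization is the bare wire, which is trivially completely positive and causal. It satisfies Definition~\ref{def:entrance-cut}: no side wire crosses $M$, and the replacement network is the seed itself, which obeys Eq.~\eqref{eq:cut-factorization} by Eq.~\eqref{eq:complete-entrance-erasure}. It maps $\Xi_a$ to $\Xi_a$. Its pullback is the identity on $\Inst_a$, so it preserves every listed operation. Its cost is $0$ by Eq.~\eqref{eq:cost-axioms}.

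\emph{Composition.} Insert the network of $S_1$ into the slot of $S_2$.

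(i) Complete positivity and causality follow because slot substitution of CP causal one-slot networks is again such a network. Unfolding the feedback of both gives a DAG, since the inner network is causally nested before the outer slot's exit.

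(ii) For the entrance cut, the cut $M$ of the composite is the cut of the innermost slot. I will argue that no input-bearing side wire appears. $S_1$ has none by hypothesis. $S_2$'s side wires are independent of everything entering its slot, and the slot's output is $S_1\Lambda$, so $S_2$ sees input dependence only through the main output of $S_1$. For factorization, replace the entrance by $\mathcal R^M_{\sigma_M}$. By Eq.~\eqref{eq:cut-factorization} for $S_1$, its outputs and records have the form $\omega_1\otimes\rho_R$. These, together with $S_2$'s independent supply, then feed $S_2$'s network, which acts only on non-$R$ systems. Hence the output is $(\mathcal N_2\otimes\mathrm{id}_R)(\omega_1\otimes\eta_{C_2})\otimes\rho_R$, which is Eq.~\eqref{eq:cut-factorization} for the composite. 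Lemma~\ref{lem:entrance-nosignalling} covers the case where an encoder in $S_1$ draws on supplies correlated with $S_2$'s supplies.

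(iii) Task preservation is $(S_2\circ S_1)\Xi_{a_0}=S_2\Xi_{a_1}=\Xi_{a_2}$. For randomized primitives it holds trajectory by trajectory, since each pair of branches preserves it.

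(iv) For the pullback, use $(S_2\circ S_1)^*=S_1^*S_2^*$. This gives $S_1^*S_2^*\Inst_{a_2}\subseteq S_1^*\Inst_{a_1}\subseteq\Inst_{a_0}$, and each structural property (completion, coarse-graining, prefixes, ancillary extension, limits) is preserved by a composite of maps that each preserve it.

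(v) The cost inequality is the composition axiom in Eq.~\eqref{eq:cost-axioms}, applied to the implementation obtained by concatenating the two chosen implementations. For randomization, the worst-branch cap of the composite is bounded by the sum of worst-branch caps plus the sampler costs.

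\emph{Main obstacle.} I expect continuous limits to be the delicate step: showing that the composite of two admissible instrument limits is an admissible limit, and that Eq.~\eqref{eq:cut-factorization} survives. I would approximate both by finite networks satisfying the cut and compose these approximants. The composites satisfy the factorization by the finite argument above. I would then pass to the limit using joint continuity of slot substitution, which I expect to follow from Lemma~\ref{lem:physical-pullback} giving uniform contraction of $S_2$ together with continuity of $S_1^*$. Event-law equality is then preserved in the limit exactly as in the closure argument for $\Free_a$ around Eq.~\eqref{eq:input-erasing-limit}.
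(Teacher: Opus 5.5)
Your verification of the finite-network clauses is essentially the paper's proof. It has the same pieces: the identity with its identity pullback, and a routing argument that every input-bearing path in the unfolded composite crosses the single innermost entrance (a side route would violate the cut of the constituent containing it). It also composes target preservation, uses the reverse-order pullback $S_1^*S_2^*$, and applies the cost axiom to the declared physical composition.

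The factorization step is organized differently, and your version needs more care than you give it. You chain the constituent factorizations. But $S_2$'s side wires are only required to carry no dependence on $QR$. They need not be ``independent of everything entering its slot'': they can be correlated with the supply $A_2$ that $S_2$'s encoder mixes into $M_2$. So when you apply Eq.~\eqref{eq:cut-factorization} for $S_1$, you must absorb those wires into the reference and then invoke no-signalling for $S_2$'s encoder. Moreover, the hypothesis of $S_1$'s factorization fails outright when $S_1$'s supply is correlated with $S_2$'s, which is the case you assign to Lemma~\ref{lem:entrance-nosignalling}. The paper avoids chaining altogether. Routing shows that the composite upstream encoding $QA_2A_1\to M_1D_2D_1$ sends every input-bearing output into $M_1$ or discards it. Lemma~\ref{lem:entrance-nosignalling} is then applied once to that composite encoder, giving $\sigma_{M_1}\otimes\rho_R\otimes\eta_B$ for all retained supplies $B$ of both conversions at once. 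Subsequent deterministic channels preserve this factorization.

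The step you single out as the main obstacle is not handled by joint continuity in the paper. The generated architecture admits only type-compatible substitutions that preserve the entrance cut and admissible instrument limits. Definition~\ref{def:entrance-cut} requires Eq.~\eqref{eq:cut-factorization} only on accessible event laws, which declared limits retain, so no claim about limits of microscopic networks is needed. Your proposed justification would also not close from the ingredients you cite. Lemma~\ref{lem:physical-pullback} gives contraction in the process argument for a fixed conversion. Hence the discrepancy between $S_2^{(n)}S_1^{(n)}\Lambda$ and $S_2^{(n)}S_1\Lambda$ is controlled only if $S_1^{(n)}\Lambda\to S_1\Lambda$ in $d_{a_1}$, uniformly over experiments, because the pulled-back experiment $S_2^{(n)*}\mathcal J$ changes with $n$. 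The approximants, however, converge in the instrument-completion topology, which the paper explicitly distinguishes from $d_a$-convergence. That route would need an extra uniformity hypothesis; the intended argument is the architectural stipulation.
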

\begin{proof}
Identity preserves the entrance wire, reference, and ideal task and has the identity pullback.  Slot substitution leaves exactly one original process slot.  In the unfolded composite directed acyclic network, the routing clauses in the constituent cuts imply that every path carrying input dependence to an accessible output or surviving record crosses the single substituted complete entrance; a side route would contradict the cut condition in the constituent containing that route.  Replacing that entrance erases all its systems.  The no-signalling lemma handles an upstream deterministic encoder, and subsequent deterministic channels preserve the factorization in Eq.~\eqref{eq:cut-factorization}.  Thus the lemma supports the state factorization needed for closure, while the directed-graph routing argument establishes that the nested network still has a complete cut; neither step alone supplies both claims.  The entrance-erasure factorization also holds for correlated controller supplies whose joint state is initially independent of $QR$.  Such a correlated implementation is specified as one joint network; it need not equal the composition of the separately averaged maps.  Target preservation composes as $S_2S_1\Xi_{a_0}=\Xi_{a_2}$.  Pullbacks compose in reverse order, preserving ancillary extensions, outcomes, prefixes, and instrument limits.  Independent randomization uses the product law, pushed forward to the nested network.  Hardware costs obey Eq.~\eqref{eq:cost-axioms} for the declared physical composition.  Admissible operational limits inherit these event-law properties; no claim about an arbitrary limit of microscopic network graphs is required.
\end{proof}

\begin{proposition}[Free objects remain free]
\label{prop:free-preserve}
Every free conversion maps the closure in Eq.~\eqref{eq:free-set} into the corresponding output closure.
\end{proposition}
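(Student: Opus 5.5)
The plan is to prove the statement in two stages: first on the finite generator networks $\operatorname{Gen}_a(\{\mathcal R_\sigma\},\mathsf G)$, then on their $d_a$-closure, using the contraction of Lemma~\ref{lem:physical-pullback} to pass to limits.

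For the first stage, let $\Lambda$ be a finite network built from an entrance-erasing seed $\mathcal R^M_{\sigma_M}$ followed by generators in $\mathsf G$, and let $S:a\to a'$ be a free conversion. Inserting $\Lambda$ into the single process slot of $S$'s realization gives, by slot substitution, one finite causal network with one original entrance. Definition~\ref{def:entrance-cut} says that in $S$'s unfolded network the entrance $M$ separates $Q$ from every accessible output and every surviving record, with no input-bearing side wire. The composite therefore has the form: an upstream encoding routed entirely into $M$ or discarded, then the complete erasure $\mathcal R^M_{\sigma_M}$, then everything downstream. The downstream part consists of the seed's prepared evolution, $S$'s downstream primitives, ancillas independent of $QR$, classical mixing, and interface restrictions.

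I then rewrite the composite as a new element of $\operatorname{Gen}_{a'}$. By Lemma~\ref{lem:entrance-nosignalling}, the state across the cut is $\sigma_M\otimes\rho_R\otimes\eta_B$, and this holds even with correlated controller supplies independent of $QR$. So everything upstream of the erasure can be absorbed: the composite equals a seed $\mathcal R^{M'}_{\sigma'}$ on the output-type entrance, followed by an input-independent prepared evolution that includes $\sigma_M$, $\eta_B$, and all downstream maps. Since the paper admits arbitrary admissible prepared evolutions after erasure into $\mathsf G$, this is a generator network at interface $a'$. It also satisfies Eq.~\eqref{eq:cut-factorization}, and condition~3 of freeness makes its output laws well defined on $\Inst_{a'}$. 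If $S$ changes entrance type, the type-changing encoder is declared and its seed erases its complete output, so the same argument applies.

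For the second stage, let $\Lambda\in\Free_a$ and pick $\Lambda_n\in\operatorname{Gen}_a$ with $d_a(\Lambda_n,\Lambda)\to0$. Lemma~\ref{lem:physical-pullback} gives $d_{a'}(S\Lambda_n,S\Lambda)\le d_a(\Lambda_n,\Lambda)\to0$. Each $S\Lambda_n$ lies in $\operatorname{Gen}_{a'}$ by the first stage, so $S\Lambda\in\operatorname{cl}_{d_{a'}}\operatorname{Gen}_{a'}=\Free_{a'}$. The continuous-record case goes through the same way, because the contraction holds on the declared event laws and the closure is taken in that topology.

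The main obstacle is the absorption step in the first stage: identifying the composite with a \emph{generator} network, not merely with some input-erasing process. The factorization in Eq.~\eqref{eq:cut-factorization} only gives operational equivalence. So I would argue at the level of the quotient by event laws on $\Inst_{a'}$. There, any two processes whose laws agree with those of a seed-plus-prepared-evolution realization are identified, and the explicit network described above is such a realization. Continuous realizations of $S$ enter only as declared instrument limits. These are handled by the same $d_{a'}$-closure argument, using condition~3 of freeness to ensure the limits commute with pullback.
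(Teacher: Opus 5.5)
Your finite-network stage and your closure stage follow the paper's own argument. Complete entrance erasure, Lemma~\ref{lem:entrance-nosignalling}, and joint independence of the downstream supplies turn $S$ acting on a seed into a prepared evolution after complete erasure, read in the event-law quotient. Lemma~\ref{lem:physical-pullback} then carries the inclusion from $\operatorname{Gen}_a$ to its $d_a$-closure.

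The gap is your treatment of continuous conversions. You approximate a continuous $S$ by finite networks $S_m$ and place $S\Lambda_n$ in $\operatorname{cl}_{d_{a'}}\operatorname{Gen}_{a'}$ as a limit of the $S_m\Lambda_n$. The declared continuous limits, however, are taken in the instrument completion topology. Together with continuity of the Born functional, that topology yields at most convergence of $p_{\Lambda_n}^{S_m^*\mathcal J'}$ to $p_{\Lambda_n}^{S^*\mathcal J'}$ one experiment at a time. Membership in the $d_{a'}$-closure requires convergence uniform over all of $\Inst_{a'}$. The paper states explicitly that continuity in the completion topology does not give density in $d_a$. Condition~3 of freeness does not close this. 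It says that the pullback of one fixed $S$ commutes with limits of experiments $\mathcal J'_n\to\mathcal J'$. It says nothing about $S_m^*\to S^*$ uniformly in the tester. So the step ``handled by the same $d_{a'}$-closure argument'' is unjustified as stated.

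The missing idea is that no limit over approximants of $S$ is needed. Definition~\ref{def:entrance-cut} admits a continuous realization only with Eq.~\eqref{eq:cut-factorization} retained on all accessible event laws. Hence $S$ applied directly to a generated seed already has the entrance-erased form. The seed family includes arbitrary admissible prepared evolutions after complete erasure, so $S\Lambda_n$ is, in the event-law quotient, an element of $\operatorname{Gen}_{a'}$ by definition. This step uses admissibility of the continuous realization. That is how the paper treats the continuous case. Only afterwards does it apply Lemma~\ref{lem:physical-pullback} for that fixed $S$ to pass to the closure, exactly as in your second stage.
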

\begin{proof}
Insert an entrance-erasing seed into the protected slot of a finite conversion network.  The upstream encoding produces some state $\chi_{MR}$, including every input-bearing encoded component in $M$.  Equation~\eqref{eq:complete-entrance-erasure}, or equivalently Lemma~\ref{lem:entrance-nosignalling} followed by the replacement, sends it to $\sigma_M\otimes\rho_R$.  All remaining downstream supplies are independent of $QR$ as a joint supply, even if they are internally correlated.  A complete downstream channel acting on those supplies and $M$, while leaving $R$ untouched, therefore produces $\omega_{\mathcal L}\otimes\rho_R$.  Including all classical records in this channel proves the statement for adapted feedback without conditioning on a selected branch.  The resulting prepared dynamics belong to the generated free family at $a'$.  Finite free preparations and their mixtures obey the same argument.
For an admissible continuous conversion acting on a seed, the declared
instrument-limit compatibility retains this erasure factorization.
The output is an admissible prepared evolution after complete erasure,
which belongs to the seed family by its definition.
This step uses admissibility of the continuous realization.
For that fixed conversion, Lemma~\ref{lem:physical-pullback} then extends
the inclusion from generated free inputs to their $d_a$ closure.
At restricted interfaces the factorization means its accessible event-law equivalence class; at state-tomographic interfaces it is also an equality of accessible output states.
\end{proof}

\subsection{The task-capability monotone}

Fix the generated control architecture $\mathfrak C$ before defining capability; it is background data in every $E$ and $R$ below.  For a process $\Lambda\in\CPT_a$ and $B\in\mathcal B_{\tau,a}^{\mathrm{feas}}$, define the optimal task error, as an infimum at budget $B$, by
\begin{equation}
 E_{\tau,a,B}(\Lambda):=
 \inf_{\substack{P\in\mathfrak C_{a\to\tau}(B)}}
                    \eps_\tau(P\Lambda).
 \label{eq:budgeted-error}
\end{equation}
Smaller $E$ means greater task capability.  A resource-monotone convention with larger values corresponding to more task resource is
\begin{equation}
 R_{\tau,a,B}(\Lambda):=
 \left[\eps_{\tau,a,B}^{\mathrm F}-E_{\tau,a,B}(\Lambda)\right]_+.
 \label{eq:resource-monotone}
\end{equation}
For the same interface and budget, $R=0$ on every free object by Eq.~\eqref{eq:free-baseline}.  The task error $E$ remains the sole primary performance quantity.  The calibrated $R$ measures advantage over the best free performance; it need not be faithful, since a nonfree process can have no advantage for the chosen task.  We use infima throughout and assert attainment only when an explicit optimizer is supplied.  Convexity of the unoptimized error $\eps_\tau$ does not imply convexity of $E$, which also contains an infimum over controls; no such convexity is used here.

The budgeted objects are triples $(\Lambda,a,B)$, with free triples $\Lambda\in\Free_a$.  A conversion of cost $c$ maps such an object to $(S\Lambda,a',B-c)$; discarding unused budget is also permitted.  Identity acts at every $B$.  Two successive conversions costing $c_1,c_2$ leave $B-c_1-c_2$; their composite costs at most $c_1+c_2$, so any surplus is discarded to give the same output triple.  Thus the induced reachability relation is reflexive and transitive.  Equation~\eqref{eq:control-compatibility} supplies the associated control inclusion.  At fixed $B$, a priced conversion need not satisfy the same-budget inequality $R(S\Lambda)\le R(\Lambda)$, since that comparison supplies a new budget after paying for $S$.  For transmission the baseline is constant; nested control sets also give $E_{\tau,a,B_1}\ge E_{\tau,a,B_2}$ for $B_1\le B_2$, so budget discarding cannot increase $R$.  The triple order therefore has a resource monotone in the usual nonincreasing convention \cite{chitambar2019,berk2021}.

\begin{lemma}[Operational stability of optimized capability]
\label{lem:capability-stability}
For one architecture, task, and budget, if every admitted control $P\in\mathfrak C_{a\to\tau}(B)$ satisfies
\begin{equation*}
 P^*\Inst_\tau\subseteq\Inst_a,
\end{equation*}
then
\begin{equation}
 |E_{\tau,a,B}(\Lambda)-E_{\tau,a,B}(\Lambda')|
 \le d_a(\Lambda,\Lambda').
 \label{eq:capability-lipschitz}
\end{equation}
The same bound holds for $R$ with a common baseline.
\end{lemma}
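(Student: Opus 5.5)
The plan is a standard Lipschitz argument: first bound the unoptimized error for each fixed control, then pass to the infimum over controls, and finally transfer the bound to $R$.

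\emph{Fixed control.} Fix $P\in\mathfrak C_{a\to\tau}(B)$ and a task tester $\mathcal J\in\Inst_\tau$. The output law $p^{\mathcal J}_{P\Lambda}$ equals the law of the pulled-back input experiment $P^*\mathcal J$ on $\Lambda$. This is the same identity used in the proof of Lemma~\ref{lem:physical-pullback}, here applied to a task control rather than a structural conversion. The hypothesis $P^*\Inst_\tau\subseteq\Inst_a$ guarantees $P^*\mathcal J\in\Inst_a$. The ideal law $p^{\mathcal J}_{\Xi_\tau}$ does not depend on $\Lambda$. The triangle inequality for $\TV$ then gives
\begin{equation*}
 \TV\!\left(p^{\mathcal J}_{P\Lambda},p^{\mathcal J}_{\Xi_\tau}\right)
 \le \TV\!\left(p^{P^*\mathcal J}_{\Lambda},p^{P^*\mathcal J}_{\Lambda'}\right)
 +\TV\!\left(p^{\mathcal J}_{P\Lambda'},p^{\mathcal J}_{\Xi_\tau}\right).
\end{equation*}
By Eq.~\eqref{eq:operational-distance}, the first term on the right is at most $d_a(\Lambda,\Lambda')$. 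Taking the supremum over $\mathcal J\in\Inst_\tau$ yields $\eps_\tau(P\Lambda)\le\eps_\tau(P\Lambda')+d_a(\Lambda,\Lambda')$. Exchanging the roles of $\Lambda$ and $\Lambda'$ gives $|\eps_\tau(P\Lambda)-\eps_\tau(P\Lambda')|\le d_a(\Lambda,\Lambda')$ for every admitted $P$.

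\emph{Infimum over controls.} Both processes are optimized over the same control set $\mathfrak C_{a\to\tau}(B)$, because architecture, task and budget are fixed. For every admitted $P$ we have $E_{\tau,a,B}(\Lambda)\le\eps_\tau(P\Lambda)\le\eps_\tau(P\Lambda')+d_a(\Lambda,\Lambda')$. Taking the infimum over $P$ on the right gives $E(\Lambda)\le E(\Lambda')+d_a(\Lambda,\Lambda')$, and symmetry finishes Eq.~\eqref{eq:capability-lipschitz}. No attainment is needed, since only infima appear. All quantities lie in $[0,1]$, so no infinite values arise. The feasibility assumption $B\in\mathcal B^{\rm feas}_{\tau,a}$ ensures the infimum is over a nonempty set.

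\emph{Transfer to $R$.} With the common baseline $\eps^{\rm F}_{\tau,a,B}$, which is independent of $\Lambda$, the map $x\mapsto[\eps^{\rm F}-x]_+$ is $1$-Lipschitz. Therefore $|R(\Lambda)-R(\Lambda')|\le|E(\Lambda)-E(\Lambda')|$.

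The main obstacle is interface bookkeeping in the pullback identity, specifically that $p^{\mathcal J}_{P\Lambda}=p^{P^*\mathcal J}_\Lambda$ holds on the common complete record space. Any retained control labels or fixed record relabelings must be absorbed into $P^*\mathcal J$, as required by the task-interface convention, so that the same event space is compared on both sides. If $d_a$ is taken on the operational quotient, I will also check that the pulled-back laws depend only on the equivalence classes of $\Lambda$ and $\Lambda'$, which holds because $P^*\mathcal J\in\Inst_a$.
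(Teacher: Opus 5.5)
Your proposal is correct and follows essentially the same route as the paper. You derive a per-control bound from the TV triangle inequality and the pullback identity $p^{\mathcal J}_{P\Lambda}=p^{P^*\mathcal J}_{\Lambda}$ with $P^*\mathcal J\in\Inst_a$, infimize the two one-sided inequalities without assuming an optimizer, and then use the $1$-Lipschitz property of $x\mapsto[b-x]_+$ for $R$, writing out steps that the paper's proof compresses.
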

\begin{proof}
The TV triangle inequality and pullback contraction bound the difference of the two task errors for every fixed control by $d_a(\Lambda,\Lambda')$.  Infimizing the two uniform inequalities gives Eq.~\eqref{eq:capability-lipschitz} without assuming an optimizer.  The map $x\mapsto[b-x]_+$ is 1-Lipschitz for fixed $b$.
\end{proof}

\begin{theorem}[Budgeted monotonicity]
\label{thm:monotonicity}
Let $S:a\to a'$ be a free conversion with cost $c=\Cost(S)$ and suppose the generated control architecture satisfies Eq.~\eqref{eq:control-compatibility}.  For $B\ge c$ with $B-c\in\mathcal B_{\tau,a'}^{\mathrm{feas}}$ (which implies $B\in\mathcal B_{\tau,a}^{\mathrm{feas}}$ by control compatibility),
\begin{equation}
 E_{\tau,a',B-c}(S\Lambda)\ge E_{\tau,a,B}(\Lambda).
 \label{eq:error-monotonicity}
\end{equation}
If the free baseline is preserved by the interface change, $\eps_{\tau,a',B-c}^{\mathrm F}=\eps_{\tau,a,B}^{\mathrm F}$, then
\begin{equation}
 R_{\tau,a',B-c}(S\Lambda)\le R_{\tau,a,B}(\Lambda).
 \label{eq:R-monotonicity}
\end{equation}
\end{theorem}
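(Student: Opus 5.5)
The plan is to reduce Eq.~\eqref{eq:error-monotonicity} to an inclusion of control sets via Eq.~\eqref{eq:control-compatibility}, and then read off Eq.~\eqref{eq:R-monotonicity} from monotonicity of $x\mapsto[b-x]_+$. Every control that acts on the converted object $S\Lambda$ at the reduced budget can be pulled back to a control acting on $\Lambda$ at the full budget, and this pulled-back control yields exactly the same output process. The optimization defining $E_{\tau,a,B}(\Lambda)$ therefore ranges over a set containing every value $\eps_\tau(P\,S\Lambda)$ appearing in the optimization for $E_{\tau,a',B-c}(S\Lambda)$.

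The first step is to fix an arbitrary $P\in\mathfrak C_{a'\to\tau}(B-c)$, which is nonempty because $B-c\in\mathcal B^{\rm feas}_{\tau,a'}$. Compatibility then gives $P\circ S\in\mathfrak C_{a\to\tau}(B-c+c)=\mathfrak C_{a\to\tau}(B)$, and it also shows $B\in\mathcal B^{\rm feas}_{\tau,a}$, so $E_{\tau,a,B}(\Lambda)$ is defined. The second step is to identify $(P\circ S)\Lambda$ with $P(S\Lambda)$. Composition is slot substitution around the single original process slot, by Proposition~\ref{prop:closure}, so both expressions give the same Born functional on every experiment in $\Inst_\tau$. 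At the pullback level this reads $(P\circ S)^*=S^*P^*$. It follows that $\eps_\tau(P\,S\Lambda)=\eps_\tau((P\circ S)\Lambda)\ge E_{\tau,a,B}(\Lambda)$. Taking the infimum over $P$ yields Eq.~\eqref{eq:error-monotonicity}, and no optimizer is needed. If a budget-discarding step or the randomization convention requires it, I would use the nestedness $\mathfrak C(B_1)\subseteq\mathfrak C(B_2)$ for $B_1\le B_2$, which follows from the definition as costs at most $B$.

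For the second claim, set $b:=\eps^{\rm F}_{\tau,a',B-c}=\eps^{\rm F}_{\tau,a,B}$. Because $x\mapsto[b-x]_+$ is nonincreasing, Eq.~\eqref{eq:error-monotonicity} gives $R_{\tau,a',B-c}(S\Lambda)=[b-E_{\tau,a',B-c}(S\Lambda)]_+\le[b-E_{\tau,a,B}(\Lambda)]_+=R_{\tau,a,B}(\Lambda)$.

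The main obstacle is the identification in the second step. It requires $(P\circ S)\Lambda=P(S\Lambda)$ as an operational equality on the task interface, including correlated controller supplies and continuous limits. I would rely on the nested-network statement in Proposition~\ref{prop:closure} and the pullback composition rule. One more point must be checked. Eq.~\eqref{eq:control-compatibility} is stated for the budget $B'+\Cost(S)$, whereas the hypothesis concerns $B$ with $B\ge c$. Taking $B'=B-c$ in the admissible subtraction of the budget domain gives exactly $B$, so no slack argument is needed.
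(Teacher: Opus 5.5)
Your proposal is correct and follows the paper's proof: for each $P\in\mathfrak C_{a'\to\tau}(B-c)$ you use control compatibility to place $P\circ S$ in $\mathfrak C_{a\to\tau}(B)$, bound $E_{\tau,a,B}(\Lambda)\le\eps_\tau(PS\Lambda)$, take the infimum over $P$, and then pass to $R$ using that $x\mapsto[b-x]_+$ is nonincreasing with the common baseline $b$. Your explicit identification $(P\circ S)\Lambda=P(S\Lambda)$, via slot substitution and reverse-order pullback composition, is a step the paper's proof uses without stating it.
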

\begin{proof}
Take any admissible output control $P:a'\to a_\tau$ with $\Cost(P)\le B-c$.  Network closure, target preservation, and cost subadditivity make $P\circ S$ admissible at the input interface, with the same fixed output tests and cost at most $B$.  Therefore
\begin{equation}
 E_{\tau,a,B}(\Lambda)
 \le \eps_\tau(P S\Lambda).
 \end{equation}
Taking the infimum over $P$ gives Eq.~\eqref{eq:error-monotonicity}.  Substituting into Eq.~\eqref{eq:resource-monotone} yields Eq.~\eqref{eq:R-monotonicity} when the baseline is unchanged.
\end{proof}

Baseline equality in this theorem is substantive for a general task.  If it is not available, Eq.~\eqref{eq:error-monotonicity} remains the justified performance ordering; subtracting two moving baselines does not establish a resource monotone.  The following result supplies a common baseline for a concrete closed subtheory.

\begin{proposition}[Exact free transmission baseline]
\label{prop:replacement-baseline}
Let $\tau$ be transmission of an arbitrary $d$-dimensional input, with ideal endpoint channel $\mathrm{id}_d$, arbitrary reference-assisted input states, and arbitrary terminal POVMs.  Suppose each access layer admits the idle maximally mixed replacement seed and a zero-cost passive restriction to this task.  Under the entrance-cut architecture,
\begin{equation}
 \eps_{\tau,a,B}^{\mathrm F}=1-\frac1{d^2}
 \label{eq:exact-free-baseline}
\end{equation}
for every such $a$ and admissible $B$.  Thus $R_{\tau,a,B}=[1-d^{-2}-E_{\tau,a,B}]_+$ is monotone along all budgeted free conversions in this transmission subtheory.
\end{proposition}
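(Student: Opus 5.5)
The plan is to prove the two inequalities $\eps^{\mathrm F}_{\tau,a,B}\ge 1-d^{-2}$ and $\eps^{\mathrm F}_{\tau,a,B}\le 1-d^{-2}$ separately, and then obtain the monotonicity statement from Theorem~\ref{thm:monotonicity}.

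\emph{Lower bound.} Fix any $\Lambda\in\Free_a$ and any control $P\in\mathfrak C_{a\to\tau}(B)$. By Proposition~\ref{prop:free-preserve}, together with the entrance-cut factorization Eq.~\eqref{eq:cut-factorization} that $P$ obeys, the composite $P\Lambda$ is again entrance-erasing at the task interface. Concretely, for every reference-assisted input $\rho_{QR}$, the output state before terminal readout is $\omega\otimes\rho_R$. Here $\omega$ includes every controller record and is independent of $\rho_{QR}$. For finite generator networks this follows from Eq.~\eqref{eq:input-erasing}; for closure points it follows from Eq.~\eqref{eq:input-erasing-limit}. In both cases the factorization passes to the accessible event laws.

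Now take the task test with $\rho_{QR}=\Phi^+_{QR}$ maximally entangled and the two-outcome POVM $\{\Phi^+,\id-\Phi^+\}$ on the output system and $R$, with all controller records summed. The ideal identity task gives outcome $\Phi^+$ with probability $1$. The actual process gives probability $\Tr[\Phi^+(\omega_S\otimes \id/d)]=d^{-1}\langle\Phi^+|\omega_S\otimes\id|\Phi^+\rangle=d^{-2}\Tr\omega_S=d^{-2}$. The TV distance of the two outcome laws is therefore $1-d^{-2}$. Coarse-graining the full record only lowers TV, so $\eps_\tau(P\Lambda)\ge 1-d^{-2}$ for the complete record. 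Taking the infimum over $\Lambda$ and $P$ gives the lower bound.

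\emph{Upper bound.} Use the idle maximally mixed seed $\Lambda_0$ together with the zero-cost passive restriction $P_0$. The latter lies in $\mathfrak C_{a\to\tau}(B)$ for every admissible $B$, because costs are nonnegative and the control sets are upward closed. The composite $P_0\Lambda_0$ outputs $\tfrac{\id}{d}\otimes\rho_R$ for every input $\rho_{QR}$. We must show $\sup\TV\le 1-d^{-2}$ over all inputs $\rho_{QR}$ (with arbitrary $R$) and all POVMs $\{F_x\}$. The TV distance is $\max_{0\le F\le\id}\Tr[F(\rho_{QR}-\tfrac{\id}{d}\otimes\rho_R)]$. The operator inequality $\tfrac{\id}{d}\otimes\rho_R\ge d^{-2}\rho_{QR}$ then gives
\[
 \Tr\bigl[F(\rho_{QR}-\tfrac{\id}{d}\otimes\rho_R)\bigr]\le (1-d^{-2})\Tr[F\rho_{QR}]\le 1-d^{-2}.
\]
This operator inequality is the standard fact that the completely depolarizing channel dominates $d^{-2}\,\mathrm{id}$ completely positively: $\mathcal D(X)=d^{-2}\sum_{j}W_jXW_j^\dagger$ over the $d^2$ Weyl operators, with the $j=0$ term equal to $d^{-2}X$. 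Hidden controller records can be marginalized, since the idle seed carries none correlated with the input. Combining the two bounds yields Eq.~\eqref{eq:exact-free-baseline}.

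\emph{Monotonicity.} The baseline value $1-d^{-2}$ does not depend on $a$ or on $B$. Hence the baseline-equality hypothesis of Theorem~\ref{thm:monotonicity} holds along every budgeted free conversion between layers satisfying the assumptions, and Eq.~\eqref{eq:R-monotonicity} gives the final claim.

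The main obstacle is the lower-bound step. It must be shown that the composite $P\Lambda$, and in particular closure points and continuous-record controllers, truly factorize as $\omega\otimes\rho_R$ on the complete record, including records correlated with $\omega$. To handle this, I would first apply the factorization on finite networks. I would then note that the test event above is a single event whose probability converges under $d_a$ by Eq.~\eqref{eq:event-bound}, so the value $d^{-2}$ persists in the limit.
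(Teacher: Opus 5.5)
Your proposal is correct and follows the paper's proof. The lower bound comes from the entrance-cut factorization $\omega\otimes\rho_R$ and the Bell event on $|\Omega_d\rangle$, and monotonicity follows from Theorem~\ref{thm:monotonicity} because the baseline does not depend on $a$ or $B$. The only difference is in the upper bound: you read off the completely positive domination $\mathcal D_d\ge d^{-2}\,\mathrm{id}_d$ from the Weyl decomposition and bound the positive part of the difference directly, whereas the paper applies the triangle inequality to the same Weyl mixture and bounds each nonidentity term by $1$; both give $1-d^{-2}$.
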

\begin{proof}
Because every control in $\mathfrak C_{a\to\tau}(B)$ satisfies the entrance-cut/no-bypass condition, applying it to a free seed leaves the protected input inaccessible and produces an input-independent output state (with any downstream memory and complete record included in $\omega_{\mathcal L}$).  In the endpoint transmission task this is a replacement channel $\mathcal R_\sigma$ after the record is discarded.  Use the input
\begin{equation}
 |\Omega_d\rangle=\frac1{\sqrt d}\sum_{j=1}^d|j j\rangle.
\end{equation}
The ideal output is $|\Omega_d\rangle\langle\Omega_d|$, whereas replacement gives $\sigma\otimes I_R/d$.  The Bell event has probabilities $1$ and
\begin{equation}
 \langle\Omega_d|\sigma\otimes I_R/d|\Omega_d\rangle=d^{-2}.
\end{equation}
It therefore witnesses error at least $1-d^{-2}$ for every $\sigma$.  Retaining an input-independent classical controller label does not weaken this lower bound.

For the upper bound, the completely depolarizing channel $\mathcal D_d=\mathcal R_{I/d}$ is the uniform mixture of the $d^2$ Weyl unitary channels $\mathcal W_{pq}$, with $\mathcal W_{00}=\mathrm{id}_d$.  Weyl averaging and the triangle inequality give
\begin{equation}
 \frac12\|\mathcal D_d-\mathrm{id}_d\|_\diamond
 \le\frac1{d^2}\sum_{(p,q)\ne(0,0)}
 \frac12\|\operatorname{Ad}_{W_{pq}}-\mathrm{id}_d\|_\diamond
 \le1-d^{-2}.
\end{equation}
The second inequality uses the universal diamond bound for each term; for the
nonidentity Weyl operators the bound is attained because they are traceless
unitaries.  The Bell lower bound above applies to $\mathcal D_d$ itself, so
the displayed upper bound is tight:
\begin{equation}
 \frac12\|\mathcal D_d-\mathrm{id}_d\|_\diamond=1-d^{-2}.
 \label{eq:depolarizing-baseline-tight}
\end{equation}
The declared idle seed and passive control attain this value without control overhead.  Here ``zero overhead'' refers to the control budget: a hardware model may assign a separate preparation cost to the baseline seed.  Optimization over all free objects and admissible controls therefore gives Eq.~\eqref{eq:exact-free-baseline}, independently of interface and remaining budget.  The resource inequality follows from Theorem~\ref{thm:monotonicity}.
\end{proof}

The identity--depolarizing distance in Eq.~\eqref{eq:depolarizing-baseline-tight}
is a standard channel identity.  Its role here is to calibrate one free
baseline across the declared temporal interfaces and residual budgets.

\section{Exact temporal-access separation}
\label{sec:coarse}

\subsection{Coarse-graining as a free restriction}

Let $a_f\succeq a_c$ have the physical embedding $\iota:\Inst_{a_c}\to\Inst_{a_f}$ from Sec.~\ref{sec:setting}.  In the fine implementation of a coarse experiment, hidden ports are filled with specified deterministic operations, usually passive propagation, and hidden record variables are marginalized.  The coarse-graining map $G_{f\to c}$ is defined by
\begin{equation}
 p_{G_{f\to c}\Lambda}^{\mathcal J_c}:=
 p_\Lambda^{\iota(\mathcal J_c)},
 \qquad \mathcal J_c\in\Inst_{a_c}.
 \label{eq:coarse-map}
\end{equation}
The embedding preserves deterministic completion, ancillary extension, causal prefixes, outcome measures, and continuous instrument limits.  Hence $G$ is a physical conversion by Lemma~\ref{lem:physical-pullback}.  Choose compatible references $\Xi_{a_c}=G\Xi_{a_f}$.  Passive interface restriction adds no control hardware and is assigned zero overhead.  It respects the entrance cut and preserves the free family.  Both access structures must still support the same output task $\tau$; a restriction deleting a required task port is not a conversion within this task theory.  Coarse-graining forgets ports and records without resetting the environment or replacing a finite-duration instrument by independent channels.

The operational distance itself satisfies
\begin{equation}
 d_{a_c}(G\Lambda,G\Lambda')\le d_{a_f}(\Lambda,\Lambda'),
 \label{eq:coarse-distance-contraction}
\end{equation}
because the supremum is taken over a subset of experiments.  This contraction is an information statement, not a claim that coarse-graining improves the physical task.  The worst-case error can appear smaller simply because fewer questions are allowed.

\begin{proposition}[Matched-budget access ordering]
\label{thm:fine-advantage}
Fix one output task $\tau=(\Xi_\tau,\Inst_\tau)$ and one budget $B$.  Suppose for every $P_c\in\mathfrak C_{a_c\to\tau}(B)$ there is a lift $\widetilde P_c\in\mathfrak C_{a_f\to\tau}(B)$ satisfying
\begin{equation}
 p_{\widetilde P_c\Lambda}^{\mathcal J}
 =p_{P_cG\Lambda}^{\mathcal J}
 \quad\text{for all }\mathcal J\in\Inst_\tau,
 \qquad \Cost(\widetilde P_c)\le\Cost(P_c).
 \label{eq:task-lift}
\end{equation}
Then
\begin{equation}
 E_{\tau,a_f,B}(\Lambda)
 \le E_{\tau,a_c,B}(G_{f\to c}\Lambda).
 \label{eq:fine-advantage}
\end{equation}
For the zero-overhead passive restriction, the lift is $\widetilde P_c=P_c\circ G$.
\end{proposition}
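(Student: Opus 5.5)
The plan is a control-by-control comparison that never requires an optimizer: every coarse control will be shown to have a fine lift performing exactly as well and costing no more. Fix an arbitrary $P_c\in\mathfrak C_{a_c\to\tau}(B)$ and its lift $\widetilde P_c$ from Eq.~\eqref{eq:task-lift}. Since $\widetilde P_c\in\mathfrak C_{a_f\to\tau}(B)$ by hypothesis, it is a feasible control for the fine infimum in Eq.~\eqref{eq:budgeted-error}. Hence
\[
E_{\tau,a_f,B}(\Lambda)\le\eps_\tau(\widetilde P_c\Lambda).
\]
Next I will use that the task error in Eq.~\eqref{eq:task-error} depends on a process only through its laws $p^{\mathcal J}$ for $\mathcal J\in\Inst_\tau$, compared against the fixed ideal laws $p^{\mathcal J}_{\Xi_\tau}$. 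The first clause of Eq.~\eqref{eq:task-lift} says these laws coincide for $\widetilde P_c\Lambda$ and $P_cG\Lambda$. Both processes are evaluated at the common output interface $a_\tau$, with the same tests and the same record space. Therefore each TV term agrees, and so does the supremum:
\[
\eps_\tau(\widetilde P_c\Lambda)=\eps_\tau(P_cG\Lambda).
\]
Combining the two displays gives $E_{\tau,a_f,B}(\Lambda)\le\eps_\tau(P_cG\Lambda)$ for every coarse control. Taking the infimum over $P_c$ yields Eq.~\eqref{eq:fine-advantage}. No attainment is needed, in line with the paper's convention. If $\mathfrak C_{a_c\to\tau}(B)$ is empty, the right side is $+\infty$ and the inequality is trivial. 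Otherwise $B$ is coarse-feasible, and the lift makes it fine-feasible, so both sides are defined.

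For the passive-restriction clause, I will check that $\widetilde P_c=P_c\circ G$ meets the hypotheses. The law identity holds by associativity of slot substitution: $(P_c\circ G)\Lambda=P_c(G\Lambda)$. For the cost, $\Cost(G)=0$ by the zero-overhead assignment, and Eq.~\eqref{eq:cost-axioms} gives $\Cost(P_c\circ G)\le\Cost(P_c)$. For membership, $G$ is a free conversion: it is a physical pullback conversion (Lemma~\ref{lem:physical-pullback}) that respects the entrance cut and has compatible references $\Xi_{a_c}=G\Xi_{a_f}$. Control compatibility, Eq.~\eqref{eq:control-compatibility} with $c=0$, then places $P_c\circ G$ in $\mathfrak C_{a_f\to\tau}(B)$.

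I expect this membership check to be the only delicate step. It requires that $G$ preserve the ideal task, keep the no-bypass clause of Definition~\ref{def:entrance-cut}, and satisfy $G^*\Inst_{a_c}\subseteq\Inst_{a_f}$, which the embedding $\iota$ supplies. Everything else is bookkeeping.
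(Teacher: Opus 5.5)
Your proof is correct and follows essentially the same route as the paper's. Equality of the task laws on $\Inst_\tau$ gives equal task error for each coarse control and its lift, and infimizing yields the ordering. The passive lift $P_c\circ G$ comes from composition closure (your control-compatibility step with $c=0$) together with the zero cost of $G$.
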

\begin{proof}
Equation~\eqref{eq:task-lift} uses exactly the same task tester on both outputs, so the two candidates have equal task error.  Taking the infimum over fine controls, then over coarse controls, proves Eq.~\eqref{eq:fine-advantage}.  Composition closure and the zero cost of $G$ supply the stated passive lift.
\end{proof}

This structural ordering does not itself establish strictness.  It isolates the comparison that must be proved in a physical model.  If a restriction costs $c_G>0$, the same argument compares a fine budget $B+c_G$ with a coarse budget $B$; equal-budget advantage then needs a separately verified lift.  Increasing the output tester family is a different operation and is not used in this proposition.

\subsection{Exactly solvable two-window processes}
\label{sec:strict-model}

We now give a physical control class with an exactly solvable optimum.  Let the system $S$ and environment $E$ be qubits, with $H_E=0$ and initial environment $I_E/2$, independent of an arbitrary reference-assisted system input.  The protected interval consists of a coupling segment of length $h$, a control window of length $w$, a second coupling segment of length $h$, and a compensation window of length $w$, so $T=2h+2w$.  In this subsection $T$ denotes this protected duration; the total physical experiment also includes any paid boundary supports before and after it.  All generator costs below include those supports.  During the coupling segments, define
\begin{center}
\begin{tabular}{c|cc}
 & first interval & second interval\\
\hline
$A$ & $gZ_S\otimes Z_E$ & $gZ_S\otimes Z_E$\\
$B$ & $2gZ_S\otimes Z_E$ & $0$
\end{tabular}
\end{center}
and set the coupling to zero in the ideal control windows.  Let $a_c^{\rm end}$ denote the coarse endpoint interface whose only process-facing port is the input--output system wire; the two internal windows are closed and no intermediate tester insertion is admitted.  Write $\mathfrak G_{\rm end}$ for restriction to this interface.  With this interface and no control applied, both processes have the same accumulated phase $\phi=2gh$ and hence the same induced endpoint channel
\begin{equation}
 \mathfrak G_{\rm end}\Upsilon_A=\mathfrak G_{\rm end}\Upsilon_B=\mathcal D_\phi,
 \qquad
 \mathcal D_\phi[\rho]=\cos^2\!\phi\,\rho+\sin^2\!\phi\,Z\rho Z.
\label{eq:strict-coarse-channel}
\end{equation}
The phase addition uses the same persistent environment qubit, $H_E=0$, and conservation of $Z_E$, so the joint propagator is
\[
 e^{-ighZ_S\otimes Z_E}e^{-ighZ_S\otimes Z_E}=e^{-i2ghZ_S\otimes Z_E}.
\]
It is therefore not the composition of two channels whose environment has been reset between the coupling segments.
Under the declared $H_E=0$ model, the two processes have the same uncontrolled joint $S\otimes E$ propagator,
not merely the same reduced endpoint channel.  Their distinction is entirely
the temporal placement of the coupling relative to the accessible window.
At the coarse endpoint interface the two processes induce the same channel $\mathcal D_\phi$.  Consequently, every experiment in the declared coarse endpoint tester family $\Inst_{\tau}^{\rm end}\subseteq\Inst_{a_c^{\rm end}}\cap\Inst_{a_\tau}$, including arbitrary reference-assisted inputs and terminal POVMs, has the same output law for $A$ and $B$.  This is an operational equality at the endpoint interface.  It does not assert equality of the underlying multi-time process tensors if an intermediate window port is reopened.
For the identity task and arbitrary reference-assisted final measurements,
\begin{equation}
 \eps(\mathcal D_\phi)
 =\frac12\|\mathcal D_\phi-\mathrm{id}\|_\diamond
 =\sin^2\!\phi.
 \label{eq:strict-uncontrolled-error}
\end{equation}
Here $p=\sin^2(2gh)$ is the value for the explicitly declared model
$H_E=0$ and $\rho_E=I_E/2$.  A nonmaximally mixed environment or a nonzero
$H_E$ changes the reduced channel and hence the numerical function multiplying
the same temporal-access separation; the equal uncontrolled joint propagator
and the distinction created by opening the internal window remain the relevant
structural features and must then be reevaluated with the corresponding
environment model.

The output task is always the endpoint identity channel on the unknown input, tested with arbitrary input states on $SR$ and terminal POVMs; $R$ need only have dimension two for the diamond distance.  The tester does not insert arbitrary intermediate logical operations.  Paired pulses are choices of the controller, not restrictions on these endpoint tests.

For this endpoint example, the coarse access structure $a_c$ and its restriction map $G$ are the endpoint interface $a_c^{\rm end}$ and $\mathfrak G_{\rm end}$ just defined.  Thus the quantities $E_c(GX)$ below use exactly the endpoint tester family, whereas reopening an intermediate window would define a different access structure and a different optimization problem.

The fine subarchitecture permits system control supported on the two windows and a boundary encoding $W$ with inverse boundary decoding $W^\dagger$.  It uses no feedback or quantum storage and requires ideal transparency on each trajectory.  If the two window endpoints are $V_1,V_2$, the uncoupled endpoint channel is $\operatorname{Ad}_{W^\dagger V_2V_1W}$.  Target preservation on the ideal input therefore imposes $\operatorname{Ad}_{V_2V_1}=\mathrm{id}$, equivalently $V_2V_1=e^{i\theta}I$.  Thus the paired endpoints $V,V^\dagger$ are a consequence of the declared target-preservation architecture, rather than an additional claim about arbitrary controls.  Physical scalar phases are retained when computing generator costs; pairing refers to their induced channels.  If target preservation is dropped, independent window endpoints form a different control theory: for example, $V_1=I,V_2=Z$ changes the $B$ error to $1-\sin^2(2gh)$, as discussed below.  Boundary channels have the same hardware and counted costs at both layers, in identical coupling-free boundary supports outside the protected interval.  Taking $W=I$ has zero overhead.  Inside the protected interval require
\begin{equation}
 \operatorname*{ess\,sup}_{t}\|H_{\mathrm{ctrl}}(t)\|_\infty
 \le A_{\max},\qquad
 \int_0^T\|H_{\mathrm{ctrl}}(t)\|_\infty dt\le L_{\max}.
 \label{eq:window-budget}
\end{equation}
The common total budget is $B=(A_\infty,A_1)=(\Omega/2,\pi)$, counting action in all paid supports, including boundary supports.  Use $A_{\max}=\Omega/2$, $L_{\max}=\pi$, and $w=\pi/\Omega$; the optimizer uses $W=I$, so its boundary action is zero.  Thus the displayed $\pi$ is the total action of the two interior window pulses; if a hardware model assigns a cost to boundary supports, the same choice remains feasible and incurs no additional action.  These are the amplitude and action components of Eq.~\eqref{eq:protocol-cost}; other hardware supplies are fixed at both layers.  The idle maximally mixed replacement seed and passive task restriction are available with zero conversion overhead, as in Proposition~\ref{prop:replacement-baseline}.  The coarse interface closes both internal windows and allows only the boundary pair within the same budget.  The comparison fixes the same budget caps; each optimizer may use less than the available budget.  Requiring equal expenditure instead would define a different feasible set.  Independent mixtures are permitted with worst-branch pulse costs and a separately declared sampler supply.  The optima are attained deterministically and remain unchanged when those mixtures are included.

\subsection{Closed window-control architecture}

The window architecture retains its internal control ports until the final restriction to the endpoint task.  A frame module acts on an experiment attached to those ports, including any inner module's control field.  Let $G_{\rm app}(t)$ be the controllable apparatus Hamiltonian on $S$ and its instrument memory $A$.  Write $\widehat F=F\otimes I_{EA}$ and $\widehat K_F=K_F\otimes I_{EA}$, with $K_F=i\dot FF^\dagger$, and lift $G_{\rm app}$ to the same joint space.  The uncontrollable background is
\begin{equation}
 H_{\rm bg}(t)=I_S\otimes H_E(t)\otimes I_A
              +H_{SE}(t)\otimes I_A.
 \label{eq:window-background}
\end{equation}
Its laboratory Hamiltonian after inserting the frame module is
\begin{equation}
 H_{\rm lab}^{F,G_{\rm app}}(t)=H_{\rm bg}(t)+\mathcal A_F(G_{\rm app})(t),
 \qquad \mathcal A_F(G_{\rm app}):=\widehat K_F+\widehat F G_{\rm app}\widehat F^\dagger.
 \label{eq:frame-insertion-action}
\end{equation}
Thus $\mathcal A_F$ acts on the apparatus, while $H_{\rm bg}$ remains the same physical background.  The declared hardware permits arbitrary bounded integrable system control fields on the window supports, so the conjugated inner fields remain available.  When logical instruments are present, their physical implementation must admit the corresponding Hamiltonian form of pointwise covariance in Eq.~\eqref{eq:frame-insertion-action}, on the full retained instrument memory, as in Ref.~\cite{wang2026cprc}.  Source and target interfaces both retain these open ports, with laboratory and logical instrument descriptions respectively.  Endpoint restriction is applied after the frame modules have been composed.

\begin{lemma}[Closure of covariant window instrument pullbacks]
\label{lem:window-closure}
On this open-port architecture, let $F(t)$ be an absolutely continuous system frame, initially $I$, changing only on the declared windows and paid boundary supports, and finally a scalar phase times $I$.  Its plateaus are $W$, $VW$, and $e^{i\theta}W$ before the final boundary decoding.  Let $\mathcal T_F^*$ be the physical instrument pullback defined by Eq.~\eqref{eq:frame-insertion-action}, with the corresponding covariant outcome maps.  Then
\begin{equation}
 \mathcal T_{F_1}^*\mathcal T_{F_2}^*
 =\mathcal T_{F_1F_2}^*
 \label{eq:window-pullback-composition}
\end{equation}
as open instrument networks for the same fixed background.  The product is again a compensated window frame; its peak and action costs are at most the sums of the constituent costs.  Each module preserves the uncoupled logical experiment trajectory by trajectory.  The result concerns the declared apparatus pullback and its hardware closure, rather than a conjugation of the entire system--environment Hamiltonian.
\end{lemma}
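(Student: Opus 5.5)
The plan is to reduce the composition law to an identity between the compiled apparatus fields and then read off the remaining clauses.

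\textbf{Step 1: nested compiled field.} Inserting the inner module $F_2$ replaces the apparatus field $G_{\rm app}$ by $\mathcal A_{F_2}(G_{\rm app})$. The outer module $F_1$ then compiles this field, so the laboratory field is
\begin{equation*}
 \mathcal A_{F_1}\bigl(\mathcal A_{F_2}(G_{\rm app})\bigr)
 =\widehat K_{F_1}+\widehat F_1\widehat K_{F_2}\widehat F_1^\dagger
 +\widehat F_1\widehat F_2 G_{\rm app}\widehat F_2^\dagger\widehat F_1^\dagger .
\end{equation*}
The background $H_{\rm bg}$ of Eq.~\eqref{eq:window-background} enters once and is not conjugated.

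\textbf{Step 2: frame cocycle.} Set $F=F_1F_2$. Differentiating gives
\begin{equation*}
 i\dot F F^\dagger = i\dot F_1F_1^\dagger + F_1\,(i\dot F_2F_2^\dagger)\,F_1^\dagger ,
\end{equation*}
so $K_{F_1F_2}=K_{F_1}+F_1K_{F_2}F_1^\dagger$. Absolute continuity of both frames makes the product rule valid almost everywhere. Lifting by $\otimes I_{EA}$ preserves the identity, hence $\mathcal A_{F_1}\circ\mathcal A_{F_2}=\mathcal A_{F_1F_2}$ pointwise. The laboratory Hamiltonians of Eq.~\eqref{eq:frame-insertion-action} therefore coincide. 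The associated Dyson propagators coincide as well, as open networks acting on any experiment attached to the ports.

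Each pullback also carries covariant outcome maps, namely conjugation of the retained instrument memory by the frame. For the instrument part, I would invoke the assumed Hamiltonian form of pointwise covariance from Ref.~\cite{wang2026cprc}. Those maps compose in the same order, so Eq.~\eqref{eq:window-pullback-composition} holds on records too. The order of $\mathcal T^*$ factors reverses relative to the network nesting, and I will track this carefully.

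\textbf{Step 3: product is a compensated window frame.} $F_1F_2$ is initially $I$. It changes only where either factor changes, which is the union of the declared windows and boundary supports. Its final value is $e^{i(\theta_1+\theta_2)}I$. Its plateaus are products of plateaus: on the first coupling segment $W_1W_2$, then $V_1W_1V_2W_2$, and so on.

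The middle plateau has the claimed form $V W$ with $W=W_1W_2$. This is the step I expect to be the main obstacle, since it needs $V_1W_1V_2W_2=V\,W_1W_2$. I would take $V:=V_1W_1V_2W_1^\dagger$, which is legitimate because the statement only asks for some $V$ and $W$. The final plateau is $e^{i\theta_1}W_1e^{i\theta_2}W_2=e^{i\theta}W$, so ideal transparency $V_2'V_1'=e^{i\theta}I$ is inherited.

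\textbf{Step 4: costs and trajectory preservation.} Since $\|F_1XF_1^\dagger\|_\infty=\|X\|_\infty$, the triangle inequality bounds the pointwise norm of $K_{F_1F_2}$ by $\|K_{F_1}\|+\|K_{F_2}\|$. Taking the essential supremum and the integral gives peak and action subadditivity, consistent with Eq.~\eqref{eq:cost-axioms}.

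For trajectory preservation, set $H_{SE}=0$ during the windows. There the composite is the composition of two modules, each of which maps the uncoupled logical trajectory to itself. Conversely, when the coupling is present, only the apparatus field is transformed, so no claim about conjugating $H_{SE}$ is made. This matches the final sentence of the statement.
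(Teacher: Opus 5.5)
Your field identity $\mathcal A_{F_1}(\mathcal A_{F_2}(G_{\rm app}))$ and cocycle $K_{F_1F_2}=K_{F_1}+F_1K_{F_2}F_1^\dagger$ follow the paper's proof essentially line by line. So do your use of propagator uniqueness, the plateau bookkeeping (including the same choice $V=V_1W_1V_2W_1^\dagger$, which is a one-line check rather than an obstacle), and the peak/action bounds.

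The gap is the clause that \emph{each module preserves the uncoupled logical experiment trajectory by trajectory}. You assume this for the two factor modules and transfer it to the composite. But that property, for an arbitrary compensated window frame $F$, is exactly what the lemma asserts. Since you have shown $F_1F_2$ is again such a frame, what is needed is a proof for a single module, and the proposal never gives one.

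The missing step is the passage to logical coordinates. With $\widehat F=F\otimes I_{EA}$,
\[
 \widehat F^\dagger H_{\rm lab}^{F,G_{\rm app}}\widehat F-i\widehat F^\dagger\dot{\widehat F}
 =I_S\otimes H_E\otimes I_A+\widehat F^\dagger(H_{SE}\otimes I_A)\widehat F+G_{\rm app}.
\]
This holds because $\widehat F^\dagger\widehat K_F\widehat F=i\widehat F^\dagger\dot{\widehat F}$ cancels the frame derivative, the compiled apparatus field is conjugated back to $G_{\rm app}$, and $\widehat F$ commutes with $I_S\otimes H_E\otimes I_A$. Equivalently, when $H_{SE}=0$ the laboratory propagator is $\widehat F(t)U_{\log}(t)$, with $U_{\log}$ generated by $I_S\otimes H_E\otimes I_A+G_{\rm app}$.

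Three further ingredients finish the argument: $F(0)=I$, $F(T)\propto I$, and covariance of the event kernels. Local kernels are conjugated by $F(r)\otimes I_A$, and finite-duration kernels carry their own entrance and exit frames, which cancel at linked ports. Together these give equality of every complete record law with the ideal one on each trajectory.

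The relevant hypothesis is $H_{SE}\equiv0$ on the whole protected interval, not merely in the windows, where the model already has no coupling. With coupling on the segments the trajectory is not preserved: process $B$ keeps error $p$ under every admitted frame. The displayed identity shows that what survives is precisely the rotated coupling $\widehat F^\dagger(H_{SE}\otimes I_A)\widehat F$, which is what the last sentence of the lemma is about.

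Two smaller points. The covariant outcome maps act by $F\otimes I_A$, i.e.\ on $S$ and trivially on the instrument memory and records; they do not conjugate the memory. And composition of finite-duration kernels should be checked explicitly via $\operatorname{Ad}_{F_1\otimes I_A}\circ\operatorname{Ad}_{F_2\otimes I_A}=\operatorname{Ad}_{F_1F_2\otimes I_A}$ at both block endpoints, rather than asserted as ``composing in the same order.''
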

\begin{proof}
First consider the physical insertion.  For $F=F_1F_2$, direct differentiation gives
\begin{equation}
 K=i\dot FF^\dagger=K_1+F_1K_2F_1^\dagger.
 \label{eq:window-frame-generator}
\end{equation}
The outer module covariantly implements the inner module's apparatus field, including $K_2$.  Hence the nested laboratory network has generator
\begin{align}
 H_{\rm bg}+\mathcal A_{F_1}(\mathcal A_{F_2}(G_{\rm app}))
 &=H_{\rm bg}+\widehat K_1
   +\widehat F_1\widehat K_2\widehat F_1^\dagger
   +\widehat{F_1F_2}G_{\rm app}\widehat{F_1F_2}^{\dagger}\\
 &=H_{\rm bg}+\mathcal A_{F_1F_2}(G_{\rm app}).
 \label{eq:frame-composition-identity}
\end{align}
This equality holds almost everywhere on the full physical interval, including boundary supports.  Uniqueness of the finite-dimensional propagator makes it an equality of the resulting joint evolutions for every apparatus input.  Outcome effects on $A$ and their record labels are retained.  If a block has monitored jumps, its jump operators are conjugated by $F\otimes I_A$; two successive conjugations give the same product frame.  Finite-duration kernels use their respective entrance and exit frames as in Eq.~\eqref{eq:finite-event-kernel-covariance}; the endpoint factors also multiply to $F_1F_2$, and frames at linked ports cancel.  Therefore every event kernel, including reference-assisted kernels, obeys Eq.~\eqref{eq:window-pullback-composition}.  All frames act trivially on the inaccessible environment, tester reference, and record registers.  This is nested insertion at open ports, not serial composition of already reduced endpoint channels.

Second, transforming this physical network to logical coordinates gives
\begin{equation}
 \widetilde H^{F,G_{\rm app}}
 =\widehat F^\dagger H_{\rm lab}^{F,G_{\rm app}}\widehat F
  -i\widehat F^\dagger\dot{\widehat F}
 =I_S\otimes H_E\otimes I_A
  +\widehat F^\dagger(H_{SE}\otimes I_A)\widehat F+G_{\rm app}.
 \label{eq:window-logical-generator}
\end{equation}
For $H_{SE}=0$, the generator and every logical outcome map are the prescribed ones.  With the same encoded input and covariant readout, each complete record law equals its ideal law.  This proves trajectory-wise ideal transparency by the same pointwise-covariance cancellation as in Ref.~\cite{wang2026cprc}.  For nonzero $H_{SE}$, the rotated coupling remains in Eq.~\eqref{eq:window-logical-generator}; it is this coupling that the echo modulates.  In particular, applying $\mathcal A_F$ to $H_{\rm bg}+G_{\rm app}$ instead of to $G_{\rm app}$ would describe a coordinate transformation of the entire evolution and would not establish this physical control claim.

Finally, the product frame has the same allowed derivative supports and scalar final value.  Its first and last plateaus are $W=W_1W_2$ and $e^{i(\theta_1+\theta_2)}W$, and its middle plateau is $VW$ with
\begin{equation}
 V=V^{(1)}W_1V^{(2)}W_1^\dagger.
 \label{eq:window-endpoint-product}
\end{equation}
Thus the internal window endpoints remain paired as channels.  Equation~\eqref{eq:window-frame-generator}, unitary invariance, and the triangle inequality bound the peak and integrated generator norms by the sums.  Independent randomization uses the product law pushed forward to $F_1F_2$ and satisfies the same worst-branch cost bound, with sampler costs added separately.  The implementation acts only on the declared system and apparatus ports and retains no input-bearing side wire, so it also respects the entrance cut.
\end{proof}

For the endpoint transmission task, set $G_{\rm app}=0$.  Then the actual laboratory dynamics are $H_{\rm bg}+\widehat K_F$, with the background coupling unchanged.  When $W=I$, the first coupling segment has frame $I$ and the second has frame $V$.  Choosing $V=X$ changes $ZZ$ to $-ZZ$ in Eq.~\eqref{eq:window-logical-generator}, producing the echo below.  In a nested implementation, the inner control field is $F_1K_2F_1^\dagger$, in addition to $K_1$; simple addition of uncompiled pulse fields is a different implementation.  The budget-graded family of open frame modules is closed, and a fixed-budget slice permits compositions whose combined cost fits that slice.  Coupling-free windows make the reference propagator depend only on these frame endpoints.  The exact optimum below therefore covers the physically nested modules and all admitted window waveforms.  With nonzero window coupling, closure of the apparatus modules persists, while the full propagator depends on the waveform and is controlled by the perturbation estimate below.

\subsection{Exact fine and coarse optima}

For clarity, write $\mathfrak C_{\mathrm{win}}$ for the fine class with $W=I$, and abbreviate
\[
 E_f(X):=E_{\tau,a_f,B}(X),\qquad E_c(GX):=E_{\tau,a_c,B}(GX).
\]
For this window architecture, $E_X^*$ denotes the optimal task error over the
admitted fine controls at the stated budget; hence
\[
 E_A^*=E_f(A),\qquad E_B^*=E_f(B).
\]
Its joint propagators are
\begin{equation}
 U_A(V)=V^\dagger e^{-ighZ_SZ_E}V e^{-ighZ_SZ_E},
 \qquad U_B(V)=e^{-i2ghZ_SZ_E}.
 \label{eq:window-propagators}
\end{equation}
Boundary pairs conjugate the corresponding reduced channels.  For $A$, the choice $V=X$ gives
\begin{equation}
 U_A=Xe^{-ighZ_SZ_E}Xe^{-ighZ_SZ_E}=I_{SE},
 \qquad E_A^*=0.
 \label{eq:strict-A-optimum}
\end{equation}
For process $B$, before imposing the paired target-preservation condition, the window endpoints give
\begin{equation}
 U_B(V_1,V_2)=V_2V_1 e^{-i2ghZ_SZ_E}.
 \label{eq:strict-B-general-endpoints}
\end{equation}
Within $\mathfrak C_{\mathrm{win}}$, target preservation imposes $V_2V_1=e^{i\theta}I$, and hence
\begin{equation}
 U_B=e^{i\theta}e^{-i2ghZ_SZ_E}\sim e^{-i2ghZ_SZ_E},
 \qquad E_B^*=\sin^2(2gh).
 \label{eq:strict-B-optimum}
\end{equation}
Consequently, in the notation of the endpoint interface just defined,
\begin{equation}
 \boxed{\mathfrak G_{\rm end}\Upsilon_A=\mathfrak G_{\rm end}\Upsilon_B,\qquad
 E_A^*=0,\qquad E_B^*=\sin^2(2gh).}
 \label{eq:strict-optimality}
\end{equation}
These values are exact over $\mathfrak C_{\mathrm{win}}$: errors are nonnegative and $A$ attains zero, while $B$ is independent of every window control.  The following theorem includes boundary controls and their randomization and makes the coarse-versus-fine comparison explicit.

\begin{theorem}[Strict temporal-access advantage at finite budget]
\label{thm:strict-access}
For the fixed endpoint identity task, the two processes above, and the paired-unitary subarchitectures with Eq.~\eqref{eq:window-budget}, let $p=\sin^2(2gh)$.  Then
\begin{equation}
 E_f(A)=0,\quad E_f(B)=p,\qquad
 E_c(GA)=E_c(GB)=p.
 \label{eq:strict-fine-coarse}
\end{equation}
Hence $p>0$ gives a strict access advantage for $A$ at the same budget, although $\mathfrak G_{\rm end}\Upsilon_A=\mathfrak G_{\rm end}\Upsilon_B$.  The free baseline is $3/4$ at both layers, and
\begin{equation}
 R_f(A)=\frac34,\quad
 R_f(B)=R_c(GA)=R_c(GB)=\left[\frac34-p\right]_+.
 \label{eq:strict-resource-values}
\end{equation}
\end{theorem}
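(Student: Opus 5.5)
The plan is to compute the four optimized errors in Eq.~\eqref{eq:strict-fine-coarse} directly from the joint propagators, extending Eqs.~\eqref{eq:strict-A-optimum}--\eqref{eq:strict-B-optimum} to include boundary encodings $W$ and independent randomization. The resource values then follow by substituting into Eq.~\eqref{eq:resource-monotone} with the baseline from Proposition~\ref{prop:replacement-baseline} at $d=2$.

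\emph{Fine optima.} For $A$, nonnegativity of $\eps_\tau$ and the explicit feasible choice $W=I$, $V=X$ give $E_f(A)=0$. This choice needs to be checked against the budget Eq.~\eqref{eq:window-budget}: a resonant $X$ pulse $\frac{\Omega}{2}X$ of duration $w=\pi/\Omega$ realizes $X$ up to phase with action $\pi/2$. The paired inverse pulse in the compensation window costs another $\pi/2$, so the total is $\pi$ at peak $\Omega/2$. For $B$, the general deterministic fine control with boundary pair $(W,W^\dagger)$ and window endpoints $V_1,V_2$ with $V_2V_1=e^{i\theta}I$ has joint propagator
\[
W^\dagger V_2V_1 e^{-i2ghZ_SZ_E}W\sim W^\dagger e^{-i2ghZ_SZ_E}W.
\]
The key point is that in $B$ both coupling segments precede the first window, so the windows act after all coupling. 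The reduced channel is therefore $\operatorname{Ad}_{W^\dagger}\circ\mathcal D_\phi\circ\operatorname{Ad}_W$. Its half-diamond distance to $\mathrm{id}$ equals that of $\mathcal D_\phi$ by unitary invariance, which is $\sin^2\phi=p$ by Eq.~\eqref{eq:strict-uncontrolled-error}. For mixtures over controls, the output channel is a convex combination of branches each at distance exactly $p$. Here I still need a lower bound, not just convexity. Writing each branch as $\cos^2\phi\,\mathrm{id}+\sin^2\phi\,\operatorname{Ad}_{W^\dagger ZW}$ and using a Bell input, the fidelity with $|\Omega_2\rangle$ is $\cos^2\phi+\sin^2\phi\,|\Tr(W^\dagger ZW)/2|^2=\cos^2\phi$ for every branch. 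Averaging therefore preserves it, and the Bell event witnesses error $\ge p$. Hence $E_f(B)=p$.

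\emph{Coarse optima.} At the endpoint interface both processes induce $\mathcal D_\phi$ by Eq.~\eqref{eq:strict-coarse-channel}. The only admitted controls are boundary pairs and their mixtures, so the same Bell-fidelity argument gives $E_c(GA)=E_c(GB)=p$, attained by $W=I$. I would also verify the consistency $E_f\le E_c$ required by Proposition~\ref{thm:fine-advantage}, using the passive lift.

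\emph{Resource values.} The hypotheses of Proposition~\ref{prop:replacement-baseline} hold at both layers: the idle mixed seed and zero-cost passive restriction are declared available. So $\eps^{\mathrm F}=1-1/4=3/4$, and substituting the four errors gives Eq.~\eqref{eq:strict-resource-values}, with $R_f(A)=3/4$.

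The main obstacle is the lower bound for $B$ and for the coarse layer over \emph{all} admitted controls, including randomized ones. The Bell-fidelity invariance handles this cleanly because $W^\dagger ZW$ is always traceless. The remaining subtlety is confirming that no admitted fine control can act between the two coupling segments of $B$; this follows from the declared supports, since the windows follow both segments in process $B$'s coupling schedule.
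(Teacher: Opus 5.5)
Your proposal is correct and follows the paper's proof essentially step for step. The explicit $X$ echo within the $(\Omega/2,\pi)$ caps gives $E_f(A)=0$. Every paired fine control for $B$ and every coarse boundary pair reduces to $(1-p)\mathrm{id}+p\operatorname{Ad}_{W^\dagger ZW}$, and the Bell event with traceless $W^\dagger ZW$ gives the lower bound $p$ uniformly over branches and mixtures. Proposition~\ref{prop:replacement-baseline} at $d=2$ then supplies the baseline $3/4$.

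One correction to your justification of the formula $U_B=V_2V_1e^{-i2ghZ_SZ_E}$. The first control window does lie between the two coupling segments in both processes, so it is not true that both segments precede the first window. The formula holds because $B$ carries its entire phase $2gh$ in the first segment and has zero coupling in the second. Your formula is right, but the reason you gave in the fine-optima and closing paragraphs is not.
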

\begin{proof}
With $W=I$, the admissible echo attains zero for $A$.  At the coarse endpoint interface, target preservation likewise restricts a general boundary pair to $W'=e^{i\theta}W^\dagger$, so every admissible deterministic coarse channel, and every fine channel for $B$, has the form
\begin{equation}
 (1-p)\mathrm{id}+p\mathcal N_W,
 \qquad \mathcal N_W(\rho)=N_W\rho N_W^\dagger,
 \quad N_W=W^\dagger ZW.
\end{equation}
Since $N_W$ is unitary and traceless, its Choi vector $(N_W\otimes I)|\Omega_2\rangle$ is orthogonal to $|\Omega_2\rangle$.  The Bell event therefore witnesses error $p$, even for an arbitrary random mixture of boundary pairs.  Convexity of the diamond norm gives the matching upper bound $p$.  This proves both coarse optima and the fine optimum for $B$.  For retained random labels, use the same Bell event while ignoring the label for the lower bound, and sum the branch upper bounds for the upper bound.  The finite pulse below attains the fine optimum for $A$.  Proposition~\ref{prop:replacement-baseline} supplies the baseline and Eq.~\eqref{eq:strict-resource-values}.
\end{proof}

The equality for the coarse optimum is a statement about the declared
paired-unitary boundary architecture, not about arbitrary CPTP pre- and
post-processing.  The strict $A$--$B$ endpoint-identification statement also
uses ideal transparency, which prevents the controller from changing the
logical task.  If transparency is relaxed, Eq.~\eqref{eq:relaxed-boundary-optimum}
gives the corresponding enlarged-class value.  The fine-versus-coarse
advantage for a single process is therefore a physical comparison within the
declared architecture, while the $A$--$B$ indistinguishability result is a sharper
statement for the transparent control class.

\subsection{An endpoint-only prediction limit}

\begin{proposition}[Endpoint-identification lower bound]
\label{prop:endpoint-prediction}
Let $\widehat E$ be any estimator whose input is only the uncontrolled endpoint
channel $\mathfrak G_{\rm end}\Upsilon$.  For the two processes in
Theorem~\ref{thm:strict-access}, with $p=\sin^2(2gh)$, one has
\begin{equation}
 \max\left\{
 \left|\widehat E(\mathfrak G_{\rm end}\Upsilon_A)-E_f(A)\right|,
 \left|\widehat E(\mathfrak G_{\rm end}\Upsilon_B)-E_f(B)\right|
 \right\}\ge \frac{p}{2}.
 \label{eq:endpoint-prediction-bound}
\end{equation}
More generally, if two physical realizations have a common endpoint channel and
their fine-access optima differ by $\Delta$, every endpoint-only estimator has
worst-case absolute error at least $\Delta/2$.
\end{proposition}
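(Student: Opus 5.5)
The argument is a two-point (Le Cam–type) indistinguishability bound, and I would prove the general statement first and then specialize it. Suppose two physical realizations $\Upsilon_1,\Upsilon_2$ have $\mathfrak G_{\rm end}\Upsilon_1=\mathfrak G_{\rm end}\Upsilon_2=:\mathcal C$, and their fine optima satisfy $|E_f(\Upsilon_1)-E_f(\Upsilon_2)|=\Delta$.

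An endpoint-only estimator $\widehat E$ is, by hypothesis, a function of its input channel alone. It therefore returns one and the same number $x:=\widehat E(\mathcal C)$ on both realizations. This is the single point where the hypothesis ``input is only the uncontrolled endpoint channel'' enters. I would state it carefully: $\widehat E$ may be any function, deterministic or not, on the set of endpoint channels. If randomized estimators are allowed, I would apply the argument to each realization of the internal randomness. Alternatively, I would replace absolute error by expected absolute error and use Jensen's inequality, since $y\mapsto|y-E|$ is convex; the same bound then follows.

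The triangle inequality gives
\[
 \Delta=\bigl|E_f(\Upsilon_1)-E_f(\Upsilon_2)\bigr|
 \le \bigl|x-E_f(\Upsilon_1)\bigr|+\bigl|x-E_f(\Upsilon_2)\bigr|
 \le 2\max_{j=1,2}\bigl|x-E_f(\Upsilon_j)\bigr|.
\]
This is the general claim: the worst-case absolute error is at least $\Delta/2$. The bound is tight, because the midpoint estimator $x=\tfrac12(E_f(\Upsilon_1)+E_f(\Upsilon_2))$ attains it.

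For Eq.~\eqref{eq:endpoint-prediction-bound}, I would take $\Upsilon_1=\Upsilon_A$ and $\Upsilon_2=\Upsilon_B$. Their common endpoint channel $\mathcal D_\phi$ is Eq.~\eqref{eq:strict-coarse-channel}, and $E_f(A)=0$, $E_f(B)=p$ come from Theorem~\ref{thm:strict-access}, so $\Delta=p$.

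Essentially no step is hard. The only care needed is in framing: the lower bound is a worst case over the pair $\{A,B\}$, not a claim about every process. The equality of endpoint channels must be the exact operational equality at $a_c^{\rm end}$ established in Eq.~\eqref{eq:strict-coarse-channel}, so that no endpoint experiment could have supplied $\widehat E$ with distinguishing data.
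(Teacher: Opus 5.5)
Your proposal is correct and follows the paper's proof: the estimator returns a single value $x$ on the shared endpoint channel, the triangle inequality $\Delta\le|x-E_f(\Upsilon_1)|+|x-E_f(\Upsilon_2)|$ gives the $\Delta/2$ bound, and randomized or finite-data estimators are covered by averaging this summed inequality, not the pointwise maximum, over an output law common to both processes (your Jensen variant is an equivalent route). Your tightness remark about the midpoint estimator goes slightly beyond the paper and is correct.
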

\begin{proof}
The endpoint channels of $A$ and $B$ are equal, so the estimator returns one
number, say $e$, on both inputs.  Since $E_f(A)=0$ and $E_f(B)=p$,
the triangle inequality gives $p\le |e|+|e-p|$.  The general statement is
identical with $p$ replaced by $\Delta$.  If an estimator uses independent
randomness or finite endpoint data, its output has the same law on both
processes.  Averaging the same triangle inequality then gives the bound on
the larger expected absolute error.  Thus even exact endpoint tomography
does not remove this identification error.
\end{proof}

\subsection{Fixed interaction phase and temporal allocation}

\begin{proposition}[Fixed total phase, variable temporal capability]
\label{prop:phase-allocation}
Retain the environment, coupling-free windows, endpoint task, and budget of
Theorem~\ref{thm:strict-access}.  For $0<\Phi\le\pi/2$ and
$0\le\lambda\le1$, let $\Upsilon_\lambda$ have respective coupling
Hamiltonians $(\lambda\Phi/h)Z_SZ_E$ and
$((1-\lambda)\Phi/h)Z_SZ_E$ on the two segments of duration $h$.
Every member has uncontrolled endpoint channel $\mathcal D_\Phi$, while
\begin{equation}
 E_c(\mathfrak G_{\rm end}\Upsilon_\lambda)=\sin^2\Phi,
 \qquad
 E_f(\Upsilon_\lambda)=\sin^2\!\bigl((2\lambda-1)\Phi\bigr).
 \label{eq:phase-allocation-optimum}
\end{equation}
These optima include boundary pairs and their admitted random mixtures.
At $\Phi=2gh$, $\lambda=1/2$ and $\lambda=1$ recover $A$ and $B$,
respectively, within the stated phase range.
\end{proposition}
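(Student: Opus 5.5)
The plan is to reduce both optima to joint $S\otimes E$ propagators, as in the proof of Theorem~\ref{thm:strict-access}, with the phase $2gh$ replaced by a $\lambda$-dependent quantity.

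\emph{Endpoint channel.} Since $H_E=0$ and $Z_E$ is conserved, the uncontrolled propagator is
\[
e^{-i(1-\lambda)\Phi Z_SZ_E}e^{-i\lambda\Phi Z_SZ_E}=e^{-i\Phi Z_SZ_E}.
\]
With $\rho_E=I_E/2$ this gives $\mathcal D_\Phi$, as in Eq.~\eqref{eq:strict-coarse-channel}. So every member of the family has the same uncontrolled endpoint channel.

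\emph{Coarse optimum.} At the coarse interface only boundary pairs are available. Target preservation forces $W'=e^{i\theta}W^\dagger$, exactly as in the proof of Theorem~\ref{thm:strict-access}. Every admissible channel therefore has the form $(1-q)\mathrm{id}+q\mathcal N_W$ with $q=\sin^2\Phi$ and $N_W=W^\dagger ZW$ traceless. The Bell-event lower bound and the convexity upper bound then give $E_c=\sin^2\Phi$, including random mixtures and retained labels.

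\emph{Fine optimum.} For a window control the endpoints are paired, $V_2V_1=e^{i\theta}I$, because the windows are coupling-free and Lemma~\ref{lem:window-closure} shows the propagator depends only on the frame endpoints. The propagator is then
\[
U(V)=V^\dagger e^{-i(1-\lambda)\Phi Z_SZ_E}V\,e^{-i\lambda\Phi Z_SZ_E}
\]
up to boundary conjugation. I would write $V^\dagger Z_SV=\vec n\cdot\vec\sigma$, use conservation of $Z_E$, and split into the sectors $z_E=\pm1$. In each sector the system sees the unitary
\[
u_\pm=e^{\mp i(1-\lambda)\Phi\,\vec n\cdot\vec\sigma}\,e^{\mp i\lambda\Phi Z}.
\]
The reduced channel is the equal mixture $\tfrac12(\mathrm{Ad}_{u_+}+\mathrm{Ad}_{u_-})$.

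The key identity is that $u_-$ is the complex conjugate of $u_+$ in a suitable basis, or more directly that $u_-=Y\bar u_+Y$-type. This should make the task error a function of $|\Tr u_\pm|$, and hence of $n_z$ and the two angles. For the \textbf{lower bound} I would use the Bell event. The fidelity is $\tfrac14\cdot\tfrac12(|\Tr u_+|^2+|\Tr u_-|^2)/\ldots$, concretely
\[
\langle\Omega|\cdot|\Omega\rangle=\tfrac12\sum_\pm|\Tr u_\pm/2|^2.
\]
Moreover,
\[
\Tr u_\pm/2=\cos a\cos b-n_z\sin a\sin b\quad\text{(times phases)},
\]
where $a=(1-\lambda)\Phi$ and $b=\lambda\Phi$. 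Its modulus is at most $\cos(|a-b|)$ when $a,b\in[0,\pi/2]$, which holds in the stated range $\Phi\le\pi/2$. Equality requires $n_z=-1$. The Bell error is therefore at least
\[
1-\cos^2((2\lambda-1)\Phi)=\sin^2((2\lambda-1)\Phi).
\]
The bound $|\cos a\cos b-n_z\sin a\sin b+i(\ldots)|\le\cos(a-b)$ needs the imaginary part handled too; I would compute $|\Tr u|^2$ exactly as $1-\sin^2$ of the composed rotation angle and bound that angle below by $|a-b|$ using the triangle inequality for rotation angles on $SU(2)$. For the \textbf{upper bound}, $V=X$ gives $n_z=-1$ and $U=e^{-i(2\lambda-1)\Phi Z_SZ_E}$ up to a sign convention. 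This is a dephasing channel $\mathcal D_{(2\lambda-1)\Phi}$ with error $\sin^2((2\lambda-1)\Phi)$ by Eq.~\eqref{eq:strict-uncontrolled-error}. It uses action $\pi$ within budget, as for $A$.

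Boundary pairs conjugate the reduced channel and do not change the Bell witness. Random mixtures are covered by the same linear Bell lower bound and convexity upper bound, as in Theorem~\ref{thm:strict-access}. Setting $\Phi=2gh$ with $\lambda=1/2$ and $\lambda=1$ recovers $A$ and $B$.

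The main obstacle is the fine lower bound: showing that no $V$ beats the antiparallel echo. That reduces to the $SU(2)$ rotation-angle inequality $\angle(R_{\vec n}(2a)R_z(2b))\ge 2|a-b|$, which needs $a,b\le\pi/2$. This is exactly where the restriction $\Phi\le\pi/2$ enters.
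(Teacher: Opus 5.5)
Your proof is correct and is essentially the paper's argument. You split by the conserved $Z_E$ eigenvalue, use the common normalized trace $q=\cos a\cos b-n_z\sin a\sin b$ so the Bell event has probability $q^2\le\cos^2\bigl((2\lambda-1)\Phi\bigr)$ in every branch, after boundary conjugation, and after mixing, and you attain the bound with the $X$ echo. Your imaginary-part worry is moot, however: $u_\pm\in SU(2)$ and the cross terms are traceless, so $\tfrac12\Tr u_\pm$ is exactly this real affine function of $n_z\in[-1,1]$, and neither the rotation-angle triangle inequality nor the conjugation identity is needed (as written, $Y\bar u_+Y$ just returns $u_+$). Also, the pairing $V_2V_1=e^{i\theta}I$ is imposed by target preservation, not by the windows being coupling-free.
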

\begin{proof}
Write $\alpha=\lambda\Phi$ and $\beta=(1-\lambda)\Phi$ for the two phases.
Their uncontrolled sum is $\Phi$, which proves the common endpoint channel
and its coarse optimum by the Bell witness in
Theorem~\ref{thm:strict-access}.
For a deterministic paired window control, let
$c=\Tr(V^\dagger ZVZ)/2\in[-1,1]$.  Conditional on the conserved
environment eigenvalue $e=\pm1$, the system propagator before boundary
conjugation is
\[
 U_e=e^{-ie\beta V^\dagger ZV}e^{-ie\alpha Z},\qquad
 q:=\frac12\Tr U_e=\cos\alpha\cos\beta-c\sin\alpha\sin\beta.
\]
Both environment branches have this same real normalized trace.
Since $\alpha,\beta\ge0$ and $\alpha+\beta\le\pi/2$,
\[
 0\le\cos(\alpha+\beta)\le q\le\cos(\alpha-\beta).
\]
For the maximally entangled input, the ideal Bell event has actual
probability $q^2$, also after any boundary conjugation $W$.
Its probability deficit therefore bounds every branch error below by
$1-q^2\ge\sin^2(\alpha-\beta)$.  The same event gives the same lower bound
after random mixing, including when the task retains control labels.
The feasible choice $V=-iX$, with its inverse in the compensation window,
gives $U_e=e^{-ie(\alpha-\beta)Z}$.  Averaging the environment yields
$\mathcal D_{\alpha-\beta}$, whose task error is $\sin^2(\alpha-\beta)$.
This attains the lower bound at the common peak and action caps.
\end{proof}

The family separates interaction strength from temporal accessibility:
the total phase and the maximum segment coupling $\Phi/h$ are uniformly
bounded, yet the fine optimum ranges from zero at balanced allocation to
$\sin^2\Phi$ at one-sided allocation.  At $\Phi=\pi/3$, all coarse
capability scores vanish, whereas the balanced member has
$R_f=3/4$.  Thus a common endpoint channel can conceal the full calibrated
transmission advantage.  Proposition~\ref{prop:endpoint-prediction}
gives a corresponding unavoidable prediction error of at least $3/8$.

\subsection{The exact echo-action threshold}

\begin{proposition}[Action threshold for a zero-error echo]
\label{prop:echo-action-threshold}
Within the system-only unitary window architecture of
Theorem~\ref{thm:strict-access}, a zero-error implementation for process $A$
with nontrivial coupling phase $\sin(gh)\cos(gh)\ne0$ requires, in each
coupling-free window, a net unitary $V$ satisfying
\begin{equation}
 V^\dagger ZV=-Z.
 \label{eq:echo-axis-reversal}
\end{equation}
If $A_1=\int\|H_{\rm ctrl}(t)\|_\infty dt$ is the action of one window and
$A_\infty=\operatorname*{ess\,sup}_t\|H_{\rm ctrl}(t)\|_\infty$ its peak
amplitude, then
\begin{equation}
 A_1\ge \frac{\pi}{2},\qquad A_\infty w\ge \frac{\pi}{2}
 \label{eq:echo-action-threshold}
\end{equation}
for each window.  Consequently, the paired two-window echo obeys
$A_1^{\rm total}\ge\pi$.  The rectangular pulse used in
Eq.~\eqref{eq:strict-pulse-budget} saturates these bounds, so they are both
necessary and sufficient within this declared architecture.
\end{proposition}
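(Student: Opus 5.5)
The plan has two halves: first reduce zero error to the axis-reversal condition \eqref{eq:echo-axis-reversal}, then bound the action needed to produce such a $V$. For the necessity part, I reuse the computation from the proof of Proposition~\ref{prop:phase-allocation} with $\alpha=\beta=gh$. For a deterministic paired control with window endpoint $V$, conditional on the environment eigenvalue $e=\pm1$, the system propagator is $U_e=e^{-ie\,gh\,V^\dagger ZV}e^{-ie\,gh\,Z}$. Its normalized trace is $q=\cos^2(gh)-c\sin^2(gh)$ with $c=\Tr(V^\dagger ZVZ)/2$. The Bell event has probability $q^2$, and this is unchanged by boundary conjugation.

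Zero error therefore forces $q^2=1$. Since $q\in[\cos 2gh,1]$ when $c\in[-1,1]$, I would show that $q=\pm1$ requires $c=-1$ whenever $\sin(gh)\ne0$. The edge case $q=-1$ would need $\cos(2gh)=-1$, which forces $\cos(gh)=0$ and is excluded by the hypothesis $\sin(gh)\cos(gh)\ne0$. Now $c=-1$ with $V^\dagger ZV$ a traceless Hermitian unitary means $V^\dagger ZV=-Z$, because the inner product with $Z$ saturates Cauchy--Schwarz. For random mixtures, the same Bell event forces every branch to have zero error, so every branch obeys \eqref{eq:echo-axis-reversal}. The compensation window realizes $V^\dagger$ up to phase and so conjugates $Z$ to $-Z$ as well; both windows must therefore perform an axis reversal.

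For the action bound, the key step is a geodesic estimate on $SU(2)$ modulo phase. Let $U(t)$ solve $i\dot U=H_{\rm ctrl}U$ on the window, with $U(w)=V$ up to phase. I would use the invariant $f(t)=\frac12|\Tr(U(t)^\dagger ZU(t)Z)|$ or, more cleanly, the angle $\theta(t)$ between the Bloch vectors $\vec n(t)$ of $U^\dagger ZU$ and $\hat z$, with $\vec n(0)=\hat z$ and $\vec n(w)=-\hat z$. Writing $H_{\rm ctrl}=h_0I+\vec h\cdot\vec\sigma$, the Heisenberg motion gives $\|\dot{\vec n}\|\le 2|\vec h_\perp|\le 2\|H_{\rm ctrl}\|_\infty$. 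This holds because $\|H-\tfrac12\Tr H\|_\infty=|\vec h|\le\|H\|_\infty$ after the optimal scalar shift; I would handle the shift by noting that scalar parts do not move $\vec n$ while $\|H\|_\infty\ge|\vec h|$ always. The great-circle distance between the two poles is $\pi$, so $\pi\le\int 2\|H_{\rm ctrl}\|_\infty dt=2A_1$, which gives $A_1\ge\pi/2$. Then $A_\infty w\ge A_1$ yields the second inequality, and summing over the two windows gives $A_1^{\rm total}\ge\pi$. The point needing the most care is the factor of $2$: the rotation rate of $\vec n$ is twice $|\vec h|$, and I need $\|H\|_\infty\ge|\vec h|$ without subtracting $h_0$, which holds since the eigenvalues are $h_0\pm|\vec h|$.

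For sufficiency, I would take the rectangular pulse $H=(\Omega/2)X$ on $w=\pi/\Omega$. This gives $V=e^{-i\pi X/2}=-iX$, which satisfies $V^\dagger ZV=-Z$ with $A_1=\pi/2$ and $A_\infty w=\pi/2$. The inverse pulse in the compensation window has the same costs, so the total action is $\pi$ at peak $\Omega/2$. This matches the budget $B=(\Omega/2,\pi)$ and the zero-error echo \eqref{eq:strict-A-optimum} (with $-iX\sim X$ as channels), so the bounds are saturated.
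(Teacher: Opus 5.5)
Your proof is correct and follows the paper's route: the same trace computation $q=\cos^2(gh)-c\sin^2(gh)$, with the Bell event forcing $q=1$ and hence $c=-1$, followed by a geodesic speed-limit bound on each window. The only difference is cosmetic. You track the Heisenberg-evolved axis $U^\dagger ZU$ on the Bloch sphere (angle $\pi$, rate at most $2\|H_{\rm ctrl}\|_\infty$), whereas the paper tracks an evolving $Z$ eigenstate in Fubini--Study distance (distance $\pi/2$, speed equal to the energy uncertainty, at most $\|H_{\rm ctrl}\|_\infty$); for a qubit these are the same estimate.
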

\begin{proof}
Set $\chi=gh$ and $c=\Tr(V^\dagger ZVZ)/2$.
As in Proposition~\ref{prop:phase-allocation}, the two environment
branches have normalized trace $q=\cos^2\chi-c\sin^2\chi$.
A zero-error identity channel has unit probability for the ideal Bell
event, so $q^2=1$.  Here $q\le1$ and
$q\ge\cos(2\chi)>-1$, because $\sin\chi\cos\chi\ne0$.
Hence $q=1$, which, since $\sin\chi\ne0$, forces $c=-1$.
For qubit Pauli axes this is precisely
Eq.~\eqref{eq:echo-axis-reversal}.  Boundary conjugations preserve $q$;
a random mixture with zero error must have zero Bell deficit in almost
every branch, so it satisfies the same condition branch by branch.
Condition~\eqref{eq:echo-axis-reversal} maps the two eigenspaces of $Z$ to
opposite eigenspaces.  Starting from either eigenstate, the unitary path in
that window must therefore reach an orthogonal state.  The Fubini--Study
distance between orthogonal rays is $\pi/2$, while the instantaneous speed
is the energy uncertainty
$\sqrt{\langle H_{\rm ctrl}^2\rangle-\langle H_{\rm ctrl}\rangle^2}
\le\|H_{\rm ctrl}(t)\|_\infty$.  Integrating the speed gives
$A_1\ge\pi/2$, and the peak bound gives $A_1\le A_\infty w$.
The second window implements $V^\dagger$ and has the same lower bound.
The pulse $H_{\rm ctrl}=(\Omega/2)X$ has action $\pi/2$ per window and
peak $\Omega/2$ over $w=\pi/\Omega$, attaining both inequalities.
At the excluded recurrence phases $\sin(2gh)=0$, the uncontrolled channel
is already the identity, so no positive control-action threshold applies.
\end{proof}

\subsection{The role of ideal transparency}

At $g=0.5$ and $h=1$, the exact resource values are $R_f(A)=0.75$ and $R_f(B)=R_c(GA)\simeq0.04193$.  The coarse channel thus fails to determine the resource available through its hidden temporal interface.  Boundary conjugations alone cannot remove the error in this declared control class.  Its ideal-transparency condition is substantive: if independent window endpoints are instead allowed, $(I,Z)$ implements
\begin{equation}
 \operatorname{Ad}_Z\circ\mathcal D_\phi
 =p\,\mathrm{id}+(1-p)\operatorname{Ad}_Z,
 \qquad \eps=1-p.
 \label{eq:nontransparent-correction}
\end{equation}
One $Z$ pulse has action $\pi/2$ and fits the stated peak cap in a window of length $\pi/\Omega$.  Thus for $p>1/2$ this enlarged control class improves $B$ below $p$ while pursuing the same identity task.  It violates ideal transparency; Theorem~\ref{thm:strict-access} asserts optimality in the compensated class, rather than across these distinct architectures.

For completeness, if the paired-endpoint constraint is dropped while retaining system-only unitary boundaries, the exact process-$B$ optimum is
\begin{equation}
 \inf_U\frac12\left\|\mathcal D_p-\operatorname{Ad}_U\right\|_\diamond
 =\min\{p,1-p\},
 \qquad \mathcal D_p=(1-p)\operatorname{id}+p\operatorname{Ad}_Z.
 \label{eq:relaxed-boundary-optimum}
\end{equation}
Unitary invariance of the diamond norm gives
\[
 \tfrac12\|\operatorname{Ad}_{V_2}\circ\mathcal D_p\circ
 \operatorname{Ad}_{V_1}-\mathrm{id}\|_\diamond
 =\tfrac12\|\mathcal D_p-
 \operatorname{Ad}_{(V_1V_2)^\dagger}\|_\diamond.
\]
Thus the optimization over independent unitary endpoints reduces to the
displayed optimization over $U$; $U=I$ and $U=Z$ are feasible at the stated
caps and give the two upper bounds.  For the converse, feed the normalized
Choi state into the diamond distance.  The Choi state of $\mathcal D_p$ is
$(1-p)|\Phi_I\rangle\langle\Phi_I|+p|\Phi_Z\rangle\langle\Phi_Z|$, with orthogonal Bell vectors.  For every normalized unitary Choi vector $|\Phi_U\rangle$, its overlap with this state is at most $\max\{p,1-p\}$, because the two displayed Bell components have squared overlaps summing to at most one.  The event $|\Phi_U\rangle\langle\Phi_U|$ therefore gives a trace-distance lower bound $1-\langle\Phi_U|\rho_{\mathcal D_p}|\Phi_U\rangle\ge\min\{p,1-p\}$.  The echo for $A$ remains exact in this enlarged class, so the separation persists for every $0<p<1$, while the values in Theorem~\ref{thm:strict-access} remain the values for the declared transparent class.

\subsection{Finite pulses and comparison protocols}

The separation survives finite control amplitude and finite pulse duration.  If
\begin{equation}
 \|H_{\mathrm{ctrl}}(t)\|_\infty\le\frac{\Omega}{2},
 \qquad w=\frac{\pi}{\Omega},
 \label{eq:strict-pulse-budget}
\end{equation}
then $H_{\mathrm{ctrl}}=(\Omega/2)X$ implements $V=-iX$ in the first window, with the reverse pulse implementing $V^\dagger$.  The peak norm is $\Omega/2$ and the total action of the pair is $\pi$, exactly the common budget caps in Eq.~\eqref{eq:window-budget}.  More generally, a bounded envelope with $\int_0^w\Omega_x(t)dt=\pi$ and no sign changes has the same endpoints and action.  Smooth envelopes require a longer window if the same peak cap is imposed.  No hard spectral-bandwidth claim is needed for this finite-amplitude construction.

For comparison, uniformly sampling $V\in\{I,X,Y,Z\}$ and applying $V^\dagger$ in the final window gives
\begin{equation}
 \overline{\mathcal E}_A=\frac12\,\mathrm{id}+\frac12\,\mathcal D_\phi,
 \qquad
 \eps(\overline{\mathcal E}_A)=\frac12\sin^2\!\phi,
 \qquad
 \eps(\overline{\mathcal E}_B)=\sin^2\!\phi.
 \label{eq:strict-random-control}
\end{equation}
Thus a universal random control improves process $A$, while a control selected using its internal time structure removes the error entirely.  The example is a benchmark for the resource quantity, not a claim about arbitrary continuous controls.

If the Pauli draw is retained as a flag, its law is also included in the ideal output.  For a flagged difference $\sum_r p_r|r\rangle\langle r|\otimes\Delta_r$, the diamond norm is the supremum, over a common input state, of $\sum_r p_r\|(\Delta_r\otimes\mathrm{id})(\rho)\|_1$.  Here two branches have zero difference and two share $\mathcal D_\phi-\mathrm{id}$, so the same error $p/2$ holds with or without the flag.  This observation is specific to these branches, not a maximum-over-branches formula.

As a boundary case, isotropic Markov depolarization obeys
\begin{equation}
 \dot\rho=-i[H_{\mathrm{ctrl}}(t),\rho]
 +\gamma(t)\left(\frac I2\Tr\rho-\rho\right),
 \qquad
 \eps=\frac34\left(1-e^{-\int_0^T\gamma(t)dt}\right),
 \label{eq:isotropic-boundary}
\end{equation}
Here $\gamma(t)\ge0$ and the total control unitary has been compensated.  The dissipative generator commutes with every system rotation, so the resulting channel is $e^{-\int\gamma}\,\mathrm{id}+(1-e^{-\int\gamma})\mathcal R_{I/2}$ irrespective of the system-unitary waveform.  The displayed error therefore applies to compensated system-unitary control, without extending the claim to auxiliary-assisted error correction.  Internal access improves the task when the admitted controls can change error accumulation, as in Eqs.~\eqref{eq:strict-optimality}--\eqref{eq:strict-robust-gap}.

\section{Task Covariance and Extensions Beyond Identity Transmission}
\label{sec:task-covariance}

The temporal-access question applies to logical gates as well as transmission. The main result of this section is a direct gate implementation: the same two qubit processes retain the exact $0/p/p$ separation for every $U(2)$ target at a sufficient common budget. Output-task covariance then identifies the error in a different logical frame, while input--output transport specifies the additional control and cost correspondence needed for an abstract equality of optimized errors.

\subsection{Unitary tasks and fixed-process output covariance}

Let $S$ have dimension $d$ and let $U$ be a unitary on $S$.  Write
\begin{equation}
 \Xi_U:=\operatorname{Ad}_U,\qquad
 \operatorname{Ad}_U(\rho):=U\rho U^\dagger,
 \label{eq:unitary-task}
\end{equation}
and let $\tau_{\rm id}=(\mathrm{id},\Inst_{\rm id})$ denote the identity
transmission task with its fixed reference-assisted complete-record tests.
The unitary task $\tau_U=(\Xi_U,\Inst_U)$ is defined by transporting every
identity test through the same output frame.  More explicitly, there is a
bijective measurable correspondence $\mathcal J\leftrightarrow\mathcal J_U$
between the two declared test families such that, for every actual process
$\Lambda$,
\begin{equation}
 \TV\!\left(p_\Lambda^{\mathcal J_U},
 p_{\Xi_U}^{\mathcal J_U}\right)
 =
 \TV\!\left(p_{\Gamma_U(\Lambda)}^{\mathcal J},
 p_{\mathrm{id}}^{\mathcal J}\right),
 \qquad
 \Gamma_U(\Lambda):=\operatorname{Ad}_{U^\dagger}\circ\Lambda .
 \label{eq:unitary-test-covariance}
\end{equation}
For the endpoint transmission tests used below this correspondence is
explicit.  If $\mathcal J=(\rho_{QR},\{M_x\}_x)$, with the reference factor
of each effect left unchanged, then
\begin{equation}
 \mathcal J_U=(\rho_{QR},\{(U\otimes I_R)M_x(U^\dagger\otimes I_R)\}_x).
 \label{eq:unitary-test-map}
\end{equation}
Thus the input state and reference readout are fixed while only the terminal
output frame is transported.  For every outcome,
\(\Tr[(U M_xU^\dagger)\Lambda(\rho)]
 =\Tr[M_x\,\Gamma_U(\Lambda)(\rho)]\), and the ideal side gives
\(\Tr[(U M_xU^\dagger)\operatorname{Ad}_U(\rho)] =\Tr[M_x\rho]\).
The same output-frame map is applied to the terminal factor of a multi-time
tester, with all earlier instruments and record labels unchanged.
The correspondence includes the reference and terminal readout, and uses the
same complete record space.  Thus a frame is not silently added to the task
after optimization.  If the interface supplies $U$ as a logical output frame,
its conversion cost is $c_U=0$.  Otherwise $c_U$ is the declared cost of a
physical implementation of this frame, including any required boundary
supports and task-interface hardware.

\begin{proposition}[Fixed-process output-frame covariance]
\label{prop:fixed-output-covariance}
Assume the output-frame correspondence in Eq.~\eqref{eq:unitary-test-covariance}
preserves the entrance cut and maps the declared control classes as follows:
for every budget $B\in\mathcal B_{\tau_U,a}^{\mathrm{feas}}$ with $B-c_U\in\mathcal B_{\tau_{\rm id},a}^{\mathrm{feas}}$, composing a control for $\tau_U$ with the
inverse output frame gives an admissible control for $\tau_{\rm id}$ at budget
$B-c_U$, and conversely every identity-task control at budget $B-c_U$ lifts
with the same certified cost decomposition.  Then
\begin{equation}
 \boxed{
 \eps_{\tau_U}(\Lambda)
 =\eps_{\tau_{\rm id}}\!\left(\Gamma_U(\Lambda)\right),\qquad
 E_{\tau_U,a,B}(\Lambda)
 =E_{\tau_{\rm id},a,B-c_U}(\Lambda).}
 \label{eq:fixed-output-capability-covariance}
\end{equation}
If the free family is covariant under the same frame map, then
\begin{equation}
 \eps^{\mathrm F}_{\tau_U,a,B}
 =\eps^{\mathrm F}_{\tau_{\rm id},a,B-c_U}.
 \label{eq:unitary-baseline-covariance}
\end{equation}
In particular, when $c_U=0$ the calibrated advantages of the two tasks are
identical after the same output-frame relabeling.
\end{proposition}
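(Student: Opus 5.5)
The plan proceeds in three layers, from the unoptimized error to the optimized error and then to the free baseline.

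For the first identity, I will use the bijective test correspondence $\mathcal J\leftrightarrow\mathcal J_U$ directly. Eq.~\eqref{eq:unitary-test-covariance} equates, test by test, the complete-record total-variation distance for $\Lambda$ against $\Xi_U$ on $\mathcal J_U$ with that for $\Gamma_U(\Lambda)$ against $\mathrm{id}$ on $\mathcal J$. Because the correspondence is a bijection between $\Inst_{U}$ and $\Inst_{\rm id}$, taking the supremum in Eq.~\eqref{eq:task-error} over both sides yields $\eps_{\tau_U}(\Lambda)=\eps_{\tau_{\rm id}}(\Gamma_U(\Lambda))$. For endpoint tests I will add the explicit check from Eq.~\eqref{eq:unitary-test-map}: each outcome probability matches, and the ideal side reduces to $\Tr[M_x\rho]$. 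The measurability clause ensures the same complete record space is used, so no relabeling changes the TV value.

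For the optimized identity, I will apply the first identity to $P\Lambda$ for each admissible $P\in\mathfrak C_{a\to\tau_U}(B)$, which gives $\eps_{\tau_U}(P\Lambda)=\eps_{\tau_{\rm id}}(\operatorname{Ad}_{U^\dagger}\circ P\Lambda)$. By the assumed control-class correspondence, $P':=\operatorname{Ad}_{U^\dagger}\circ P$ is admissible for $\tau_{\rm id}$ at budget $B-c_U$. Taking infima then yields $E_{\tau_{\rm id},a,B-c_U}(\Lambda)\le E_{\tau_U,a,B}(\Lambda)$. For the reverse inequality, any identity control $P'$ at budget $B-c_U$ lifts to $P=\operatorname{Ad}_U\circ P'$ at budget $B$, with the certified cost decomposition and with $\Gamma_U(P\Lambda)=P'\Lambda$. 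The first identity then gives equal errors for these paired controls, so infimizing yields the opposite inequality. Only infima are compared, so no optimizer is needed, matching the convention after Eq.~\eqref{eq:budgeted-error}.

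For the baseline statement, I will apply the optimized identity to each $\Lambda\in\Free_a$ and then infimize over the free family, as in Eq.~\eqref{eq:free-baseline}. Frame covariance of the free family is invoked here only to make the two infimum domains correspond; strictly, the per-$\Lambda$ equality already lets both baselines be taken over the same $\Free_a$. When $c_U=0$, combining the two equalities in Eq.~\eqref{eq:resource-monotone} gives identical calibrated advantages.

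I expect the main obstacle to be the bookkeeping in the second step. I must ensure that the composed controls actually lie in the stated classes: they must respect the entrance cut, satisfy ideal transparency ($\operatorname{Ad}_{U^\dagger}\Xi_U=\mathrm{id}$), and meet the budget. These facts are not proved from the architecture but are exactly the hypotheses of the proposition. The proof will therefore state explicitly where each hypothesis is consumed, in particular the lift direction, rather than derive closure from Proposition~\ref{prop:closure}.
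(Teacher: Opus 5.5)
Your proposal is correct and follows the paper's proof essentially step for step. You use the bijective test correspondence and a supremum for the unoptimized error, the two-sided control correspondence with infima for $E$ (absorbing $\Gamma_U$ into $P'=\operatorname{Ad}_{U^\dagger}\circ P$ and its lift), and then infimize over the free family for the baseline. Your side remark is also right: $\Free_a$ is the same set for both tasks at the fixed interface $a$, so the per-process equality of $E$ already gives the baseline equality, and the free-family covariance hypothesis is not actually needed in this fixed-process version.
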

\begin{proof}
For each paired test, Eq.~\eqref{eq:unitary-test-covariance} gives equality
of the two TV distances.  The correspondence is bijective, so taking the
supremum proves the first equality in Eq.~\eqref{eq:fixed-output-capability-covariance}.
The assumed control correspondence pairs every feasible $P$ on the left with
a feasible identity-task control on the right, with exactly the residual
budget $B-c_U$, and vice versa.  Taking the infimum over the paired controls
proves the second equality.  The process on the right remains $\Lambda$;
only the output task and the control class have changed; the $\Gamma_U$ in the
error identity is absorbed into this paired control correspondence.  Applying
the same argument to the free family
proves Eq.~\eqref{eq:unitary-baseline-covariance}.  The exact budget shift
requires this two-sided, control-independent cost decomposition.  When a
target gate and noise control share an implementation, their costs can merge;
then the equality is replaced by the directly certified feasible controls or
the applicable one-sided bound.
\end{proof}

\subsection{Input--output and control transport}

\begin{theorem}[Input--output transport covariance]
\label{thm:unitary-task-covariance}
Let $F_{\rm in}:\CPT_{a_0}\to\CPT_{a_U}$ transport an input process and its
compatible ideal reference, and let $F_{\rm out}$ transport the output task,
with $F_{\rm out}\Xi_{\rm id}=\Xi_U$.  Assume the task experiments are
bijective under the corresponding input and output frame maps.  Suppose the
control classes are related by the two-sided intertwining relation
\begin{equation}
 P_U\circ F_{\rm in}=F_{\rm out}\circ P_0,
 \qquad
 P_0\in\mathfrak C_{a_0\to\tau_{\rm id}}(B),\quad
 P_U\in\mathfrak C_{a_U\to\tau_U}(B+c_U),
 \label{eq:unitary-control-intertwining}
\end{equation}
where both budgets belong to their respective feasible sets, every admissible control has a counterpart, and the displayed cost shift
is an exact certified decomposition, meaning
\(\Cost(P_U)=\Cost(P_0)+c_U\) for the paired implementations.  Then
\begin{equation}
 \boxed{
 E_{\tau_U,a_U,B+c_U}\!\left(F_{\rm in}\Lambda\right)
 =E_{\tau_{\rm id},a_0,B}(\Lambda).}
 \label{eq:unitary-transport-capability}
\end{equation}
If every identity-task control lifts with cost at most $c_U$ but the reverse
map is not available, the conclusion is only
\[
 E_{\tau_U,a_U,B+c_U}(F_{\rm in}\Lambda)
 \le E_{\tau_{\rm id},a_0,B}(\Lambda);
\]
the reverse one-sided control map gives the reverse inequality.  Exact
equality requires both directions.
When $c_U=0$, the transported tasks have identical optimized errors and,
provided the free family is transported as well, identical calibrated
advantages.
\end{theorem}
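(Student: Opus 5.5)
The proof will be a termwise transport of task errors, followed by a two-sided infimum comparison that reuses the pattern of Proposition~\ref{prop:fixed-output-covariance}.

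\emph{Step 1: fixed control.} Fix a paired couple $(P_0,P_U)$ satisfying Eq.~\eqref{eq:unitary-control-intertwining}. I will show
\[
 \eps_{\tau_U}(P_UF_{\rm in}\Lambda)=\eps_{\tau_{\rm id}}(P_0\Lambda).
\]
By intertwining, $P_UF_{\rm in}\Lambda=F_{\rm out}(P_0\Lambda)$, and the ideal side is $\Xi_U=F_{\rm out}\Xi_{\rm id}$. Bijectivity of the task experiments under the frame maps gives a correspondence $\mathcal J\leftrightarrow\mathcal J_U$ with
\[
 p^{\mathcal J_U}_{F_{\rm out}X}=p^{\mathcal J}_X
\]
on a common complete record space, after a fixed process-independent relabeling. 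For the endpoint case this is exactly Eq.~\eqref{eq:unitary-test-map}. Applying this with $X=P_0\Lambda$ and with $X=\Xi_{\rm id}$, the paired TV distances coincide test by test. Taking the supremum over the bijectively paired families then gives the displayed identity.

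\emph{Step 2: infima.}
\begin{itemize}
 \item \emph{Inequality $\le$.} For any $P_0\in\mathfrak C_{a_0\to\tau_{\rm id}}(B)$, the counterpart $P_U$ is admissible at $B+c_U$. Therefore
 \[
 E_{\tau_U,a_U,B+c_U}(F_{\rm in}\Lambda)\le\eps_{\tau_{\rm id}}(P_0\Lambda).
 \]
 Infimizing over $P_0$ gives the one-sided bound. This direction uses only the lift together with $\Cost(P_U)\le\Cost(P_0)+c_U$.
 \item \emph{Inequality $\ge$.} Every $P_U\in\mathfrak C_{a_U\to\tau_U}(B+c_U)$ has a counterpart $P_0$. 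The exact cost decomposition $\Cost(P_0)=\Cost(P_U)-c_U\le B$ places $P_0$ in $\mathfrak C_{a_0\to\tau_{\rm id}}(B)$, so the reverse inequality follows in the same way.
\end{itemize}
Combining the two gives Eq.~\eqref{eq:unitary-transport-capability}. The one-sided statements in the theorem are just these two halves taken separately. No optimizer is assumed, since only infima over the paired sets are compared.

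\emph{Step 3: the $c_U=0$ clause.} Here the budgets coincide. If $F_{\rm in}$ maps $\Free_{a_0}$ onto $\Free_{a_U}$, applying the Step~2 argument to free inputs gives equality of the baselines in Eq.~\eqref{eq:free-baseline}. Definition~\eqref{eq:resource-monotone} then gives equal calibrated advantages.

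\emph{Main obstacle.} The delicate point is Step~1's claim that the frame transport acts on complete record laws and not only on terminal states. This means checking that the test bijection keeps the same reference, the same earlier instruments, and the same record space, so that no separate optimization enters on either side. It also means checking that "$F_{\rm out}$ acting on the process" matches "pulled-back test acting on the untransported process" for every outcome, including any retained labels. I would state this as the precise content of the bijectivity hypothesis, verify it explicitly for the endpoint tests via Eq.~\eqref{eq:unitary-test-map}, and for multi-time testers apply the map only to the terminal factor, as described after that equation.
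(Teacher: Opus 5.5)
Your proposal is correct and follows the paper's own argument. You fix a paired control and use the intertwining relation together with the output-test correspondence to get $\eps_{\tau_U}(P_UF_{\rm in}\Lambda)=\eps_{\tau_{\rm id}}(P_0\Lambda)$; you then take infima over the control sets identified by the two-sided exact-budget correspondence, with each direction yielding one of the one-sided inequalities, and repeat the argument on the transported free family (requiring $F_{\rm in}(\Free_{a_0})=\Free_{a_U}$) for the baseline and calibrated-advantage clause, which the paper does more tersely.
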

\begin{proof}
For a paired control satisfying Eq.~\eqref{eq:unitary-control-intertwining},
the output experiment correspondence gives
\[
 \eps_{\tau_U}\!\left(P_UF_{\rm in}\Lambda\right)
 =\eps_{\tau_{\rm id}}(P_0\Lambda).
\]
The two-sided, exact-budget correspondence identifies the feasible control
sets after the shift $B\mapsto B+c_U$.  Taking infima proves
Eq.~\eqref{eq:unitary-transport-capability}; applying the same argument to
the transported free family proves the corresponding baseline statement.
\end{proof}

\subsection{Exact free baseline for unitary transmission}

The entrance-erasing architecture has a task-independent transmission
baseline along the entire unitary orbit.

\begin{proposition}[Unitary free baseline]
\label{prop:unitary-free-baseline}
For the unknown $d$-dimensional input task with complete reference-assisted
endpoint tests, suppose $B\in\mathcal B_{\tau_U,a}^{\mathrm{feas}}$ and a
free process together with a budget-$B$ target-transparent control realizes
the completely depolarizing endpoint channel.  Then every same-dimensional unitary target satisfies
\begin{equation}
 \boxed{\eps^{\mathrm F}_{\tau_U,a,B}=1-d^{-2}.}
 \label{eq:unitary-free-baseline}
\end{equation}
The stated achievability condition holds, for example, when the idle
maximally mixed replacement seed is supplied and a target-transparent
control fits the budget while leaving its endpoint output depolarizing.
\end{proposition}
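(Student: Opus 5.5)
The proof has two halves, mirroring Proposition~\ref{prop:replacement-baseline}: a lower bound valid for every free object and every admissible control, and an upper bound from the assumed depolarizing realization.

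\emph{Lower bound.} Fix $\Lambda\in\Free_a$ and $P\in\mathfrak C_{a\to\tau_U}(B)$. Every admissible control obeys the complete entrance-erasure cut of Definition~\ref{def:entrance-cut}. By Proposition~\ref{prop:free-preserve} and Eq.~\eqref{eq:cut-factorization}, $P\Lambda$ acts on any reference-assisted input as $\rho_{QR}\mapsto\omega\otimes\rho_R$, with all controller records included in $\omega$. Tracing out the records leaves a replacement channel $\mathcal R_\sigma$ at the endpoint. For a $d_a$-limit of generated objects we use Eq.~\eqref{eq:input-erasing-limit}.

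Next we use the output-frame test map Eq.~\eqref{eq:unitary-test-map} with the input $|\Omega_d\rangle$ and the two-outcome effect $M=(U\otimes I_R)|\Omega_d\rangle\langle\Omega_d|(U^\dagger\otimes I_R)$ together with its complement. The ideal output $\operatorname{Ad}_U\otimes\mathrm{id}(|\Omega_d\rangle\langle\Omega_d|)$ gives probability $1$. The replacement output $\sigma\otimes I_R/d$ gives
\[
\langle\Omega_d|(U^\dagger\sigma U\otimes I_R/d)|\Omega_d\rangle=d^{-2},
\]
since $\Tr(U^\dagger\sigma U)=1$. Hence $\eps_{\tau_U}(P\Lambda)\ge 1-d^{-2}$. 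Retained input-independent labels are ignored for this lower bound, as in the earlier proof. Taking infima over $\Lambda$ and $P$ gives $\eps^{\mathrm F}_{\tau_U,a,B}\ge 1-d^{-2}$.

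\emph{Upper bound.} By hypothesis, some free $\Lambda_0$ and some budget-$B$ transparent control $P_0$ have endpoint channel $\mathcal D_d=\mathcal R_{I/d}$. Unitary invariance of the diamond norm gives $\mathcal D_d=\operatorname{Ad}_U\circ\mathcal D_d$, so
\[
\tfrac12\|\mathcal D_d-\operatorname{Ad}_U\|_\diamond=\tfrac12\|\mathcal D_d-\mathrm{id}\|_\diamond=1-d^{-2}
\]
by Eq.~\eqref{eq:depolarizing-baseline-tight}. We also need that the task error over the declared complete-record tests $\Inst_U$ is bounded by this endpoint diamond distance once records are averaged out. This holds because the default complete record contains only logical outcomes and the terminal readout. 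If labels are retained, we need the hypothesis that the realization's label law matches the ideal task's label law, so that each branch is bounded separately. The two bounds together give Eq.~\eqref{eq:unitary-free-baseline}.

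Finally, the example clause says the idle maximally mixed seed with a fitting target-transparent control yields depolarized output. This follows because any control applied after complete erasure outputs $\mathcal R_{\sigma'}$. If the control leaves the output equal to $I/d$, the hypothesis is met.

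The main obstacle is the upper-bound bookkeeping. Feasibility of $B$ for $\tau_U$ may require paying for the gate, so the depolarizing realization must itself be an element of $\mathfrak C_{a\to\tau_U}(B)$, not an identity-task control. This is exactly why the proposition assumes achievability rather than deriving it. I would state explicitly that no covariance of control classes (Proposition~\ref{prop:fixed-output-covariance}) is used.
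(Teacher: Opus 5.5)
Your proof is correct and follows the paper's argument. The entrance cut reduces every free output to a replacement channel, the $U$-rotated Bell event $(U\otimes I)|\Omega_d\rangle$ has probability $d^{-2}$ against the ideal value $1$, and the assumed depolarizing realization attains $\tfrac12\|\mathcal D_d-\operatorname{Ad}_U\|_\diamond=1-d^{-2}$. One phrasing fix: the identity $\mathcal D_d=\operatorname{Ad}_U\circ\mathcal D_d$ comes from $U(I/d)U^\dagger=I/d$, and unitary invariance of the diamond norm is what then removes the outer $\operatorname{Ad}_U$.
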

\begin{proof}
Every complete entrance-erasing output is a replacement channel.  Let
$|\Omega_d\rangle=d^{-1/2}\sum_j|j\rangle|j\rangle$ and
$|\Omega_U\rangle=(U\otimes I)|\Omega_d\rangle$.  For any replacement state
$\sigma$, the Bell event associated with $|\Omega_U\rangle$ has probability
\begin{equation*}
 \langle\Omega_U|\left(\sigma\otimes I/d\right)|\Omega_U\rangle=d^{-2},
\end{equation*}
so the TV error is at least $1-d^{-2}$.  The completely depolarizing channel
$\mathcal D_d$ is a uniform mixture of $d^2$ Weyl conjugations and obeys
$\frac12\|\mathcal D_d-\mathrm{id}\|_\diamond=1-d^{-2}$.  By unitary invariance,
$\frac12\|\mathcal D_d-\operatorname{Ad}_U\|_\diamond$ has the same value.
 The assumed budget-feasible free process and target-transparent control
 attain this value.  In the idle-seed realization, the depolarizing output
 is invariant under the unitary output frame, and its diamond distance to
 $\operatorname{Ad}_U$ equals its distance to the identity.
 This proves the equality.
\end{proof}

\subsection{Direct physical separation for qubit unitary gates}

\begin{proposition}[Direct qubit realization for every unitary target]
\label{prop:direct-unitary-qubit}
For the two-qubit processes of Sec.~\ref{sec:strict-model}, let the target be
$\Xi_U=\operatorname{Ad}_U$ with $U\in U(2)$.  Let $W$ and $L$ be the
boundary encoding and decoding unitaries and let $V_1,V_2$ be the two window
endpoints.  Impose trajectory-wise ideal transparency in the form
\begin{equation}
 L V_2V_1W=e^{i\theta}U .
 \label{eq:unitary-qubit-transparency}
\end{equation}
Put $p=\sin^2(2gh)$, with the environment initially maximally mixed and
$H_E=0$.  For every allowed branch satisfying Eq.~\eqref{eq:unitary-qubit-transparency},
the process $B$ has controlled joint propagator
\begin{equation}
 U_{B,\mathrm{ctrl}}^{(U)}
 =L V_2V_1 e^{-i2gh Z_S\otimes Z_E}W
 =e^{i\theta}U W^\dagger e^{-i2gh Z_S\otimes Z_E}W,
 \label{eq:unitary-qubit-B-propagator}
\end{equation}
and hence reduced channel
\begin{equation}
 \mathcal E_{B}^{(U)}
 =\operatorname{Ad}_U\circ\left[(1-p)\,\mathrm{id}
   +p\,\operatorname{Ad}_{W^\dagger ZW}\right].
 \label{eq:unitary-qubit-B-channel}
\end{equation}
Its error relative to $\operatorname{Ad}_U$ is exactly $p$.  The same Bell
event gives the lower bound $p$ for every admissible random mixture of such
branches, so the value remains exact after randomization.

For process $A$, the choice
\begin{equation}
 W=I,\qquad V_1=-iX,\qquad V_2=iX,\qquad L=U
 \label{eq:unitary-qubit-A-control}
\end{equation}
gives
\begin{equation}
 U_{A,\mathrm{ctrl}}^{(U)}
 =U(iX)e^{-igh Z_S\otimes Z_E}(-iX)e^{-igh Z_S\otimes Z_E}
 =U.
 \label{eq:unitary-qubit-A-propagator}
\end{equation}
Assume that the fine and coarse target-$U$ architectures include the same
boundary pair and terminal gate, forbid input-bearing side wires, and admit
the displayed pulses whenever their declared costs fit a common budget $B$.
If the two finite pulses and the boundary gate $L=U$ fit that budget, then,
at the same budget,
\begin{equation}
 \boxed{
 E_{\tau_U,f,B}(\Upsilon_A)=0,\qquad
 E_{\tau_U,f,B}(\Upsilon_B)=p,\qquad
 E_{\tau_U,c,B}(\mathfrak G_{\rm end}\Upsilon_A)=p .}
 \label{eq:unitary-qubit-direct-separation}
\end{equation}
In the peak--action accounting of Eq.~\eqref{eq:window-budget}, the two
interior pulses contribute peak $\Omega/2$ and total action $\pi$; the
implementation cost of the boundary gate $U$ is added according to the
declared hardware model.  Equivalently, the displayed result applies whenever
the certified total cost of $L=U$ together with these two pulses is at most
$B$.  This is a sufficient budget certificate; joint gate-and-echo synthesis
can cost less, as shown below.
Here the first equality is attained by Eq.~\eqref{eq:unitary-qubit-A-control},
whereas the latter two follow from Eq.~\eqref{eq:unitary-qubit-B-channel}
and the same rotated Bell witness.  This is a direct physical statement for
the original processes $A$ and $B$ in the two-dimensional model; it does not
assume that all unitary-task control classes admit a universal exact budget
translation.  The general $d$-dimensional free baseline in
Proposition~\ref{prop:unitary-free-baseline} remains valid, while this
explicit echo construction establishes the target orbit $U(2)$.
\end{proposition}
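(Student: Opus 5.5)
The proof follows the template of Theorem~\ref{thm:strict-access}, with the terminal gate $L$ absorbing the target. First compute the $B$ propagator. In process $B$ the second coupling segment vanishes, so both windows act after the full phase $2gh$ has accumulated. The controlled propagator is therefore $LV_2V_1e^{-i2ghZ_SZ_E}W$. Substituting Eq.~\eqref{eq:unitary-qubit-transparency} in the form $LV_2V_1=e^{i\theta}UW^\dagger$ gives Eq.~\eqref{eq:unitary-qubit-B-propagator}. Tracing out the environment, initially $I_E/2$ with $H_E=0$, each $Z_E$ eigenbranch gives $e^{\mp i2ghW^\dagger ZW}$, and averaging the two branches yields Eq.~\eqref{eq:unitary-qubit-B-channel}.

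For the exact error, unitary invariance of the diamond norm reduces $\tfrac12\|\mathcal E_B^{(U)}-\operatorname{Ad}_U\|_\diamond$ to $\tfrac12\|p(\operatorname{Ad}_{N_W}-\mathrm{id})\|_\diamond$ with $N_W=W^\dagger ZW$. This is at most $p$. For the lower bound, use the rotated Bell event $|\Omega_U\rangle=(U\otimes I)|\Omega_2\rangle$ on the maximally entangled input. The ideal probability is $1$, while the actual probability is $1-p$, because $N_W$ is traceless and its Choi vector is orthogonal to $|\Omega_2\rangle$. For random mixtures of admissible branches, the lower bound uses the same event, ignoring any retained label, and the upper bound follows from convexity or summing over branches, exactly as in Theorem~\ref{thm:strict-access}.

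For $A$, multiply out Eq.~\eqref{eq:unitary-qubit-A-propagator} using $Xe^{-ighZ_SZ_E}X=e^{+ighZ_SZ_E}$. The phase factors $(i)(-i)=1$ give exactly $U$, so the error is $0$. Transparency holds because $LV_2V_1W=U(iX)(-iX)=U$. For the coarse value, the endpoint interface closes the windows, so only boundary pairs $(W,L)$ with $LW=e^{i\theta}U$ are admitted. Since $\Upsilon_A$ and $\Upsilon_B$ share the endpoint channel $\mathcal D_\phi$ by Eq.~\eqref{eq:strict-coarse-channel}, every coarse control produces a channel of the form \eqref{eq:unitary-qubit-B-channel}. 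The same argument then gives $E_{\tau_U,c,B}(\mathfrak G_{\rm end}\Upsilon_A)=p$.

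The main obstacle is the budget bookkeeping rather than the algebra. One must check that the $A$ control is admissible at budget $B$: the pulses have peak $\Omega/2$ and total action $\pi$, and $L=U$ carries its declared cost, so feasibility holds by the stated hypothesis. One must also check that the lower bounds for $B$ and for the coarse case hold for every admissible branch regardless of budget, which is true because transparency alone fixes the channel form. Infima therefore need no attainment argument beyond the explicit optimizer for $A$. I would state explicitly that the fine and coarse classes share the same boundary gate hardware, so the comparison is made at one common $B$.
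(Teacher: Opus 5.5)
Your proposal is correct and follows the paper's proof essentially step by step: the $B$ propagator obtained by solving the transparency condition for $LV_2V_1$, tracing the maximally mixed environment, the rotated Bell witness (orthogonality of the traceless $W^\dagger ZW$ Choi vector) together with the diamond-norm upper bound, linearity for random mixtures, and the explicit echo $U(iX)e^{-ighZ_SZ_E}(-iX)e^{-ighZ_SZ_E}=U$ for $A$. The one place you add detail is the coarse case. There you reduce $\operatorname{Ad}_L\circ\mathcal D_\phi\circ\operatorname{Ad}_W$ with $LW=e^{i\theta}U$ to the same channel form, a step the paper leaves implicit.
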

\begin{proof}
The process-$B$ expression follows because both coupling segments precede the
window controls, so their joint propagator is $e^{-i2gh Z_S\otimes Z_E}$.
Solving Eq.~\eqref{eq:unitary-qubit-transparency} for $LV_2V_1$ gives the
second equality in Eq.~\eqref{eq:unitary-qubit-B-propagator}.  Tracing the
maximally mixed environment then yields Eq.~\eqref{eq:unitary-qubit-B-channel}.
Since $W^\dagger ZW$ is a traceless qubit unitary, its Choi vector is
orthogonal to the identity Bell vector.  The Bell event associated with the
target $U$ therefore has probability $1-p$ for every branch, giving the
lower bound $p$; the diamond-norm triangle bound gives the matching upper
bound.  Linearity preserves this witness for a random mixture.

For process $A$, the chronological propagator is
$L V_2e^{-igh Z_S\otimes Z_E}V_1e^{-igh Z_S\otimes Z_E}W$.
Substitution of Eq.~\eqref{eq:unitary-qubit-A-control} and
$Xe^{-ighZZ}Xe^{-ighZZ}=I$ gives Eq.~\eqref{eq:unitary-qubit-A-propagator}.
The ideal-transparency condition is also satisfied because
$LV_2V_1W=U$.  Thus the displayed controls are feasible whenever their
declared pulse, boundary, and hardware costs fit $B$, proving the three
equalities in Eq.~\eqref{eq:unitary-qubit-direct-separation}.
\end{proof}

The target $U=-iX$ gives a sharper implementation within the same system-only
window architecture.  Set $W=L=V_2=I$ and apply only $V_1=-iX$ in the first
coupling-free window.  Ideal transparency now reads $LV_2V_1W=-iX$, and
\begin{equation}
 U_{A,\mathrm{ctrl}}^{(-iX)}
 =e^{-igh Z_S\otimes Z_E}(-iX)e^{-igh Z_S\otimes Z_E}
 =-iX .
 \label{eq:merged-gate-echo}
\end{equation}
The pulse takes duration $w=\pi/\Omega$, peak amplitude $\Omega/2$, and total
action $\pi/2$.  By Proposition~\ref{prop:echo-action-threshold}, perfect
identity transmission for $A$ in the paired-window architecture requires
total action $\pi$; separately implementing $-iX$ with the same pulse
hardware costs another $\pi/2$.  The single pulse in
Eq.~\eqref{eq:merged-gate-echo} performs both the gate and the echo with
action $\pi/2$.  Process $B$ retains error $p$ by
Eq.~\eqref{eq:unitary-qubit-B-channel}.  This exact implementation shows why
an additive, control-independent gate charge $c_U$ is an extra hypothesis in
the covariance statements above; physical gate and protection controls can
share the same pulse.

\subsection{Transporting the temporal-access separation}

The exact finite-window result is therefore a statement about a whole orbit of
logical tasks.  For the processes of Sec.~\ref{sec:strict-model}, define
\begin{equation}
 \Upsilon_A^{(U)}:=\operatorname{Ad}_U\circ\Upsilon_A,\qquad
 \Upsilon_B^{(U)}:=\operatorname{Ad}_U\circ\Upsilon_B,
 \label{eq:unitary-lifted-processes}
\end{equation}
with the same output frame applied to the ideal reference.  Then
$\Gamma_U(\Upsilon_X^{(U)})=\Upsilon_X$.  When the input and output frames
obey the two-sided intertwining condition of Theorem~\ref{thm:unitary-task-covariance},
the transported result is, for a source budget $B$,
\begin{equation}
 E_{f,B+c_U}^{(U)}(\Upsilon_A^{(U)})=0,\qquad
 E_{f,B+c_U}^{(U)}(\Upsilon_B^{(U)})=p,\qquad
 E_{c,B+c_U}^{(U)}(\mathfrak G_{\rm end}\Upsilon_A^{(U)})=p,
 \quad p=\sin^2(2gh).
 \label{eq:unitary-lifted-separation}
\end{equation}
Here $E_{f,B}^{(U)}:=E_{\tau_U,a_f,B}$ and
$E_{c,B}^{(U)}:=E_{\tau_U,a_c,B}$ denote the fine and coarse optima at the
indicated budget.  If only a zero-cost output frame is supplied while the
input process is held fixed, the fixed-process proposition applies instead,
and no $\Gamma_U$ is applied to the input of the optimized error.  For a
nonzero frame cost, the displayed equality requires the certified exact cost
decomposition; general subadditivity gives only the corresponding feasible
control bound.
The endpoint equality and the prediction lower bound are unchanged because
both processes are transformed by the same frame.  Thus the temporal
separation is a property of the internal access structure, rather than a
special feature of the identity label for the logical task.

\subsection{Scope for nonunitary tasks}

For a general CPTP gate, channel-simulation task, measurement, or multi-time
logical process, the definitions of $\eps_\tau$, $E_{\tau,a,B}$, and the
budgeted conversion order apply on the task's feasible budget set.  A task-specific output-frame map
can be used whenever it gives a bijection of complete experiment families and
preserves the entrance cut.  The free baseline must then be recomputed for the
new task and checked for compatibility across access layers.  If the baseline
is not preserved, Theorem~\ref{thm:monotonicity} still gives the ordering of
the optimized error $E$, but it does not by itself give monotonicity of the
calibrated advantage $R$.  CPRC performance certificates transfer to such a
task only when its uncoupled reference, instrument covariance, initial
independence, autonomous control law, and complete-record convention satisfy
the hypotheses stated in Sec.~\ref{sec:cprc}.  Each such task specifies its own free benchmark and control architecture. The unitary transmission family admits the common exact benchmark above and the stated transport correspondences.

\section{Task-independent implementation stability}
\label{sec:implementation-stability}

The complete-record metric also compares a nominal conversion with its physical implementation. This comparison is meaningful for any fixed logical task and includes controller, pulse, instrument, memory, and record errors. Environmental dynamics already assigned to the cPT remain inside the process object. Uniform implementation estimates transfer directly to optimized task error, including the unitary targets of Sec.~\ref{sec:task-covariance} when their physical costs fit the declared budget.

\subsection{Complete-record implementation discrepancies}

\begin{proposition}[Implementation discrepancy]
\label{prop:implementation-stability}
Let $P$ be a certified ideal implementation and $\widetilde P$ a physical implementation with the same logical output record space.  Define
\begin{equation}
 \delta_\tau(\widetilde P,P;\Lambda)
 :=\sup_{\mathcal J\in\Inst_\tau}
 \TV\!\left(p_{\widetilde P\Lambda}^{\mathcal J},
 p_{P\Lambda}^{\mathcal J}\right).
\end{equation}
If $\eps_\tau(P\Lambda)\le u_P$, then
\begin{equation}
 \eps_\tau(\widetilde P\Lambda)
 \le\min\{1,u_P+\delta_\tau(\widetilde P,P;\Lambda)\}.
 \label{eq:implementation-certificate}
\end{equation}
For a finite network of full record-preserving CPTP blocks, half-diamond discrepancies $\delta_j$ for corresponding blocks give $\delta_\tau\le\sum_j\delta_j$, uniformly over the declared task testers, provided all blocks retain every system carrying correlations across a block boundary, including environment and instrument memories and records.  A local control block may treat untouched systems as diamond-norm references.  For a block containing system--environment evolution, the environment output is retained for the next block; its reduced system endpoint channel is not substituted for that full block.  Internal insertions use the open-strategy or retained-generator topology of Proposition~\ref{prop:regular-pullback}.
\end{proposition}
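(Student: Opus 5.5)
The proposition has two parts, and I would prove them separately. The first is the certificate in Eq.~\eqref{eq:implementation-certificate}. The second is the additive block bound $\delta_\tau\le\sum_j\delta_j$.

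For the certificate, fix a task tester $\mathcal J\in\Inst_\tau$. The three laws $p_{\widetilde P\Lambda}^{\mathcal J}$, $p_{P\Lambda}^{\mathcal J}$ and $p_{\Xi_\tau}^{\mathcal J}$ all live on the same complete record space. This is guaranteed by the common-record convention of the task interface and by the hypothesis that $\widetilde P$ and $P$ share a logical output record space. The triangle inequality for $\TV$ therefore gives
\[
\TV\!\left(p_{\widetilde P\Lambda}^{\mathcal J},p_{\Xi_\tau}^{\mathcal J}\right)\le
\TV\!\left(p_{P\Lambda}^{\mathcal J},p_{\Xi_\tau}^{\mathcal J}\right)+
\TV\!\left(p_{\widetilde P\Lambda}^{\mathcal J},p_{P\Lambda}^{\mathcal J}\right)
\le u_P+\delta_\tau(\widetilde P,P;\Lambda).
\]
Taking the supremum over $\mathcal J$ bounds $\eps_\tau(\widetilde P\Lambda)$ by the same quantity. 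Since $\TV\le1$, we may also take the minimum with $1$. No optimizer or attainment is needed.

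For the block bound, I would unfold one fixed tester $\mathcal J$ together with the implementation network and $\Lambda$ into a finite sequence of CPTP maps $\mathcal B_N\circ\cdots\circ\mathcal B_1$. Every wire that carries correlations across a block boundary is retained in this sequence: environment, instrument memory, records, and the tester reference. The terminal POVM follows the last map. Untouched systems, including the process's own environment, enter a given block tensored with the identity. The ideal and physical networks differ only in the replaced blocks $\mathcal B_j$ versus $\widetilde{\mathcal B}_j$. I would then use the standard telescoping hybrid argument. Write the difference of the two composed maps as $\sum_j$ of terms in which the first $j-1$ blocks are physical, block $j$ is the difference, and the remaining blocks are ideal. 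For each term:
\begin{itemize}
\item blocks after $j$ are CPTP and contract the trace norm;
\item blocks before $j$ output a normalized state on an extended space;
\item the difference at block $j$, applied to that state on the extended space, is bounded by $\|\widetilde{\mathcal B}_j-\mathcal B_j\|_\diamond$, since the diamond norm allows arbitrary ancillary extension.
\end{itemize}
The terminal measurement with its full classical record is itself CPTP into a classical register, so $\TV$ equals half the trace distance of the final record states. Summing the terms gives $\TV\le\sum_j\tfrac12\|\widetilde{\mathcal B}_j-\mathcal B_j\|_\diamond=\sum_j\delta_j$. This bound is uniform in $\mathcal J$, so it passes to the supremum defining $\delta_\tau$.

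The main obstacle is justifying the diamond bound in the middle step: the state entering block $j$ may be correlated with the environment, memories and reference. I would handle this by invoking the retention hypothesis explicitly. Because every correlated system is kept as an output wire, block $j$ acts as $\mathcal B_j\otimes\mathrm{id}$ on a genuine joint state, and the diamond norm applies to exactly that situation. If a reduced endpoint channel were substituted instead, the factorization would fail, and so would the bound. For internal insertions and continuous limits, I would take the same decomposition within the open-strategy or retained-generator topology referenced in the statement, and rely on continuity of the Born functional to pass the inequality to the admissible limit.
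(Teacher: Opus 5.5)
Your proposal is correct and follows the same route as the paper. For Eq.~\eqref{eq:implementation-certificate} you use the per-tester TV triangle inequality followed by the supremum, and for $\delta_\tau\le\sum_j\delta_j$ you replace blocks one at a time and telescope, using CPTP trace-norm contraction and the reference-assisted diamond bound on the retained joint state. The paper also notes that adaptive blocks need the discrepancy bound uniformly in the past record, without normalizing a selected branch; your treatment of full CPTP blocks acting on the retained joint record register already covers this.
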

\begin{proof}
The triangle inequality for each common record law, followed by the same supremum over $\Inst_\tau$, proves Eq.~\eqref{eq:implementation-certificate}.  Replace the blocks one at a time.  Trace-distance contraction by each preceding and subsequent complete channel and the reference-assisted diamond bound give $\delta_j$ per replacement, and telescoping gives the sum.  For adaptive blocks, the bound is required uniformly in the past record; they act as complete CPTP maps on the joint record register.  This argument never normalizes a selected measurement branch.
\end{proof}

A uniform Hamiltonian implementation perturbation on a bounded joint dilation can instead be estimated by the Duhamel bound proved below.  A realized control belongs to the exact resource conversion class only if it still satisfies the declared target, cut, and budget requirements.  Otherwise Eq.~\eqref{eq:implementation-certificate} is a performance certificate for that realization, rather than a claim of free conversion.  It separates control imperfections from the cited CPRC dynamical error without extending that theorem's autonomy or covariance assumptions.

\begin{lemma}[Stability of budgeted task optima under paired implementations]
\label{lem:paired-implementation-stability}
Fix a task $\tau$ and a budgeted control class $\mathfrak C$. Suppose every nominal implementation $P\in\mathfrak C$ has a corresponding physical implementation $\widetilde P$ with the same logical record interface and a uniform complete-record discrepancy at most $r$. Define $E^{\rm nom}:=\inf_{P\in\mathfrak C}\eps_\tau(P\Lambda)$ and $E^{\rm impl}:=\inf_{P\in\mathfrak C}\eps_\tau(\widetilde P\Lambda)$. Then
\begin{equation}
 |E^{\rm impl}-E^{\rm nom}|\le r.
 \label{eq:paired-implementation-stability}
\end{equation}
The statement compares the paired implemented controls. It does not assert that an imperfect implementation belongs to the exact free-conversion class.
\end{lemma}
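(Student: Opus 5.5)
The argument is a uniform-bound-then-infimum comparison, the same pattern as Lemma~\ref{lem:capability-stability}, with the process discrepancy replaced by the implementation discrepancy of Proposition~\ref{prop:implementation-stability}.

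\emph{Step 1 (pointwise bound).} Fix any nominal $P\in\mathfrak C$ and its paired $\widetilde P$. For each task tester $\mathcal J\in\Inst_\tau$, apply the TV triangle inequality on the common complete record space. Both outputs share the logical record interface, so the ideal law $p_{\Xi_\tau}^{\mathcal J}$ is the same on both sides. This gives
\[
 \TV\bigl(p_{\widetilde P\Lambda}^{\mathcal J},p_{\Xi_\tau}^{\mathcal J}\bigr)
 \le\TV\bigl(p_{P\Lambda}^{\mathcal J},p_{\Xi_\tau}^{\mathcal J}\bigr)
 +\TV\bigl(p_{\widetilde P\Lambda}^{\mathcal J},p_{P\Lambda}^{\mathcal J}\bigr).
\]
Taking the supremum over $\mathcal J$ yields $\eps_\tau(\widetilde P\Lambda)\le\eps_\tau(P\Lambda)+\delta_\tau(\widetilde P,P;\Lambda)$, which is at most $\eps_\tau(P\Lambda)+r$. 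The symmetric inequality, with the roles of $P$ and $\widetilde P$ exchanged, holds by the same argument. Hence $|\eps_\tau(\widetilde P\Lambda)-\eps_\tau(P\Lambda)|\le r$ for every paired control. This is exactly Eq.~\eqref{eq:implementation-certificate} without the cap at $1$.

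\emph{Step 2 (pass to infima).} Both infima range over the same index set $\mathfrak C$, since the pairing $P\mapsto\widetilde P$ is indexed by the nominal class. From $\eps_\tau(\widetilde P\Lambda)\le\eps_\tau(P\Lambda)+r$ we get $E^{\rm impl}\le\eps_\tau(P\Lambda)+r$ for every $P$, and hence $E^{\rm impl}\le E^{\rm nom}+r$. The reverse inequality $E^{\rm nom}\le E^{\rm impl}+r$ follows symmetrically. No optimizer is needed on either side, and all quantities lie in $[0,1]$, so the arithmetic is finite.

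The only delicate point is bookkeeping rather than analysis. We must check that $\delta_\tau$ is evaluated on the same tester family $\Inst_\tau$ and the same record space as $\eps_\tau$, so that the triangle inequality is taken on one measure space with one fixed ideal law. The "same logical record interface" hypothesis is precisely what guarantees this. We also note explicitly that the conclusion compares paired implementations and makes no claim that $\widetilde P$ lies in $\mathfrak C$, as the statement says.
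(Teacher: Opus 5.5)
Your proposal is correct and follows the paper's own proof: a per-pair TV triangle inequality gives $|\eps_\tau(\widetilde P\Lambda)-\eps_\tau(P\Lambda)|\le r$, and you then take infima over the common index set $\mathfrak C$ in each direction, without needing an optimizer. Your bookkeeping about the shared tester family, record space and ideal law is exactly what makes that triangle inequality legitimate.
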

\begin{proof}
The TV triangle inequality gives $|\eps_\tau(\widetilde P\Lambda)-\eps_\tau(P\Lambda)|\le r$ for every pair. Taking an infimum on both sides in each direction proves the claim.
\end{proof}

\subsection{Coherent implementation bounds and process-model stability}

\begin{lemma}[Unitary-channel perturbation bound]
\label{lem:unitary-channel-perturbation}
For finite-dimensional unitaries $U$ and $V$,
\begin{equation}
 \frac12\left\|\operatorname{Ad}_U-\operatorname{Ad}_V\right\|_\diamond
 \le \|U-V\|_\infty .
 \label{eq:unitary-channel-perturbation}
\end{equation}
If $U(t)$ and $V(t)$ are generated by bounded Hamiltonians $H(t)$ and $\widetilde H(t)$, then
\begin{equation}
 \frac12\left\|\operatorname{Ad}_{U(T)}-\operatorname{Ad}_{V(T)}\right\|_\diamond
 \le \int_0^T\|H(t)-\widetilde H(t)\|_\infty dt .
 \label{eq:unitary-duhamel-channel}
\end{equation}
\end{lemma}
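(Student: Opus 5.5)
The first inequality, Eq.~\eqref{eq:unitary-channel-perturbation}, will be reduced to a trace-norm bound on pure states. The diamond norm is attained on pure inputs $|\psi\rangle$ of $S\otimes R$ with $R$ a copy of $S$. Set $|u\rangle=(U\otimes I)|\psi\rangle$ and $|v\rangle=(V\otimes I)|\psi\rangle$. Then
\[
 \bigl\|(\operatorname{Ad}_U-\operatorname{Ad}_V)\otimes\mathrm{id}(|\psi\rangle\langle\psi|)\bigr\|_1
 =\bigl\||u\rangle\langle u|-|v\rangle\langle v|\bigr\|_1 .
\]
The following bound is elementary and I will use it directly. Write
\[
 |u\rangle\langle u|-|v\rangle\langle v|=|u\rangle(\langle u|-\langle v|)+(|u\rangle-|v\rangle)\langle v| .
\]
Each term is rank one with trace norm $\||u\rangle-|v\rangle\|$, so the left side is at most $2\||u\rangle-|v\rangle\|$.

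It remains to bound the vector difference: $\||u\rangle-|v\rangle\|=\|((U-V)\otimes I)|\psi\rangle\|\le\|U-V\|_\infty$. Taking the supremum over $|\psi\rangle$ and halving gives Eq.~\eqref{eq:unitary-channel-perturbation}. I could instead use the sharper fidelity formula $\sqrt{1-|\langle u|v\rangle|^2}$, but it is not needed here.

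For Eq.~\eqref{eq:unitary-duhamel-channel}, I will apply the first inequality with $U(T),V(T)$. This reduces the claim to $\|U(T)-V(T)\|_\infty\le\int_0^T\|H-\widetilde H\|_\infty dt$, which follows from the standard Duhamel interaction-picture identity. Consider $W(t)=V(t)^\dagger U(t)$. Differentiating gives
\[
 \dot W=iV^\dagger\widetilde H U-iV^\dagger H U=-iV^\dagger(H-\widetilde H)U ,
\]
so
\[
 U(T)-V(T)=V(T)\bigl(W(T)-I\bigr)=-iV(T)\int_0^TV(t)^\dagger\bigl(H(t)-\widetilde H(t)\bigr)U(t)\,dt .
\]
Unitary invariance of the operator norm and the triangle inequality for the Bochner integral then give the bound.

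The only technical point, which is the main obstacle, is regularity. For merely bounded measurable (essentially bounded, integrable) Hamiltonians, the propagators are absolutely continuous solutions of $\dot U=-iHU$ in the Carathéodory sense. The product rule then holds almost everywhere, and $W$ is absolutely continuous, so the fundamental theorem of calculus applies. I will state this and note that in finite dimensions the solutions exist and are unique under the integrability assumption already used in Lemma~\ref{lem:window-closure}. Global phases cause no issue because the right side bounds the left for the chosen representatives, and $\operatorname{Ad}$ is phase-insensitive.
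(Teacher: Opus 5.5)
Your proposal is correct and follows essentially the paper's route: you use the same telescoping split of $\operatorname{Ad}_U-\operatorname{Ad}_V$ into two terms, each bounded by $\|U-V\|_\infty$, then the Duhamel bound on $\|U(T)-V(T)\|_\infty$. The differences are minor: you work on pure inputs where the paper treats general $\rho$, and you derive Duhamel explicitly via $W=V^\dagger U$ where the paper only cites it; your implicit normalization $U(0)=V(0)=I$ is the intended meaning of ``generated by.''
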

\begin{proof}
For every reference system and state $\rho$, the channel difference is
$(U-V)\rho U^\dagger+V\rho(U^\dagger-V^\dagger)$.  Its trace norm is at most $2\|U-V\|_\infty$, uniformly in the reference dimension and input state.  Taking the diamond supremum proves Eq.~\eqref{eq:unitary-channel-perturbation}.  Duhamel's formula gives $\|U(T)-V(T)\|_\infty\le\int_0^T\|H(t)-\widetilde H(t)\|_\infty dt$, which proves Eq.~\eqref{eq:unitary-duhamel-channel}.
\end{proof}

\begin{proposition}[Stability under a transparency tolerance]
\label{prop:transparency-tolerance}
For the same reference processes, task, supports, and budget caps, relax
only the fine window-pair condition to
\begin{equation}
 \inf_{\theta}\|V_2V_1-e^{i\theta}I\|_\infty\le\delta
 \label{eq:transparency-tolerance}
\end{equation}
where $\theta$ ranges over real phases, on each control trajectory,
keeping paired boundary controls and all other
restrictions.  Denote the resulting performance optima by $E_f^\delta$.
Then
\begin{equation}
 E_f^\delta(A)=0,\qquad
 [p-\delta]_+\le E_f^\delta(B)\le p.
 \label{eq:tolerance-gap}
\end{equation}
For the common window perturbation considered below, the same enlarged
class satisfies
\begin{equation}
 E_f^\delta(B_{\rm actual})-E_f^\delta(A_{\rm actual})
 \ge p-\delta-4|\eta|w.
 \label{eq:tolerance-robust-gap}
\end{equation}
Thus the strict separation persists for $\delta+4|\eta|w<p$.
These are performance bounds for a relaxed control class; the resource
conversions defined above retain exact ideal transparency.
\end{proposition}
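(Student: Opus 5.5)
The plan is to treat the relaxed class as a superset of the exact paired class and to control the $\delta$-slack by the unitary perturbation bound of Lemma~\ref{lem:unitary-channel-perturbation}. First, the two upper bounds. Every trajectory with $V_2V_1=e^{i\theta}I$ satisfies Eq.~\eqref{eq:transparency-tolerance}, so the fine class of Theorem~\ref{thm:strict-access} is contained in the $\delta$-class at the same supports and caps. Hence $E_f^\delta\le E_f$. The echo of Eq.~\eqref{eq:strict-A-optimum}, with $W=I$, $V=-iX$ and the reverse pulse, remains admissible and gives $E_f^\delta(A)=0$; nonnegativity of errors gives equality. The exact paired controls give $E_f^\delta(B)\le p$.

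For the lower bound on $B$, I fix a deterministic branch with paired boundary pair $(W,W^\dagger)$, up to phase as in the proof of Theorem~\ref{thm:strict-access}, and window endpoints $V_1,V_2$. As in Eq.~\eqref{eq:strict-B-general-endpoints}, both coupling segments precede the windows, so the joint propagator is $W^\dagger V_2V_1e^{-i2ghZ_SZ_E}W$. I choose $\theta$ attaining the infimum in Eq.~\eqref{eq:transparency-tolerance} and compare this propagator with the exact paired one, $e^{i\theta}W^\dagger e^{-i2ghZ_SZ_E}W$. The operator-norm difference is $\|V_2V_1-e^{i\theta}I\|_\infty\le\delta$, by unitary invariance. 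Lemma~\ref{lem:unitary-channel-perturbation}, applied on the joint $S\otimes E$ space and followed by the partial trace over the fixed environment state $I_E/2$, which is contractive, bounds the half-diamond distance between the two reduced channels by $\delta$. The exact channel is $(1-p)\mathrm{id}+p\,\mathcal N_W$, and its Bell event has probability $1-p$. By Eq.~\eqref{eq:event-bound}, the branch Bell probability is therefore at most $1-p+\delta$. Because this bound holds for a single fixed event uniformly over branches, it survives random mixtures, and also retained labels if the label is ignored. Hence every admitted control has error at least $[p-\delta]_+$, which gives Eq.~\eqref{eq:tolerance-gap}.

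For Eq.~\eqref{eq:tolerance-robust-gap}, I will take the common window perturbation to be an added commuting coupling of strength $\eta$, of the form $\eta Z_SZ_E$ with unit-norm operator, acting during both windows of length $w$ in both processes. For each fixed control trajectory, I compare the actual joint propagator with the coupling-free reference propagator under the same control field. In the interaction (frame-rotated) picture the extra term is $\widehat F^\dagger(\eta Z_SZ_E)\widehat F$, whose norm is $|\eta|$ whatever the control. Duhamel's bound in Lemma~\ref{lem:unitary-channel-perturbation} over the total window time $2w$ then gives a half-diamond discrepancy at most $2|\eta|w$, uniformly over all admitted controls and branches. Lemma~\ref{lem:paired-implementation-stability}, or equivalently Lemma~\ref{lem:capability-stability} applied per control, gives $E_f^\delta(A_{\rm actual})\le 2|\eta|w$ and $E_f^\delta(B_{\rm actual})\ge p-\delta-2|\eta|w$. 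Subtracting these two bounds yields the stated gap.

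I expect the main obstacle to be this uniformity step. The window perturbation is rotated by the control itself, so the bound must not depend on the waveform, and it must hold on the joint space with the persistent environment instead of on reduced channels. If the perturbation model defined later is not of unit norm, or acts on only one window, the constant $4$ changes accordingly. I would first verify from that model that the per-process discrepancy is $2|\eta|w$, and then rely on unitary invariance of $\|\cdot\|_\infty$ to remove the control dependence.
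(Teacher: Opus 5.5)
Your proposal is correct and follows the paper's proof essentially step for step. Containment of the exact paired class gives $E_f^\delta(A)=0$ and $E_f^\delta(B)\le p$. The $\delta$-close paired comparator, together with Lemma~\ref{lem:unitary-channel-perturbation}, the environment trace and the common Bell witness, gives $[p-\delta]_+$ for $B$, including for mixtures. Applying the uniform $2|\eta|w$ window discrepancy to both processes yields $p-\delta-4|\eta|w$. The frame-rotated picture is not needed for that last bound: the laboratory Hamiltonian difference $\eta Z_SZ_E$ already has norm $|\eta|$ independently of the control waveform.
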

\begin{proof}
For $B$, write $C=V_2V_1$ in
Eq.~\eqref{eq:strict-B-general-endpoints}.
Up to a scalar phase its joint propagator is within operator norm $\delta$
of the paired comparator with $C=I$.
Lemma~\ref{lem:unitary-channel-perturbation}, followed by common boundary
conjugation and environment trace, bounds the channel discrepancy by
$\delta$ in half-diamond distance.
For mixtures the same bound holds against the mixture of their paired
comparators, whose error is $p$ by the common Bell witness.
The TV triangle inequality therefore gives $E_f^\delta(B)\ge[p-\delta]_+$.
The exact paired echo remains feasible for $A$, while a paired control
remains feasible for $B$, giving the other bounds.
The common window perturbation has integrated norm at most $2|\eta|w$
uniformly over this enlarged control class.
Applying the same discrepancy bound to both processes gives
Eq.~\eqref{eq:tolerance-robust-gap}.
\end{proof}

As a change of the declared physical process, add the same coupling to both control windows and write its Hamiltonian as $H_{\mathrm{ref}}+\Delta H$.  If the allowed class obeys the uniform bound
\begin{equation}
 \int_0^T\|\Delta H(t)\|_\infty dt\le r,
 \label{eq:strict-perturbation}
\end{equation}
then Lemma~\ref{lem:unitary-channel-perturbation} and trace-norm contractivity imply, uniformly over $u\in\mathfrak C_{\mathrm{win}}$,
\begin{equation}
 \frac12\left\|\mathcal E_{\mathrm{actual}}^u-\mathcal E_{\mathrm{ref}}^u\right\|_\diamond\le r,
 \qquad
 |E_{\mathrm{actual}}^*-E_{\mathrm{ref}}^*|\le r.
 \label{eq:strict-stability}
\end{equation}
For a common window coupling $\eta Z_S\otimes Z_E$ in both processes, all added terms commute with the two $ZZ$ coupling segments.  Hence the two actual uncontrolled endpoint channels remain exactly identical, with total phase $2gh+2\eta w$.  One has $r_A=r_B\le2|\eta|w$, and therefore
\begin{equation}
 E_B^*-E_A^*\ge\sin^2(2gh)-4|\eta|w.
 \label{eq:strict-robust-gap}
\end{equation}
The modified cPTs have identical coarse channels because the added interaction is common and commutes with the original coupling.  More explicitly, both actual uncontrolled channels are $\mathcal D_{2gh+2\eta w}$, and the compensated coarse optimum is exactly $\sin^2(2gh+2\eta w)$ by the same Bell witness.  For $g=0.5$, $h=1$, $\Omega=20$, and $|\eta|=0.1$, conservative rounding gives $E_A^*\le0.03142$, $E_B^*\ge0.67665$, and a certified gap of at least $0.6452$.  A sufficient strict-separation condition is $\sin^2(2gh)>4|\eta|w$; with $w=\pi/\Omega$ it becomes $|\eta|<\Omega\sin^2(2gh)/(4\pi)$.  For a perturbation of another operator form, the uniform stability bound remains valid, but exact equality of the two actual coarse channels would require a separate commutation and equal-phase check.

The perturbation estimate also holds uniformly for boundary pairs and their random mixtures.  Thus the same process $A$ retains a strict fine-versus-coarse advantage,
\begin{equation}
 E_c(GA_{\mathrm{actual}})-E_f(A_{\mathrm{actual}})
 \ge p-4|\eta|w.
 \label{eq:robust-access-gap}
\end{equation}
In the commuting perturbation model above, the coarse error for $A$ is known exactly.  Combining this value with $E_f(A_{\mathrm{actual}})\le2|\eta|w$ also gives the sharper, phase-resolved certificate
\begin{equation}
 E_c(GA_{\mathrm{actual}})-E_f(A_{\mathrm{actual}})
 \ge \sin^2(2gh+2\eta w)-2|\eta|w,
 \label{eq:phase-resolved-access-gap}
\end{equation}
whenever the right-hand side is positive.  Equation~\eqref{eq:robust-access-gap} remains the uniform reference-based bound used for the two-process comparison.
The proof compares both classes with their reference optima in Theorem~\ref{thm:strict-access}.  The two actual uncontrolled channels are still exactly equal; only the temporal allocation of the interaction distinguishes their control opportunities.  Consequently, Proposition~\ref{prop:endpoint-prediction} also gives the robust endpoint-only prediction lower bound $[p-4|\eta|w]_+/2$.

\subsection{Control-induced dissipative discrepancies}
\label{sec:dissipative-stability}

The cPT itself already contains the environmental action, including any
environment-induced dissipation and memory that belongs to the declared
process.  The present extension concerns a different comparison: a physical
control or instrument implementation is compared with a certified reference
implementation of that same cPT.  Thus $\Delta\mathcal L$ below denotes
control-induced damping, leakage, calibration error, or an analogous hardware
discrepancy.  An environmental dissipator that is part of the modeled cPT must
instead be included in $\mathcal L^{X,0}$ and in the process object itself.
The norm must be evaluated on the full retained system--environment--instrument
space; an endpoint system channel is not a replacement for an internal block.
Let $\mathcal L^{X,0}_{u,t}$ be the reference time-local generator for
process $X\in\{A,B\}$ and control branch $u$, and let
\begin{equation}
 \mathcal L^{X}_{u,t}=\mathcal L^{X,0}_{u,t}+\Delta\mathcal L^{X}_{u,t}
 \label{eq:dissipative-generator-perturbation}
\end{equation}
generate time-local CPTP propagators on that retained space, including every memory and record register crossing a block boundary. The Duhamel argument uses contractivity of each forward propagator. Both generator families have integrable bounded differences on the common retained space. Leakage is represented by additional retained levels with a common target and readout, so the full channels remain trace preserving. Define
\begin{equation}
 r_X:=\sup_{u\in\mathfrak C_{a_f\to\tau}(B)}
 \frac12\int_0^T
 \|\Delta\mathcal L^{X}_{u,t}\|_\diamond\,dt .
 \label{eq:dissipative-perturbation-radius}
\end{equation}
The supremum includes paid boundary controls and their admitted mixtures, with a bound uniform on each trajectory. For the displayed window integral, $T=2h+2w$ and boundary implementations are exact; if their implementation also varies, its certified discrepancy is added to $r_X$. The comparison uses the same physical task and record interface.

\begin{proposition}[Control-induced dissipative stability of the temporal gap]
\label{prop:dissipative-stability}
Under Eqs.~\eqref{eq:dissipative-generator-perturbation}--\eqref{eq:dissipative-perturbation-radius}, assume the bound holds uniformly for every nominal trajectory and the paired implemented controls have the same task interface and budget. The optimized fine errors of these paired realized controls obey
\begin{equation}
 \left|E_X^{\rm impl}-E_X^*\right|\le r_X,
 \qquad X\in\{A,B\},
 \label{eq:dissipative-optimality-stability}
\end{equation}
Here $E_X^{\rm impl}$ is the infimum over the implemented controls paired with the declared window class, and $E_A^*=0$ and $E_B^*=p$ are the reference values of
Theorem~\ref{thm:strict-access}.  Consequently,
\begin{equation}
 \boxed{
 E_B^{\rm impl}-E_A^{\rm impl}
 \ge p-r_A-r_B .}
 \label{eq:dissipative-gap}
\end{equation}
For complete finite-duration records with internal interventions, a uniform half-strategy distance between the corresponding entire open blocks gives the same conclusion; the generator integral in Eq.~\eqref{eq:dissipative-perturbation-radius} applies directly only when controls are inserted between complete blocks or when the retained joint generator includes those interventions.  General dissipative
perturbations need not preserve the equality of the two uncontrolled coarse
channels; that equality requires an additional commutation or covariance
condition.
\end{proposition}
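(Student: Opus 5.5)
The plan is to reduce the statement to Lemma~\ref{lem:paired-implementation-stability}. The work is in showing that the generator radius $r_X$ of Eq.~\eqref{eq:dissipative-perturbation-radius} bounds the uniform complete-record discrepancy between each nominal control $u\in\mathfrak C_{a_f\to\tau}(B)$ and its paired implementation. Once that is established, Eq.~\eqref{eq:dissipative-optimality-stability} follows immediately. The gap bound Eq.~\eqref{eq:dissipative-gap} then comes from inserting the reference values of Theorem~\ref{thm:strict-access}: $E_B^{\rm impl}\ge p-r_B$ and $E_A^{\rm impl}\le 0+r_A$.

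The core step is a Duhamel estimate for time-local CPTP propagators on the full retained system--environment--instrument space.

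Fix a trajectory $u$ and write $\mathcal T^0_{t,s}$ and $\mathcal T_{t,s}$ for the reference and perturbed propagators. The interpolation identity is
\begin{equation*}
\mathcal T_{T,0}-\mathcal T^0_{T,0}=\int_0^T \mathcal T_{T,t}\,\Delta\mathcal L^X_{u,t}\,\mathcal T^0_{t,0}\,dt .
\end{equation*}
It holds because both families are generated by bounded integrable generators on a finite-dimensional retained space. Each propagator is CPTP even after tensoring with an arbitrary reference, so both flanking factors contract the trace norm. This gives
\begin{equation*}
\tfrac12\|\mathcal T_{T,0}-\mathcal T^0_{T,0}\|_\diamond\le \tfrac12\int_0^T\|\Delta\mathcal L^X_{u,t}\|_\diamond\,dt\le r_X .
\end{equation*}

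Next, the paired complete channels (exact boundary maps, terminal readout, environment trace, and record coarse-graining) are obtained from these retained propagators by common CPTP pre- and post-processing. Trace-norm contractivity therefore transfers the bound to every reference-assisted tester in $\Inst_\tau$. Proposition~\ref{prop:implementation-stability} then shows that $\delta_\tau(\widetilde P_u,P_u;\Upsilon_X)\le r_X$ uniformly. Random mixtures are handled by pairing branch by branch: the worst-branch bound in the supremum over $u$ controls the TV distance of the mixture, by convexity of the TV distance.

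The main obstacle is making sure the perturbation is applied to whole retained blocks, not to reduced endpoint channels. I would enforce this by treating each coupling segment and each window as a block that keeps the environment qubit and any apparatus memory, exactly as the statement requires. The retained environment is essential here: the echo for $A$ relies on coherence of the persistent $Z_E$, so a reduced-channel estimate would not be valid.

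For the final remark on finite-duration records with internal interventions, I would replace the generator integral by the assumed uniform half-strategy distance between the open blocks. The same telescoping argument then goes through, with one replacement per block, as in Proposition~\ref{prop:implementation-stability}.

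The closing caveat, that the uncontrolled coarse channels need not stay equal, needs no proof. It suffices to note that a generic $\Delta\mathcal L$ that does not commute with $Z_SZ_E$ breaks the phase-addition argument behind Eq.~\eqref{eq:strict-coarse-channel}.
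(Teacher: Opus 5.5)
Your proposal is correct and follows essentially the same route as the paper. The Duhamel identity on the full retained space, with CPTP flanking propagators of unit diamond norm, gives the per-trajectory bound $r_X$; contractivity transfers it to every declared task record; and taking the infimum over the paired class (your appeal to Lemma~\ref{lem:paired-implementation-stability}) and then inserting the reference values $0$ and $p$ yields both displayed inequalities. Your additional appeal to Proposition~\ref{prop:implementation-stability} and the block-by-block bookkeeping are harmless but unnecessary, since the single Duhamel estimate already controls the complete retained propagator.
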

\begin{proof}
For each branch, Duhamel's formula gives
\begin{equation*}
 \Phi^X_{u,T,0}-\Phi^{X,0}_{u,T,0}
 =\int_0^T
 \Phi^X_{u,T,t}\circ\Delta\mathcal L^X_{u,t}\circ
 \Phi^{X,0}_{u,t,0}\,dt .
\end{equation*}
The CPTP propagators on the two sides have diamond norm one, so the half
diamond distance of the complete propagators is at most $r_X$.  Contractivity
of every declared task record and the triangle inequality give the same bound
for each task error.  Taking the infimum over the common control class gives
Eq.~\eqref{eq:dissipative-optimality-stability}; applying it to the reference
values $0$ and $p$ yields Eq.~\eqref{eq:dissipative-gap}.
\end{proof}

For a control-induced Lindblad discrepancy with fixed jump operators,
\begin{equation}
 \Delta\mathcal L_t(\rho)
 =-i[\Delta H_t,\rho]
 +\sum_j\delta\gamma_j(t)\left(
 L_j\rho L_j^\dagger-\frac12\{L_j^\dagger L_j,\rho\}\right),
 \label{eq:lindblad-perturbation}
\end{equation}
one has
\begin{equation}
 \frac12\|\Delta\mathcal L_t\|_\diamond
 \le \|\Delta H_t\|_\infty
 +\sum_j|\delta\gamma_j(t)|\,\|L_j\|_\infty^2,
 \label{eq:lindblad-radius-bound}
\end{equation}
because the jump map has full diamond norm at most $\|L_j\|_\infty^2$,
and the two anticommutator maps together have full diamond norm at most
$\|L_j\|_\infty^2$.  Thus each Lindblad dissipator has full diamond norm at
most $2\|L_j\|_\infty^2$, while the commutator has full diamond norm at most
$2\|\Delta H_t\|_\infty$.  Dividing the full generator bound by two gives
Eq.~\eqref{eq:lindblad-radius-bound}.  Both total generators must retain
CPTP forward propagators; the rate differences themselves may have either
sign. Thus, for one
unit-norm control-induced dissipative channel of rate $\gamma$ acting in the
two windows,
$r_A,r_B\le2\gamma w$ and
\begin{equation}
 E_B^{\rm impl}-E_A^{\rm impl}\ge p-4\gamma w .
 \label{eq:lindblad-gap-bound}
\end{equation}

\subsection{An exact covariant dissipative example}

The covariant isotropic control-error model can be solved more sharply. Let the same control-layer depolarizing rate profile act over the protected interval, independently of the chosen trajectory and on the same declared supports. Write
\begin{equation}
 \mathcal N_\alpha=\alpha\,\mathrm{id}+(1-\alpha)\mathcal R_{I/2},
 \qquad
 \alpha=e^{-\int_0^T\gamma(t)dt}.
 \label{eq:isotropic-channel-alpha}
\end{equation}
The two processes retain a common uncontrolled endpoint channel $\mathcal N_\alpha\circ\mathcal D_p$. Indeed $Z_E$ is conserved, the initial bath state is diagonal in its $Z_E$ basis, and system-only controls preserve the resulting two bath sectors. Within each sector the system depolarizer commutes with the conditional system-unitary evolution. It therefore factors as $\mathcal N_\alpha$ after the noiseless reduced channel, also for reference-assisted inputs. The echo gives
$\mathcal E_A=\mathcal N_\alpha$, whereas
$\mathcal E_B=\mathcal N_\alpha\circ\mathcal D_p$.  For every paired control, the noiseless reduced channel is unital and its ideal Bell overlap is at most one; for $B$ that overlap is exactly $1-p$ in every branch. The depolarizer changes this overlap to $\alpha F+(1-\alpha)/4$. The corresponding Bell deficits give lower bounds for all branches and their mixtures. The echo for $A$ and the Pauli channel for $B$ attain them, yielding
\begin{equation}
 \boxed{
 E_A^{\rm impl}=\frac34(1-\alpha),\qquad
 E_B^{\rm impl}=\frac34(1-\alpha)+\alpha p,\qquad
 E_B^{\rm impl}-E_A^{\rm impl}=\alpha p .}
 \label{eq:isotropic-dissipative-gap}
\end{equation}
The coarse paired-unitary optimum is the process-$B$ value, so the same-process fine-versus-coarse gap for $A$ is also $\alpha p$. Boundary controls are noiseless in this example, or contribute a common trajectory-independent depolarizing exposure included in $\alpha$.

Thus a covariant control error attenuates the temporal advantage by the
distinguishability factor $\alpha$.  At the complete-erasure limit $\alpha=0$, no strict gap remains; every finite integrated rate has $\alpha>0$.  If the same isotropic dissipator is instead part of the physical
environment defining the cPT, it is incorporated into $\Upsilon_A$ and
$\Upsilon_B$ rather than appended as an external perturbation; the displayed
calculation then describes that explicitly enlarged cPT model.  For a noncovariant or memory-bearing control error, the general bound \eqref{eq:dissipative-gap} applies when a joint time-local CPTP dilation supplies the stated generator bound, or when an entire-block strategy-distance certificate is supplied; no universal positive gap follows from CPTP contractivity
alone.

\subsection{Transporting implementation bounds to other tasks}

The paired-implementation lemma applies to every fixed task $\tau$. Under the test correspondence of Eq.~\eqref{eq:unitary-test-covariance}, the discrepancy of two output processes $\Lambda_1,\Lambda_2$ satisfies
\begin{equation}
 \sup_{\mathcal J_U\in\Inst_U}
 \TV(p_{\Lambda_1}^{\mathcal J_U},p_{\Lambda_2}^{\mathcal J_U})
 =
 \sup_{\mathcal J\in\Inst_{\rm id}}
 \TV(p_{\Gamma_U\Lambda_1}^{\mathcal J},p_{\Gamma_U\Lambda_2}^{\mathcal J}).
 \label{eq:unitary-implementation-discrepancy}
\end{equation}
This follows outcome by outcome from the terminal-frame map, with both processes transformed by the same frame. The coherent and dissipative estimates therefore apply to unitary target tasks with the same discrepancy constants when the paired controls, including paid boundary gates, satisfy the common budget. Applied to Proposition~\ref{prop:direct-unitary-qubit}, a uniform radius $r_X$ on each complete gate implementation gives $E_{\tau_U,f,B}^{\rm impl}(B)-E_{\tau_U,f,B}^{\rm impl}(A)\ge p-r_A-r_B$. The radius includes imperfections of the boundary gate as well as the interior pulses. For multi-time tasks, the same argument uses the declared complete record family and retained instrument networks.

\section{CPRC as a physical free conversion}
\label{sec:cprc}

\subsection{Regular dilations}

On the regular finite-dimensional dilation class, let the laboratory Hamiltonian be
\begin{equation}
 H_\omega(t)=\id_S\otimes H_E(t)+H_{SE}(t)+H_{\mathrm{ctrl},\omega}(t)
 +H_{\mathrm{inst},\omega}(t),
 \label{eq:lab-hamiltonian}
\end{equation}
where $U_\omega(t)$ is an absolutely continuous unitary path generated by an integrable bounded system control.  The system--environment interaction is bounded on the finite-dimensional dilation, and finite-duration instruments satisfy pointwise covariance,
\begin{equation}
 H_{\mathrm{inst},\omega}(t)=
 (U_\omega(t)\otimes\id_A)H_{\mathrm{inst}}^{\log}(t)
 (U_\omega^\dagger(t)\otimes\id_A).
 \label{eq:pointwise-covariance}
\end{equation}
For a local instrument event $dx$ occurring at time $r$, covariance acts on its complete CP event kernel as
\begin{equation}
 \mathcal M^{\rm lab}_{\omega,dx;r}
 =\mathcal U_{\omega,r}\circ\mathcal M^{\log}_{dx;r}
 \circ\mathcal U_{\omega,r}^{-1},
 \qquad
 \mathcal U_{\omega,r}:=\operatorname{Ad}_{U_\omega(r)\otimes I_A}.
 \label{eq:event-kernel-covariance}
\end{equation}
For the full input--output event kernel of a finite-duration block supported on $[s,t]$, the entrance and exit frames are distinct:
\begin{equation}
 \mathcal M^{\rm lab}_{\omega,dx;[s,t]}
 =\mathcal U_{\omega,t}\circ\mathcal M^{\log}_{\omega,dx;[s,t]}
 \circ\mathcal U_{\omega,s}^{-1}.
 \label{eq:finite-event-kernel-covariance}
\end{equation}
The frame acts trivially on every retained environment, reference, memory, and record factor other than $S$; identities on these factors are implicit.  Here $\mathcal M^{\log}_{\omega,dx;[s,t]}$ is the actual block expressed in logical coordinates.  With $H_{SE}=0$, it equals the prescribed logical block $\mathcal M^{\log}_{dx;[s,t]}$ for every path.  With $H_{SE}\ne0$, it retains the rotated interaction and can depend on $\omega$, as in Eq.~\eqref{eq:window-logical-generator}.  Thus Eq.~\eqref{eq:finite-event-kernel-covariance} does not identify a noisy block with the uncoupled reference.  A block with multiple open ports carries the frame at each port's own time; adjacent entrance and exit frames cancel when the ports are linked.  Adapted local feedback maps obey Eq.~\eqref{eq:event-kernel-covariance}, and the terminal readout includes the final-frame compensation.  Record labels are unchanged, or undergo one fixed process-independent measurable relabeling included in the task interface.  The input is encoded in the initial frame.  Pointwise covariance of the generator and local event kernels then preserves the full ideal record law in the absence of $H_{SE}$.

Let $\Proc$ denote these regular physical realizations.  A concrete measurable path model uses Hermitian generators $K\in L^1([0,T])$ and initial frames in $U(d)$, with their standard Borel structures.  The integral equation $\dot u=-iKu$ has a unique absolutely continuous solution; finite-dimensional propagators and CP instrument contractions are measurable in these data.  Encoding and readout are included in the path network.  More general parameter spaces must supply measurable event kernels as part of their instrument model.  For a fixed path $u$, denote the resulting physical instrument pullback by $\mathcal T_u^*$.  A probability law $\mu$ on this standard Borel implementation space defines
\begin{equation}
 (\Theta_\mu\Lambda)[\mathcal J']
 =\int\mu(du)\,\Lambda[\mathcal T_u^*\mathcal J'].
 \label{eq:path-average}
\end{equation}
The law is initially independent of the protected input.  More precisely, after the input has been encoded in the initial control frame, independence is imposed on the logical joint state and the Markov control state; the laboratory state may then be correlated with the frame label by this prescribed encoding.  The integrand is a measurable probability kernel for every instrument event.  The main task interface contains all logical instrument outcomes, with the internal control path averaged out.  If a task also retains a control label, its law and the enlarged ideal interface must be declared before optimization; a change of label law is then a change of that task specification.  Path integration defines a normalized instrument with an independent classical controller, rather than a postselected process.

Equation~\eqref{eq:path-average} agrees with the record-preserving Born rule of the Hamiltonian realization.  A dilation theorem can justify approximation of a continuum process by such realizations in the topology of its own statement \cite{fuksa2026}.  The general extension below instead uses physical instrument pullback directly; it does not infer any Fock-norm bound from density.  Ideal Brownian paths are treated in the stochastic formulation of Ref.~\cite{wang2026cprc}, while finite-budget Hamiltonian implementations here use regular paths or explicitly costed regular approximations.

\begin{proposition}[Pathwise transparency]
\label{prop:pathwise}
If $H_{SE}=0$, pointwise generator and local-event covariance, Eqs.~\eqref{eq:pointwise-covariance} and \eqref{eq:event-kernel-covariance}, hold on every instrument support, finite-duration blocks obey Eq.~\eqref{eq:finite-event-kernel-covariance}, and adapted feedback and readout are covariant, then every control trajectory has exactly the ideal task distribution.
\end{proposition}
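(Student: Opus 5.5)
The plan is to pass to the logical (rotating) frame along each fixed control path $u$. Once there, I will show that the full laboratory network, including the encoding, every instrument block, the feedback and the compensated readout, has the same record law as the uncoupled logical network. Fix an admissible path $\omega$ with frame $U_\omega(t)$, take $H_{SE}=0$, and define $\widetilde H=\widehat U_\omega^\dagger H_\omega\widehat U_\omega-i\widehat U_\omega^\dagger\dot{\widehat U}_\omega$. This is the same computation as Eq.~\eqref{eq:window-logical-generator}: the control term $H_{\mathrm{ctrl},\omega}=i\dot U_\omega U_\omega^\dagger$ cancels against the frame derivative, and $\id_S\otimes H_E$ is untouched. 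Pointwise covariance, Eq.~\eqref{eq:pointwise-covariance}, turns $H_{\mathrm{inst},\omega}$ into $H^{\log}_{\mathrm{inst}}$. The logical-coordinate generator is therefore exactly the prescribed logical generator, for almost every $t$ and for every $\omega$. Uniqueness of absolutely continuous solutions of the finite-dimensional Schr\"odinger equation then gives
\[
U^{\rm lab}_\omega(t,s)=(U_\omega(t)\otimes I)\,U^{\log}(t,s)\,(U_\omega(s)\otimes I)^\dagger
\]
on every instrument-free interval.

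Next I will treat the record-bearing parts. Local events obey Eq.~\eqref{eq:event-kernel-covariance}, and finite-duration blocks obey Eq.~\eqref{eq:finite-event-kernel-covariance}. With $H_{SE}=0$, the logical-coordinate block $\mathcal M^{\log}_{\omega,dx;[s,t]}$ coincides with the prescribed block. I will unfold the finite network into a chronological product of free propagators and event kernels, each sandwiched between frame conjugations at its entrance and exit times. Adjacent frames cancel at linked ports ($\mathcal U_{\omega,r}^{-1}\mathcal U_{\omega,r}=\mathrm{id}$), and this leaves only the initial frame $\mathcal U_{\omega,0}$ and the final frame $\mathcal U_{\omega,T}$. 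The initial frame is absorbed by the declared encoding of the input into the initial control frame. The final frame is removed by the compensated terminal readout. Adapted feedback maps are covariant event kernels indexed by the past record, so the same cancellation holds branch by branch, uniformly in the adapted record. Because the frames act trivially on the reference, environment, memory and record registers, the joint CP kernel on the complete record space equals the ideal one for every event; any fixed relabeling is part of the task interface. For continuous instruments I will pass to the admissible instrument limit of finite-slot networks. Equality of event laws is preserved under this limit by the continuity built into the instrument completion.

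I expect the main obstacle to be making the cancellation rigorous for finite-duration blocks with several open ports and interleaved adaptive feedback. Each port carries its own time-stamped frame, and the feedback choice depends on records produced inside frame-conjugated kernels. My first approach is an induction on the causal order of the unfolded directed acyclic network. The inductive invariant is that, conditional on any record prefix, the laboratory state equals $(U_\omega(r)\otimes I)$ applied to the logical state at the current cut. The invariant holds initially by the encoding, and it is preserved by free propagation (step one), by local events, and by block kernels using Eq.~\eqref{eq:finite-event-kernel-covariance}. The compensated readout then converts the final conjugated state into the ideal outcome law. This gives the trajectory-wise statement, and no averaging over $\mu$ is needed.
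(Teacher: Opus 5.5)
Your proposal is correct and follows the paper's proof essentially step for step. You pass to the logical frame so that the frame derivative cancels $H_{\mathrm{ctrl},\omega}$, use pointwise covariance to recover the logical instrument Hamiltonian, cancel the endpoint frames of local events and finite-duration blocks along the network, and remove the initial and final frames through the encoding and the compensated readout. Your causal-order induction and your explicit instrument-limit step only spell out in more detail what the paper states briefly.
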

\begin{proof}
The frame acts as $U_\omega$ on $S$ and as the identity on the environment, reference, instrument memories, and classical records.  Define $\widetilde\rho_\omega=(U_\omega^\dagger\otimes I)\rho_\omega(U_\omega\otimes I)$, with $I$ on all those other factors.  Absolute continuity permits differentiation almost everywhere, and the frame derivative cancels the control Hamiltonian.  Equation~\eqref{eq:pointwise-covariance} becomes the fixed logical instrument Hamiltonian.  Initial encoding removes the initial frame, and common environment propagation preserves this equality.  Each logical record event and its adapted feedback map is the prescribed one.  Equation~\eqref{eq:finite-event-kernel-covariance} uses the frames at the actual block endpoints; successive endpoint factors cancel along the network.  Terminal compensation then gives the ideal law on the complete record.
\end{proof}

\begin{proposition}[CPRC closure on the free set]
\label{prop:cprc-free}
Suppose each regular path network respects the entrance cut, satisfies pathwise transparency, and has a physical instrument pullback at the declared interfaces.  Its normalized input-independent randomization is a free conversion.  For two admissible randomized networks with independent path laws $\mu$ and $\nu$, their composition is
\begin{equation}
 \Theta_\nu\circ\Theta_\mu=\Theta_{\nu\star\mu},
 \qquad
 \Cost(\Theta_{\nu\star\mu})\le\Cost(\Theta_\mu)+\Cost(\Theta_\nu),
 \label{eq:cprc-composition}
\end{equation}
\end{proposition}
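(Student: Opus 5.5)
The plan has two parts. First, show that a single randomized network $\Theta_\mu$ satisfies the three free-conversion clauses. Second, identify the nested composite with a single randomized network whose path law is the pushforward of $\mu\otimes\nu$; call it $\nu\star\mu$.

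For the first part, fix a path $u$ in the support of $\mu$. By hypothesis the network respects the entrance cut (Definition~\ref{def:entrance-cut}) and has a physical instrument pullback $\mathcal T_u^*$. Proposition~\ref{prop:pathwise} gives $\mathcal T_u\Xi_a=\Xi_{a'}$ trajectory by trajectory. Averaging over $\mu$ in Eq.~\eqref{eq:path-average} then preserves the ideal task. The averaged map is normalized because each branch is normalized and $\mu$ is a probability law. The seed label is input independent, so it adds a classical supply independent of $QR$. The factorization in Eq.~\eqref{eq:cut-factorization} holds branchwise, and it survives averaging, since a mixture of states $\omega_u\otimes\rho_R$ equals $\bar\omega\otimes\rho_R$. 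For the pullback clause, I will define $\Theta_\mu^*\mathcal J'$ as the instrument that samples $u\sim\mu$ and applies $\mathcal T_u^*\mathcal J'$, with the label averaged out. This requires measurability of $u\mapsto\Lambda[\mathcal T_u^*\mathcal J'](A)$, which is supplied by the standard Borel path model. It also requires that $\Inst_a$ be closed under independent classical randomization, which is assumed in the experiment class. Deterministic completion, coarse-graining, prefixes, and ancillary extension are preserved branchwise, and hence after integration. Free-set preservation then follows from Proposition~\ref{prop:free-preserve}.

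For composition, I will nest $\Theta_\mu$ into the slot of $\Theta_\nu$ and fix a branch pair $(u,v)$. Lemma~\ref{lem:window-closure} is the prototype here. The nested frame modules give a compiled field $\mathcal A_{F_v}(\mathcal A_{F_u}(G))=\mathcal A_{F_vF_u}(G)$, by the generator identity $K=K_1+F_1K_2F_1^\dagger$. Pointwise covariance, Eqs.~\eqref{eq:event-kernel-covariance}--\eqref{eq:finite-event-kernel-covariance}, makes the event kernels compose to the kernels of the product frame, and the frames at linked ports cancel. This yields $\mathcal T_u^*\mathcal T_v^*=\mathcal T_{v\cdot u}^*$, where $v\cdot u$ is the product path. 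The product path is again absolutely continuous with an integrable generator, so it lies in $\Proc$. Fubini, justified by joint measurability and boundedness, then gives $\Theta_\nu\Theta_\mu\Lambda[\mathcal J]=\int\mu(du)\nu(dv)\,\Lambda[\mathcal T^*_{v\cdot u}\mathcal J]=\Theta_{\nu\star\mu}\Lambda[\mathcal J]$, with $\nu\star\mu$ the pushforward of $\mu\otimes\nu$ under $(u,v)\mapsto v\cdot u$. Independence of the two laws is exactly what makes the joint law a product.

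For the cost inequality I use the worst-branch convention. Unitary invariance of the operator norm, together with the triangle inequality applied to the composed generator, bounds both the peak and the action of each composite branch by the sum of the branch costs. Taking suprema over the supports of $\mu$ and $\nu$ gives the bound on the pulse part. The two samplers are run independently, so their declared supply costs add, and Eq.~\eqref{eq:cost-axioms} handles the rest.

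I expect the main obstacle to be the branchwise identity $\mathcal T_u^*\mathcal T_v^*=\mathcal T_{v\cdot u}^*$ for general finite-duration instruments and adapted feedback, as opposed to bare Hamiltonian frames. My first attempt will unfold each network into a finite DAG, as in Proposition~\ref{prop:closure}. I will verify the frame-cancellation rule port by port, and then pass to admissible continuous limits using the declared instrument-limit compatibility.
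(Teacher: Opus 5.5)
Your overall route matches the paper's. It uses pathwise freeness that survives averaging, the product law $\mu\otimes\nu$ pushed forward to the composite, Fubini for bounded event probabilities, and the generator bound of Lemma~\ref{lem:window-closure} for the worst-branch cost.

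\textbf{The product order is reversed.} Since $(\Theta\Lambda)(\mathcal J)=\Lambda(\Theta^*\mathcal J)$, one has $(\Theta_\nu\Theta_\mu\Lambda)(\mathcal J)=\Lambda(\Theta_\mu^*\Theta_\nu^*\mathcal J)$. The logical field $G$ is compiled first by $v$. The result is then conjugated by $u$, the module adjacent to $\Lambda$. Thus $\mathcal T_u^*\mathcal T_v^*G=\mathcal A_u(\mathcal A_v(G))=\mathcal A_{uv}(G)$, which is Eq.~\eqref{eq:frame-composition-identity} with $F_1=u$ and $F_2=v$. The composite generator is $K_u+uK_vu^\dagger$, and the logical coupling is $(uv)^\dagger H_{SE}(uv)$.

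You matched the lemma's conjugating frame $F_1$ with the outer network $\Theta_\nu$ of the slot substitution. Your $\mathcal A_{F_v}(\mathcal A_{F_u}(G))$ is $\mathcal T_v^*\mathcal T_u^*G$, which is the pullback of $\Theta_\mu\circ\Theta_\nu$. Frames do not commute in general, so pushing $\mu\otimes\nu$ forward under $(u,v)\mapsto v\cdot u$ generally defines a different process from $\Theta_\nu\Theta_\mu\Lambda$. The correct map is $(u,v)\mapsto u(t)v(t)$, as in the paper. Your cost estimate is unaffected, because the triangle-inequality bound is symmetric in the two generators.

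\textbf{The obstacle you expect is avoidable.} The paper defines $\nu\star\mu$ as the pushforward of $\mu\otimes\nu$ to the stacked physical network itself. Then $\Theta_\nu\circ\Theta_\mu=\Theta_{\nu\star\mu}$ is just Fubini applied to the branchwise pullback $\mathcal T_u^*\mathcal T_v^*$, and no collapse to a single path in $\Proc$ is needed.

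The product-path form enters only when the full path action has the frame representation, and there only to obtain the generator-norm cost bounds. Otherwise the stacked implementation and the stacking rule behind Eq.~\eqref{eq:cost-axioms} supply the cost inequality. Proving $\mathcal T_u^*\mathcal T_v^*=\mathcal T_{uv}^*$ port by port for general finite-duration instruments and feedback is therefore more than the statement requires. Where that identity does hold, the kernel-level frame cancellation is already carried out in the proof of Lemma~\ref{lem:window-closure}.
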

\begin{proof}
Proposition~\ref{prop:free-preserve} applies pathwise, and randomization preserves the entrance-erasing form.  Ideal transparency is pathwise transparency.  The composite pullback is $\mathcal T_u^*\mathcal T_v^*$, in this order.  Define $\nu\star\mu$ as the pushforward of $\mu\otimes\nu$ to the stacked physical network.  Fubini's theorem for bounded event probabilities proves Eq.~\eqref{eq:cprc-composition}.  With Eq.~\eqref{eq:pointwise-covariance}, the composite action on every retained apparatus or instrument block $G_{\rm app}$ is $\mathcal A_u(\mathcal A_v(G_{\rm app}))=\mathcal A_{uv}(G_{\rm app})$; it does not conjugate the fixed background $H_{SE}+I_S\otimes H_E$.  Whenever the full path action has this representation, the composite path is $u(t)v(t)$.  Its generator is $K_u+uK_vu^\dagger$, giving the norm cost bounds of Lemma~\ref{lem:window-closure}.  Otherwise the stacked-network definition is used.  No invariance of the path law under multiplication is required.  Composition stays in the generated architecture, although it need not stay in a fixed-refresh or diffusion parametrization.  Sampler and record costs are included as in Eq.~\eqref{eq:cost-axioms}.
\end{proof}

\subsection{Extension to general cPTs by experiment pullback}

For a general $\Lambda\in\CPT_a$, define the CPRC extension by its action on experiments,
\begin{equation}
 \boxed{
 (\Theta_U\Lambda)(\mathcal J'):=
 \Lambda(\Theta_U^*\mathcal J'),
 }
 \label{eq:cprc-pullback}
\end{equation}
Here $\Theta_U$ denotes a fixed admissible path law as well as its implementation.  The pullback $\Theta_U^*$ is the physical insertion of that implementation into an output experiment, including every record map and the independent random controller.  For the extension require
\begin{equation}
 \Theta_U^*\Inst_{a'}\subseteq\Inst_a.
 \label{eq:pullback-closure}
\end{equation}
In addition to this set inclusion, the pullback must be affine under instrument randomization and compatible with the multilinear slot structure.  It must commute with outcome summation and measurable outcome relabeling, send deterministic instruments to deterministic instruments, commute with ancillary extension, and preserve causal prefixes and their completion.  Its event kernels are measurable and countably additive, and it maps admissible instrument limits to admissible limits continuously.  The following proposition verifies these properties in a concrete regular model.  An ideal stochastic implementation is included only when its stochastic instrument model or established limit satisfies the same properties.

\begin{proposition}[Physical structure of a regular pullback]
\label{prop:regular-pullback}
Consider finite-dimensional physical insertion networks with fixed causal supports, CP instrument kernels, initially independent random controls on the measurable path model above, and normalized record-preserving completion.  If controls are inserted only between complete instrument blocks, equip finite instruments with the sum of their complete-block diamond distances.  If a control is inserted inside a finite-duration block, represent that block as an open comb or strategy and equip it with the corresponding complete strategy norm; alternatively, retain a joint dilation on a common system--environment--memory--record space and use the $L^1$ norm of its full joint-generator difference on the retained support.  In this generator topology, auxiliary initial states, non-Hamiltonian event kernels, adapted feedback maps, and terminal record-extraction channels are fixed.  If these data vary, include their trace-distance or complete record-preserving diamond discrepancies in the topology as well, uniformly over past records for adapted maps.  The record spaces and any relabeling are common to the comparison.  An endpoint-channel diamond norm alone does not supply continuity for an internal insertion.  If the chosen source hardware admits the inserted controls, their instrument pullback has the algebraic, ancillary, prefix, measure, and continuity properties stated above.  For the completion version, fix compatible source and target instrument completions $\overline{\Inst_{a'}}$ and $\overline{\Inst_a}$, with finite networks dense in the domain and $\overline{\Inst_a}$ complete in the selected target topology.  Assume outcome maps, ancillary extensions, and prefix operations extend continuously on these spaces, and limits of pulled-back networks remain normalized, completely positive, countably additive, record-preserving, causally admissible instruments in the declared target hardware class.  Then the pullback extends uniquely as
\begin{equation}
 \Theta_U^*: \overline{\Inst_{a'}}\longrightarrow\overline{\Inst_a}
 \label{eq:completed-pullback-map}
\end{equation}
with the same structural properties on these completions.
For randomized generator realizations with one fixed path law $\mu$,
retain the same independent classical controller internally in the target
instrument and use the topology
\[
 \operatorname*{ess\,sup}_{u\sim\mu}
 \int\|G_{{\rm app},u}(t)-\widetilde G_{{\rm app},u}(t)\|_\infty\,dt,
\]
together with the stated discrepancies for any varying event data.
This is a uniform topology on measurable path families; the target
completion retains their common controller law.  Averaging out the
controller at the task output is the same fixed channel on both families.
In an open-strategy formulation all admissible insertion ports are
retained, so equality of strategies implies equality after insertion.
An equivalence that discards those ports is not used for this pullback.
\end{proposition}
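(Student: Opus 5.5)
The proof has two layers. The first treats the finite regular networks directly. The second extends the pullback to the completions by a density argument.

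\emph{Finite networks.} For a fixed path $u$, the pullback $\mathcal T_u^*$ inserts a CP, trace-preserving, record-preserving network into the open slots of an output experiment. The algebraic properties then follow from the multilinear contraction structure of the Born rule, as in the finite-comb model of Sec.~\ref{sec:setting}.
\begin{itemize}
\item Composition of Choi operators is multilinear and affine, so the pullback is affine under instrument randomization and compatible with the slot structure.
\item Inserting fixed channels commutes with summing or relabeling the outcomes of the surrounding instrument. Hence a deterministic instrument maps to a deterministic one.
\item Frames act trivially on any ancilla, so ancillary extension commutes with insertion.
\item A causal prefix is preserved because the inserted blocks have fixed causal supports, so truncation after time $t$ involves only blocks supported before $t$.
\item For the randomized law $\mu$, Eq.~\eqref{eq:path-average} together with measurability of the finite-dimensional propagators in $(K,u_0)$ makes each event probability a bounded measurable integrand. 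Fubini and monotone convergence then give countable additivity, and normalization passes through the average.
\end{itemize}

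\emph{Continuity in the stated topologies.} Three cases arise.
\begin{itemize}
\item If controls sit between complete blocks, telescoping block replacement as in Proposition~\ref{prop:implementation-stability} gives a discrepancy at most the sum of the block diamond distances. This uses contractivity of the surrounding CPTP maps.
\item If a control is inserted inside a block, apply the same telescoping to the open strategy in the strategy norm.
\item In the retained-dilation version, the Duhamel bound of Lemma~\ref{lem:unitary-channel-perturbation} (or its CPTP analogue from Proposition~\ref{prop:dissipative-stability}) converts the $L^1$ joint-generator difference into a complete-block discrepancy. Any varying auxiliary data contribute additively, uniformly over past records.
\end{itemize}
For a fixed law $\mu$, the ess-sup topology bounds each branch uniformly. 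Averaging with the common independent controller is then a fixed contraction and preserves the bound. These estimates make $\Theta_U^*$ uniformly continuous, indeed Lipschitz, from the finite networks into $\overline{\Inst_a}$.

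\emph{Extension.} A uniformly continuous map from a dense subset into a complete space extends uniquely to the closure, which gives Eq.~\eqref{eq:completed-pullback-map}. Each structural property is an identity between continuous maps: outcome maps, ancillary extension and prefix operations are assumed to extend continuously, and linearity holds on the dense set. Each identity therefore passes to the completion.

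The main obstacle is positivity, normalization, countable additivity and causal admissibility of the limits. These are not closed conditions in general. Here they are supplied by the explicit hypothesis that limits of pulled-back networks remain admissible instruments in the target hardware class. I would make this dependence precise rather than derive it. I would also check that the open-strategy formulation keeps every insertion port, so that equality of strategies gives equality after insertion; this prevents the extension from depending on a quotient that discards ports.
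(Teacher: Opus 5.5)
Your proposal is correct and follows essentially the same route as the paper. It uses fixed-path linking for the algebraic, ancillary, prefix and measure properties, with Tonelli/Fubini for the path average; telescoping block replacement, strategy norms and Duhamel (plus one-at-a-time changes of auxiliary data) for uniform continuity; and the Cauchy-sequence extension, with admissibility of limits taken from the explicit hypothesis. The one estimate the paper states explicitly and you leave implicit is the $L^1$ isometry $\|\mathcal A_F(G_{\rm app})-\mathcal A_F(\widetilde G_{\rm app})\|_{L^1}=\|G_{\rm app}-\widetilde G_{\rm app}\|_{L^1}$, in which the common background and frame generator $\widehat K_F$ cancel; this is what links the source apparatus-field topology, including its ess-sup randomized version, to the target joint-generator bound.
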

\begin{proof}
At a fixed path, linking CP instrument kernels and control channels gives a CP kernel, multilinear in distinct slots and affine in classical mixing.  Outcome summation gives the linked deterministic CPTP network, while relabeling pushes forward the same kernel.  Tensoring an untouched reference commutes with linking.  Causal prefix restriction deletes later completed blocks and so commutes with the pullback.  Finite compositions of measurable kernels are measurable and countably additive.  Integration against the independent normalized path law preserves these statements by Tonelli's theorem and preserves normalization without postselection.  For continuity, when the control is inserted only between complete instrument blocks, replacing the blocks one at a time bounds the output trace-norm discrepancy by the sum of their complete-block diamond discrepancies, uniformly in the inserted unitary path.  This argument uses the complete-block topology only in this boundary-insertion case.  If the control acts inside a finite-duration block, the block is retained as an open comb or strategy and its discrepancy is measured in the corresponding complete strategy norm; an endpoint channel norm alone does not control such an internal insertion.  For bounded Hamiltonian instruments with the internal support retained, Duhamel's estimate gives
\begin{equation}
 \frac12\|\mathcal U_H-\mathcal U_{\widetilde H}\|_\diamond
 \le \int_0^T\|H(t)-\widetilde H(t)\|_\infty dt,
 \label{eq:internal-duhamel}
\end{equation}
uniformly over the inserted control path, where $\mathcal U_H=\operatorname{Ad}_{U_H(T)}$ and $\mathcal U_{\widetilde H}=\operatorname{Ad}_{U_{\widetilde H}(T)}$ are the full joint unitary channels.  For a retained dilation with fixed auxiliary preparation and event/readout maps, applying these common channels cannot increase the induced operational distance.  More explicitly, a block $\Phi_{H,\eta,\mathcal Q}(\rho)=\mathcal Q[U_H(\rho\otimes\eta)U_H^\dagger]$, with a complete CPTP record-extraction channel $\mathcal Q$, satisfies
\begin{align}
 \frac12\|\Phi_{H,\eta,\mathcal Q}
 -\Phi_{\widetilde H,\widetilde\eta,\widetilde{\mathcal Q}}\|_\diamond
 &\le \int_0^T\|H(t)-\widetilde H(t)\|_\infty\,dt\notag\\
 &\quad+\frac12\|\eta-\widetilde\eta\|_1
 +\frac12\|\mathcal Q-\widetilde{\mathcal Q}\|_\diamond.
 \label{eq:instrument-data-continuity}
\end{align}
This follows by changing the generator, auxiliary state, and readout one at a time, using Lemma~\ref{lem:unitary-channel-perturbation} and complete-channel contraction.  The last two terms vanish for fixed preparation and readout; variations of additional event or feedback blocks are bounded by the same telescoping argument on their complete record-preserving channels.  Telescoping over finitely many blocks gives the sum of their complete-block discrepancies in the boundary-insertion case, while the open-strategy or joint-generator estimate gives the corresponding bound for an internal insertion.  Hence the pullback is uniformly continuous in the selected topology.  Let $\{\mathcal J'_n\}\subset\Inst_{a'}$ be finite networks approximating an element of $\overline{\Inst_{a'}}$.  The uniform bound makes $\{\Theta_U^*\mathcal J'_n\}$ Cauchy in the complete target space $\overline{\Inst_a}$; define the value of Eq.~\eqref{eq:completed-pullback-map} as its limit.  Equivalent approximating sequences have equivalent images, so this definition is independent of the approximation and is the unique continuous extension.  The assumed target admissibility preserves normalization, complete positivity, countable additivity, and the causal record structure.  Continuity of the declared outcome, ancillary, and prefix operations passes their compatibility identities to the limit.  Thus the completion statement applies to the specified compatible spaces; the norm bound alone does not establish admissibility of a limit in an arbitrary hardware class.
For the frame pullbacks used in this paper there is also a direct estimate in the topology of the retained block.  With $\widehat F=F\otimes I$ on all retained environment, memory, and record factors,
\begin{equation}
 \bigl\|\mathcal A_F(G_{\rm app})-\mathcal A_F(\widetilde G_{\rm app})\bigr\|_{L^1}
 =\int_0^T\!dt\,\bigl\|\widehat F(t)(G_{\rm app}(t)-\widetilde G_{\rm app}(t))\widehat F(t)^\dagger\bigr\|_\infty
 =\|G_{\rm app}-\widetilde G_{\rm app}\|_{L^1}.
 \label{eq:frame-L1-isometry}
\end{equation}
Thus the background-independent apparatus action is continuous in the joint-generator topology.  If a control is inserted inside a block, the corresponding open-strategy norm must retain every open port, instrument-memory register, and record register on which that block can act; a terminal channel norm after these systems have been discarded is not an admissible substitute.  Equation~\eqref{eq:frame-L1-isometry}, or the complete strategy norm when the block is represented operationally, is the estimate used for the stated completion.
\end{proof}

\begin{theorem}[CPRC extension and operational contraction]
\label{thm:cprc-extension}
Assume Eq.~\eqref{eq:pullback-closure} and the physical, algebraic, measure, prefix, ancillary, and continuity properties stated above.  Then Eq.~\eqref{eq:cprc-pullback} defines a unique cPT operational equivalence class on $a'$, agrees with the regular network construction, and satisfies, for all $\Lambda,\Lambda'\in\CPT_a$,
\begin{equation}
 d_{a'}(\Theta_U\Lambda,\Theta_U\Lambda')
 \le d_a(\Lambda,\Lambda').
 \label{eq:tv-contraction}
\end{equation}
Equality holds if the conversion has an inverse physical pullback on the declared experiment classes that is mutually inverse at the level of all operational event laws.
\end{theorem}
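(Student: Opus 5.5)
The plan is to verify the operational-cPT axioms for the functional $\mathcal J'\mapsto\Lambda(\Theta_U^*\mathcal J')$ by transferring each axiom from $\Lambda$ through the structural properties of the pullback. After that, the contraction follows exactly as in Lemma~\ref{lem:physical-pullback}.

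\textbf{Well-definedness and the cPT axioms.} First I will check that the definition is legitimate. By Eq.~\eqref{eq:pullback-closure}, $\Theta_U^*\mathcal J'\in\Inst_a$ for every $\mathcal J'\in\Inst_{a'}$, so $\Lambda(\Theta_U^*\mathcal J')$ is a defined measure on the complete record space of $\mathcal J'$. Outcome relabeling commutes with the pullback, so this record space is unchanged. I then verify the axioms one by one:
\begin{itemize}
\item Multilinearity and affinity under classical randomization follow from the corresponding properties of $\Lambda$ composed with the slot-compatible, affine pullback.
\item Positivity under finite ancillary extension follows because the pullback commutes with ancillary extension and $\Lambda$ is positive on extended experiments.
\item Unit mass holds because deterministic instruments pull back to deterministic instruments.
\item The prefix condition holds because causal prefix restriction commutes with $\Theta_U^*$: the marginal on a prefix of $\mathcal J'$ is $\Lambda$ evaluated on the pulled-back prefix, which depends only on that prefix.
\item Countable additivity follows from measurability and countable additivity of the pulled-back kernels.
\item Continuity on the instrument completion follows by composing the continuous extension of Proposition~\ref{prop:regular-pullback} with the continuity of $\Lambda$.
\end{itemize}
Uniqueness of the equivalence class is immediate, because all event laws on $\Inst_{a'}$ are fixed by the formula and classes are identified by event laws. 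The same reasoning shows that $\Lambda$ and $\Lambda'$ with equal event laws on $\Inst_a$ give equal images.

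\textbf{Agreement with the regular construction.} For $\Lambda$ in the regular realizable class, the path-averaged definition Eq.~\eqref{eq:path-average} is $\int\mu(du)\,\Lambda[\mathcal T_u^*\mathcal J']$. I will identify this with $\Lambda$ evaluated on the single randomized instrument $\Theta_U^*\mathcal J'$, which retains the independent controller internally and averages it out. This requires interchanging the integral over paths with the Born functional, which I will justify by affinity of $\Lambda$ under classical randomization together with Fubini/Tonelli for the bounded event probabilities, as in the proof of Proposition~\ref{prop:regular-pullback}.

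\textbf{Contraction and the equality clause.} For each output tester $\mathcal J'$,
\[
\TV\bigl(p^{\mathcal J'}_{\Theta_U\Lambda},p^{\mathcal J'}_{\Theta_U\Lambda'}\bigr)=\TV\bigl(p^{\Theta_U^*\mathcal J'}_\Lambda,p^{\Theta_U^*\mathcal J'}_{\Lambda'}\bigr)\le d_a(\Lambda,\Lambda'),
\]
since $\Theta_U^*\mathcal J'\in\Inst_a$. Taking the supremum over $\mathcal J'$ gives Eq.~\eqref{eq:tv-contraction}. For equality, suppose there is an inverse pullback $\Psi^*$ with $\Psi^*\Inst_a\subseteq\Inst_{a'}$ that is mutually inverse with $\Theta_U^*$ at the level of event laws. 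Applying the same contraction to the induced inverse conversion gives the reverse inequality, $d_a(\Lambda,\Lambda')=d_a(\Psi\Theta_U\Lambda,\Psi\Theta_U\Lambda')\le d_{a'}(\Theta_U\Lambda,\Theta_U\Lambda')$.

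\textbf{Main obstacle.} I expect the hardest step to be the continuity and completion requirement. I must ensure that $\Lambda$ is evaluated only on admissible instruments in the completed class, and that the Cauchy limits defining $\Theta_U^*$ on $\overline{\Inst_{a'}}$ commute with the Born functional. My first move is to invoke the uniform continuity from Proposition~\ref{prop:regular-pullback} together with continuity of $\Lambda$ in the declared completion topology, taking limits in event probabilities. Positivity and normalization are closed conditions, so they pass to these limits.
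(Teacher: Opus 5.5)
Your proposal is correct and follows essentially the same route as the paper. Like the paper, you transfer each operational-cPT axiom (positivity under ancillary extension, affineness and slot structure, unit mass, prefix causality, countable additivity and continuity) through the corresponding structural property of $\Theta_U^*$, fix the output class by its event laws, and obtain the contraction and the reverse inequality by taking the supremum over the pulled-back subset of $\Inst_a$. Your Fubini/affineness justification of agreement with Eq.~\eqref{eq:path-average}, and your remark that the image is well defined on operational equivalence classes, are only slightly more explicit than the paper, which asserts both directly.
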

\begin{proof}
Complete positivity follows by applying the pulled-back experiment at each ancillary extension, where it is an admissible physical instrument.  Affineness and slot compatibility give the Born functional structure.  Summing outcomes pulls back to a deterministic experiment, so total mass is one.  Countable additivity follows from the compatible event maps and probability kernels.  For causality, restriction to an output prefix commutes with pullback; changing a later output instrument therefore cannot change the pulled-back prefix law.  Continuity of both the instrument pullback and the input Born functional gives continuity on admissible limits.  This establishes the full operational cPT definition.  The event laws uniquely specify the output operational class, and on regular realizations they coincide with Eq.~\eqref{eq:path-average}.  For the distance,
\begin{align}
 d_{a'}(\Theta_U\Lambda,\Theta_U\Lambda')
 &=\sup_{\mathcal J'\in\Inst_{a'}}
 \TV\!\left(p_\Lambda^{\Theta_U^*\mathcal J'},
 p_{\Lambda'}^{\Theta_U^*\mathcal J'}\right)\\
 &\le d_a(\Lambda,\Lambda'),
 \end{align}
because the pullback set is a subset of $\Inst_a$.  A mutually inverse physical pullback gives the reverse inequality by the same argument.  A bijection of redundant experiment labels is not necessary; operational event equivalence suffices.
\end{proof}

\begin{remark}[Operational and representation-dependent continuity]
The cPT Fock representation carries an infinite direct sum of particle sectors.  Density of a regular core does not imply boundedness of a conversion in an unweighted norm.  Operational continuity above is sufficient for the resource theory.  A norm estimate in a chosen Fock completion additionally requires the conversion to preserve that completion with a uniform analytic bound.
\end{remark}

\begin{remark}[Extension in another completion]
\label{prop:weighted-gamma}
If a specified weighted norm $\|\cdot\|_{\Gamma_\rho}$ has a dense linear regular core and a linear conversion satisfies
\begin{equation}
 \|\Theta_U X\|_{\Gamma_\rho}\le M_U\|X\|_{\Gamma_\rho},
 \label{eq:weighted-gamma-bound}
\end{equation}
then it extends uniquely to that norm completion: images of core Cauchy sequences are Cauchy, equivalent sequences have equivalent images, and the estimate passes to the limit.  This elementary conditional extension is not a graph estimate proved for CPRC.  No choice of weight or value of $M_U$ is needed by the operational results in this paper.
\end{remark}

\section{CPRC complete-record performance certificates}
\label{sec:bound}

Two task specifications occur in this paper.  The finite-window separation in Sec.~\ref{sec:strict-model} uses the endpoint identity transmission task with no intermediate logical intervention.  This section uses the uncoupled reference-process task with complete logical records and possibly adaptive finite-duration instruments.  Their errors and control classes are kept distinct.  We use the complete-record performance theorem of Ref.~\cite{wang2026cprc} as the dynamical input, and derive from it an achievable task error within the declared control architecture and budget.  The complete-record bound supplies a constructive capability certificate once its physical hypotheses and implementation budget are satisfied.

\subsection{CPRC performance guarantee}

For this application, the prescribed task $\Xi_\tau$ is the uncoupled reference process: it retains the initial joint state, the time-ordered environment propagation, the logical instruments, and the readout, and removes the centered system--environment interaction.  The complete-experiment theorem in Ref.~\cite{wang2026cprc} applies under the following conditions:
\begin{enumerate}
\item The full Markov control state $Z_t$ is initially independent of the joint quantum state in logical coordinates, after the prescribed initial-frame compensation.  Its transition law is autonomous and is not changed by feedback from the system or measurement records.
\item The marginal laws $\pi_t$ and transition kernels $P_{s,t}$ satisfy $\pi_sP_{s,t}=\pi_t$.  For OU control the full state includes both the frame and the driving variables.
\item The rotated interaction is bounded and centered at almost every time,
\begin{equation}
 \|H_{\mathrm{int},z}(t)\|_\infty\le g(t),
 \qquad
 \int\pi_t(dz)H_{\mathrm{int},z}(t)=0,
 \quad g\in L^1[0,T],
 \label{eq:bounded-centered}
\end{equation}
where $g$ is nonnegative and independent of the logical experiment in the comparison family.
\item Every finite-duration instrument has pointwise-covariant generator and local event kernels, and its full block obeys the two-endpoint relation in Eq.~\eqref{eq:finite-event-kernel-covariance}.  In logical coordinates, the prescribed local event maps, adapted feedback rule, record-label map, and terminal readout are independent of $z$ after frame compensation; a noisy propagated block still includes the rotated interaction.  The compared output is the complete logical measurement record after frame compensation and averaging.  The full control trajectory is not appended to this output; a task giving the tester that additional side information requires a separate bound.
\end{enumerate}
The control can become statistically correlated with the records it generates; autonomy specifies its transition law rather than statistical independence at later times.  The result used is Theorem~2, ``Complete-experiment operational error bound,'' in the subsection ``Process convergence and complete-record error'' of Ref.~\cite{wang2026cprc}; its derivation is in Appendix~G, subsection ``Nonautonomous complete-record theorem.''  The statement and assumptions quoted here are those of Ref.~\cite{wang2026cprc}; in particular, the control transition law is autonomous, while correlations generated with later records are treated within the complete-record experiment.  This cited theorem is the external dynamical input for the CPRC bounds below, rather than a result reproved here.  The theorem and subsection titles specify the result if numbering changes in a later version.  Its complete experiments include reference-assisted initial states, adapted logical instrument maps, and terminal readout, with the control path averaged out.  Feedback may choose later logical instruments but cannot change the autonomous control transition law.

Centering is a dynamical hypothesis, not a consequence of dilation.  For a full unitary twirl on $S$, the averaged interaction is $I_S\otimes\Tr_S(H_{SE})/d$.  This bath-only term can be assigned to the common $H_E(t)$ before defining the interaction; the remaining system-traceless coupling is centered.  A restricted frame group may leave further invariant components.  Those components require a separate error estimate or an ideal task that retains them; without centering, the quoted fast-refresh guarantee is not available for their removal.

Define the average mixing envelope
\begin{equation}
 b(s,t):=\int\pi_s(dz)\,
 \TV\!\left(P_{s,t}(z,\cdot),\pi_t\right).
 \label{eq:mixing-envelope}
\end{equation}
The complete-experiment operational error theorem of Ref.~\cite{wang2026cprc} then gives
\begin{equation}
 \eps_\tau(\overline\Lambda_{\mathrm{CPRC}})
 \le \min\left\{1,\;4\int_0^Tdt\int_0^t ds\,
 g(t)g(s)b(s,t)\right\}.
 \label{eq:complete-record-bound}
\end{equation}
In the cited derivation, the interaction commutator has trace-norm bound
$2g(t)$ and the operator-valued control transfer has bound $2b(s,t)$.
Together they bound the control-dependent conditional component by
$4\int_0^t g(s)b(s,t)ds$.
A second commutator gives the joint-state discrepancy coefficient $8$;
the final record TV is at most half that trace-norm discrepancy, giving
the coefficient $4$ above.  Complete record-preserving instruments are
contractive on the retained joint state.  This summarizes the constant
accounting of the external theorem; its conditional evolution proof
remains in the cited source.
For application, choose $\Inst_\tau$ as a subfamily of those complete experiments, with a common $g$ and the preceding assumptions holding uniformly.  Its inclusion is explicit: an endpoint transmission tester has an arbitrary state on $QR$, no intermediate instrument, and one terminal POVM; it is the zero-intervention case of the cited family.  For a multi-time task, every declared logical instrument and its adapted rule must satisfy the same pointwise covariance on its whole support.  Append its reference and all logical records to the joint experiment used in the cited theorem.  If the task reports a measurable coarse-graining $f$ of that complete record, TV contracts under $f$; taking the supremum over this included task family gives Eq.~\eqref{eq:complete-record-bound}.  Unrestricted instrument completions are not covered by this argument unless the cited estimate is known to pass to that completion.  The conditional-state evolution, operator-valued transfer estimate, and complete-record proof remain in the cited reference.  An integrable covariance of an unbounded bath field alone does not supply its operator-norm premise.

\subsection{Protocol specializations}

For stationary control, let $b(s,t)=b(t-s)$ and $t_{\mathrm{mix}}=\int_0^\infty b(u)du<\infty$.  With the additional assumption $g\in L^2[0,T]$, the stationary specialization is
\begin{equation}
 \eps_\tau(\overline\Lambda_{\mathrm{CPRC}})
 \le\min\{1,4t_{\mathrm{mix}}\|g\|_2^2\}.
 \label{eq:stationary-bound}
\end{equation}
To see the specialization, extend $g$ by zero outside $[0,T]$ and set $b_+(u)=b(u)$ for $u\ge0$, zero otherwise.  The time triangle equals $\int g(t)(b_+*g)(t)dt\le\|g\|_2\|b_+*g\|_2\le t_{\mathrm{mix}}\|g\|_2^2$ by Cauchy--Schwarz and Young's inequality.  Scheduled refresh below is nonstationary and is evaluated directly, not by assigning it a stationary mixing time.
For independent uniform redraws of the complete $N_{\rm fr}$-frame control state on holding intervals $I_k$ of length $\Delta_k$, the mixing envelope is $1-N_{\rm fr}^{-1}$ within a holding interval and zero across a redraw.  Provided the group average centers the interaction, the fixed-refresh specialization gives
\begin{equation}
 \eps_\tau(\overline\Lambda_{\mathrm{CPRC}})
 \le\min\left\{1,2(1-N_{\rm fr}^{-1})g_*^2\sum_k\Delta_k^2\right\}
 \le\min\{1,2(1-N_{\rm fr}^{-1})g_*^2Th_*\},
 \label{eq:refresh-bound}
\end{equation}
where $g_*:=\operatorname*{ess\,sup}_{t\in[0,T]}g(t)<\infty$ and $h_*:=\max_k\Delta_k$.  Each holding interval contributes a triangle of area $\Delta_k^2/2$, which gives the first bound; $\sum_k\Delta_k^2\le Th_*$ gives the second.  Four Pauli frames yield $\eps_\tau\le\min\{1,(3/2)g_*^2Th_*\}$.  Finite Cayley transitions, Brownian control, and OU control use their own mixing envelopes and conditions from Ref.~\cite{wang2026cprc}.  A finite transition is not assigned the constant-frame envelope by identifying its nominal update interval.  In particular, a finite-amplitude realization must use its actual transition law and any separately bounded implementation discrepancy.

\subsection{Protocol costs and fair comparisons}
\label{sec:protocol-costs}

We use implementation costs that are demonstrably subadditive.  For an absolutely continuous frame with physical generator $K_P$, define, for $1\le q<\infty$,
\begin{equation}
 A_q(P):=\left(\int\|K_P(t)\|_\infty^qdt\right)^{1/q},
 \quad A_\infty(P):=\operatorname*{ess\,sup}_t\|K_P(t)\|_\infty.
 \label{eq:amplitude-costs}
\end{equation}
The integration includes all paid boundary and transition supports.  One concrete vector budget is
\begin{equation}
 \Cost(P)=\bigl(N_{\mathrm{upd}},A_1,A_q,A_\infty,
 \log_2 d_A,m_{\mathrm{rec}},n_{\mathrm{rand}}\bigr),
 \label{eq:protocol-cost}
\end{equation}
where $d_A$ is the allocated controller ancillary dimension, $m_{\rm rec}$ is allocated classical memory in bits, and $n_{\rm rand}$ counts supplied random bits in a finite sampling implementation.  Budgets are compared componentwise.  Temporal supports and any hardware bandwidth condition are part of the declared architecture; a hard bandwidth limit is imposed only when closure of that hardware class has been verified.

The two-component budget $B=(A_\infty,A_1)$ used in the endpoint double-window example is the corresponding projection of this cost vector for the system frame, with all other components held fixed.  In that example $G_{\rm app}=0$, so the physical control generator is $K_P=i\dot F F^\dagger$.  For a general instrument with a nonzero logical apparatus field $G_{\rm app}$, the laboratory field also contains $\widehat F G_{\rm app}\widehat F^\dagger$; its instrument and memory costs are then counted by the full vector $\Cost(P)$ rather than by the two-component endpoint projection.  This keeps the endpoint optimum and the general priced conversion theory on the same cost convention without identifying their different hardware components.

For a general task, this cost is a property of one shared implementation $P$ and is required to satisfy the declared budget simultaneously for every $\mathcal J\in\Inst_\tau$; it is not reoptimized separately for each tester.  If the logical instruments are supplied by the interface, only the incremental control, compilation, memory, and record hardware is charged.  If instrument implementation is part of the conversion, the corresponding worst-branch and worst-task costs are included in $\Cost(P)$.

For a composed regular frame $U_2U_1$, its generator is $K_2+U_2K_1U_2^\dagger$.  Operator-norm invariance and Minkowski's inequality give $A_q(P_2P_1)\le A_q(P_1)+A_q(P_2)$, including $q=1,\infty$.  Stacking hardware adds update counts, allocated memory, and random-bit supplies.  Independent ancillary dimensions multiply, so $\log_2(d_Ad_B)=\log_2d_A+\log_2d_B$.  Consequently Eq.~\eqref{eq:cost-axioms} holds for frame-representable physical compositions and for any larger declared hardware class whose implementation cost is defined by the same stacking rule.  A parametrized protocol family that is not closed under this physical stacking rule does not by itself define the conversion class.  A nonnegative weighted sum of these components is a scalar alternative.  A squared $L^2$ energy is a valid hardware constraint, but its root, rather than its square, obeys the stated additive-budget rule for simultaneous frame composition.

A fixed-angle pulse compressed to duration $w$ has $A_1$ independent of $w$, $A_q\propto w^{1/q-1}$ for $q>1$, and $A_\infty\propto w^{-1}$.  Thus shortening a pulse increases peak and higher-norm costs while leaving linear action fixed.  More redraws increase update and sampling costs.  Ideal Brownian frames are almost surely not absolutely continuous and have no ordinary generator cost as in Eq.~\eqref{eq:amplitude-costs}.  A finite-budget implementation uses a specified regular approximation with actual costs and approximation error.

A nondegenerate Gaussian OU generator has continuous sample paths and finite $A_q$ almost surely on a compact interval, but their law has unbounded support.  In particular, $\operatorname*{ess\,sup}_\omega A_\infty(\omega)=\infty$ for a nontrivial Gaussian drive.  It therefore does not satisfy a finite worst-branch cap.  A finite-budget architecture can instead admit a bounded drive or a capped realization, with its sampler, boundary compensation, actual mixing law, and implementation discrepancy counted.  An expected-cost architecture is another explicit choice.  A Poisson update count likewise requires a cap or an expected-cost convention.  Exact continuous sampling laws are ideal models; finite random-bit implementations use their declared discretization error or an explicitly specified analogue controller supply.

For example, couple an ideal regular generator $K_\omega$ to a capped implementation that agrees with it whenever $\sup_t\|K_\omega(t)\|_\infty\le L$.  If both networks have the same output records and logical compensation, and $q_L$ is the probability of the complementary event, then uniformly over common testers
\begin{equation}
 d_{a_\tau}(P_L\Lambda,P\Lambda)\le q_L,
 \qquad \eps_\tau(P_L\Lambda)\le\min\{1,u_P+q_L\}
 \label{eq:capped-tail-certificate}
\end{equation}
whenever $\eps_\tau(P\Lambda)\le u_P$.  Indeed the coupled record laws coincide on the good event, and their TV on the complement is at most one.  A pointwise saturation of a regular generator has $A_\infty\le L$ and interior $A_q\le LT^{1/q}$, but the compensated boundary supports and sampling hardware must also fit the budget.  Equation~\eqref{eq:capped-tail-certificate} uses an implementation comparison; direct use of the dynamical theorem for a modified drive additionally requires its own centering and mixing envelope.

For example, if $N$ independent holding intervals are allowed and each costs $c_{\rm draw}>0$, the draw budget requires $Nc_{\rm draw}\le B_{\rm draw}$.  Here the $N$ draws include the initial sample and $N-1$ subsequent redraws.  Among these ideal holding schedules, $\sum_k\Delta_k^2\ge T^2/N$, with equality at uniform spacing.  The best certificate from Eq.~\eqref{eq:refresh-bound} at that $N$ is therefore
\begin{equation}
 \eps_\tau\le\min\{1,2(1-N_{\rm fr}^{-1})g_*^2T^2/N\}.
 \label{eq:refresh-budget-certificate}
\end{equation}
For the draw constraint alone, this certificate is smallest at
$N_{\max}=\lfloor B_{\rm draw}/c_{\rm draw}\rfloor\ge1$.
This minimizes that upper-bound expression, not the true error over all controls.  Peak and transition constraints can further restrict feasible $N$ and alter the envelope.  For a fixed physical interaction controlled only by system-frame conjugation, $\|U^\dagger H_{SE}(t)U\|_\infty=\|H_{SE}(t)\|_\infty$, so one coupling envelope can be used across feasible schedules.  If the hardware control also changes the physical interaction, the envelope must instead be recomputed for that implementation.

\begin{lemma}[Zero-intervention bridge for the transmission task]
\label{lem:cprc-transmission-bridge}
Let $\Inst_\tau^{\rm end}$ be the endpoint transmission subfamily of the complete
experiments in the cited CPRC theorem: it contains arbitrary reference-assisted
inputs and terminal POVMs, has no intermediate logical intervention, and uses the
same complete-record convention as the prescribed task.  Then
\[
\Inst_\tau^{\rm end}\subseteq\Inst_\tau^{\rm CPRC},
\qquad
\Xi_{\rm unc}\big|_{\Inst_\tau^{\rm end}}
=\Xi_\tau\big|_{\Inst_\tau^{\rm end}},
\]
where $\Xi_{\rm unc}$ is the uncoupled reference process in that theorem and
$\Xi_\tau$ is the identity transmission task.  Consequently, the supremum of
the total-variation error over $\Inst_\tau^{\rm end}$ is the endpoint task error
used in Proposition~\ref{prop:replacement-baseline}.
\end{lemma}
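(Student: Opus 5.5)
The lemma has three parts, and I would prove them in order. First, the endpoint subfamily is contained in the cited complete-experiment family. Second, the two reference processes agree on that subfamily. Third, the suprema coincide, so the task error is the same.

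\emph{Inclusion.} Take an element $\mathcal J\in\Inst_\tau^{\rm end}$. It is a pair $(\rho_{QR},\{M_x\}_x)$ with arbitrary reference dimension. Its complete record consists only of the terminal outcome $x$, because control paths are averaged out under the common complete-record convention. The cited theorem admits reference-assisted initial states, adapted logical instruments, and a covariant terminal readout after frame compensation. I would therefore choose the empty set of intermediate logical instruments: no event maps and a trivial feedback rule. The pointwise covariance conditions of Eqs.~\eqref{eq:pointwise-covariance}--\eqref{eq:finite-event-kernel-covariance} and hypothesis 4 then hold vacuously, with $H_{\rm inst}=0$. The terminal POVM is applied after the final-frame compensation, which is the readout convention of that theorem. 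Hence $\mathcal J$ is the zero-intervention member of $\Inst_\tau^{\rm CPRC}$ and has the same record space.

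\emph{Agreement of references.} The uncoupled reference process $\Xi_{\rm unc}$ keeps the initial joint state $\rho_{QR}\otimes\rho_E$ and the environment propagation generated by $I_S\otimes H_E(t)$. It applies no logical instrument in this subfamily, and it removes the centered interaction. Its joint propagator is therefore $I_S\otimes U_E(T)$ on $SE$, with the identity on $R$. Tracing out the environment, which is initially independent and never observed by the task, gives $\Tr_E[(I\otimes U_E)(\rho_{QR}\otimes\rho_E)(I\otimes U_E)^\dagger]=\rho_{QR}$. This equals the output of $\mathrm{id}_d\otimes\mathrm{id}_R$. The Born law $\Tr[M_x\rho_{QR}]$ thus coincides with $p_{\Xi_\tau}^{\mathcal J}$ for every $\mathcal J$ in the subfamily. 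One point to state explicitly: any bath-only term $I_S\otimes\Tr_S(H_{SE})/d$ has been absorbed into $H_E$, as explained in the centering discussion. This step is the main subtlety, because the bath dynamics must not act on $S$ or $R$. The decomposition into $I_S\otimes H_E$ together with a centered $H_{\rm int}$ is exactly what guarantees that.

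\emph{Equality of errors.} Given the inclusion and the agreement of references, for every $\mathcal J$ in the subfamily the TV distance computed against $\Xi_{\rm unc}$ equals the TV distance computed against $\Xi_\tau$, since the actual law is the same object on both sides. Taking the supremum over $\Inst_\tau^{\rm end}$ gives $\eps_\tau$ as defined in Eq.~\eqref{eq:task-error}, with the arbitrary inputs and POVMs of Proposition~\ref{prop:replacement-baseline}. By monotonicity of the supremum, this value is at most the supremum over the full CPRC family, so Eq.~\eqref{eq:complete-record-bound} applies to the endpoint task error.
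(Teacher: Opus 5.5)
Your proposal is correct and follows the paper's proof. You identify the endpoint tests as the zero-intervention members of the cited complete-experiment family, show that the uncoupled reference (bath-only propagation after frame compensation) leaves $\rho_{QR}$ unchanged and so agrees with the identity task on these tests, and take the same supremum. The explicit partial-trace computation and the remark on absorbing $I_S\otimes\Tr_S(H_{SE})/d$ into $H_E$ only spell out steps the paper states briefly.
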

\begin{proof}
The endpoint family is obtained from a complete CPRC experiment by removing all
intermediate instruments while retaining the input reference and terminal
readout, so it is a zero-intervention subfamily of the cited experiment class.
With the centered system--environment interaction removed, the reference process
propagates the input and its reference unchanged to the terminal interface.
After the prescribed frame compensation this is exactly the identity channel
with the same terminal record law, which proves the second equality.  The final
statement follows by taking the same supremum over this common task family.
\end{proof}

\begin{corollary}[Budget-constrained CPRC performance]
\label{cor:budget-cprc}
Let $P\in\mathfrak C_{a\to\tau}(B)$ be a CPRC protocol satisfying all hypotheses of the cited complete-experiment theorem, with the prescribed task equal to its uncoupled reference.  Then the budgeted task error obeys
\begin{equation}
 E_{\tau,a,B}(\Lambda)
 \le \eps_\tau(P\Lambda)
 \le \min\left\{1,4\int_0^Tdt\int_0^t ds\,g(t)g(s)b_P(s,t)\right\}.
 \label{eq:budget-cprc}
\end{equation}
For the transmission task and architecture of Proposition~\ref{prop:replacement-baseline}, Lemma~\ref{lem:cprc-transmission-bridge} identifies the task family and reference process on the zero-intervention subfamily.  Thus, if the displayed certificate is $u_P$, it also gives $R_{\tau,a,B}(\Lambda)\ge[1-d^{-2}-u_P]_+$.  Increasing a nominal refresh rate or diffusion strength gives a certificate within a fixed budget only when the physical implementation remains admissible at that budget.
\end{corollary}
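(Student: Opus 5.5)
The corollary follows from three ingredients already in hand: the definition of the budgeted optimum, the cited complete-record theorem, and the zero-intervention bridge. First, since $P\in\mathfrak C_{a\to\tau}(B)$ is one admissible control, the infimum in Eq.~\eqref{eq:budgeted-error} gives $E_{\tau,a,B}(\Lambda)\le\eps_\tau(P\Lambda)$ immediately. No optimizer is needed, and the budget is respected because $P$ is taken from the budget-$B$ class. The order matters here: membership of $P$ in the generated architecture (entrance cut, ideal transparency, cost at most $B$) is a hypothesis. I would note that it is not deduced from the dynamical assumptions.

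Second, I would apply the cited Theorem~2 of Ref.~\cite{wang2026cprc}. Its listed hypotheses are assumed: autonomous initially independent control, consistency $\pi_sP_{s,t}=\pi_t$, a bounded centered rotated interaction with envelope $g$, and pointwise-covariant instruments. Under these, Eq.~\eqref{eq:complete-record-bound} holds with $b=b_P$, the mixing envelope of the control law of $P$. To pass from the cited family to $\Inst_\tau$, I would use the inclusion argument already spelled out after Eq.~\eqref{eq:complete-record-bound}. $\Inst_\tau$ is a subfamily of the complete experiments with a common $g$. Any reported coarse-graining of the record contracts TV. The supremum over a subfamily is therefore bounded by the supremum over the full family. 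Since the prescribed task equals the uncoupled reference, the TV in Eq.~\eqref{eq:task-error} is exactly the quantity bounded there.

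Third, for the transmission specialization I would invoke Lemma~\ref{lem:cprc-transmission-bridge}. It gives $\Inst_\tau^{\rm end}\subseteq\Inst_\tau^{\rm CPRC}$ and shows that $\Xi_{\rm unc}$ agrees with the identity task on that subfamily. The certificate $u_P$ therefore bounds the endpoint transmission error, so $E_{\tau,a,B}(\Lambda)\le u_P$. Proposition~\ref{prop:replacement-baseline} fixes the baseline at $1-d^{-2}$ for every admissible $B$ in this architecture. Because $x\mapsto[b-x]_+$ is nonincreasing, Eq.~\eqref{eq:resource-monotone} yields $R_{\tau,a,B}(\Lambda)\ge[1-d^{-2}-u_P]_+$.

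The final sentence of the corollary is a scope remark rather than an inequality. I would justify it by observing that the first step requires $P\in\mathfrak C_{a\to\tau}(B)$. Raising the refresh rate or the diffusion strength changes $\Cost(P)$ through update counts, $A_\infty$, and sampler bits. An unbounded OU drive, for instance, violates a worst-branch cap, as noted in Sec.~\ref{sec:protocol-costs}. The main delicate point I expect is the second step, namely checking that the task family and reference really fall under the cited theorem's uniform hypotheses (a common $g$ and covariant instruments on every tester). I would handle it by restricting to the stated included subfamily and not claiming anything for unrestricted completions.
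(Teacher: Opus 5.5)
Your proposal is correct and follows the paper's route: the infimum bound from Eq.~\eqref{eq:budgeted-error}, then the cited complete-experiment theorem restricted to the included task subfamily, then Lemma~\ref{lem:cprc-transmission-bridge} combined with the exact baseline $1-d^{-2}$ and the monotonicity of $[b-x]_+$. You also treat the final sentence as a feasibility remark, which matches the paper's closing observation that admissibility is fixed by implementation cost independently of the error estimate.
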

\begin{proof}
The first inequality follows from the infimum in Eq.~\eqref{eq:budgeted-error}.  The second is the cited complete-experiment theorem applied to the selected protocol and the fixed task family.  Subtracting this achievable certificate from the exact free baseline gives the resource lower bound.  For the transmission statement, apply Lemma~\ref{lem:cprc-transmission-bridge}; hence $u_P$ bounds the same $\eps_\tau$ that defines $E_{\tau,a,B}$, and subtracting it from $1-d^{-2}$ yields a bound on the resource quantity.  Feasibility of each protocol is determined by its implementation cost, independently of this error estimate.
\end{proof}

For transmission, a valid CPRC certificate $u_P<1-d^{-2}$ certifies
strictly positive capability.  In the ideal holding model,
Eq.~\eqref{eq:refresh-budget-certificate} gives the sufficient condition
\begin{equation}
 N>\frac{2(1-N_{\rm fr}^{-1})g_*^2T^2}{1-d^{-2}},
 \qquad N\le N_{\max},
 \label{eq:positive-cprc-capability}
\end{equation}
provided the remaining hardware caps also admit that schedule and its
stated envelope.  For four Pauli frames and a qubit task this becomes
$N>2g_*^2T^2$; equivalently, the maximum-interval certificate is positive
when $g_*^2Th_*<1/2$.  A finite-transition implementation uses its actual
envelope and adds the certified implementation discrepancy of Proposition~\ref{prop:implementation-stability} before
testing this baseline threshold.

\section{Diagnostics and relation to earlier resource theories}
\label{sec:diagnostics}

The task error is the only optimization target in this theory.  Several quantities remain useful, but they are secondary diagnostics.  The cPT coefficients $C_{\boldsymbol\nu}^{(n)}$ resolve which time-ordered insertion sectors carry the residual error.  A particle-amplitude difference can be bounded sector by sector under the hypotheses of a chosen dilation, but an individual complex amplitude is not a probability and need not be monotone under free operations.  Conditional mutual information can reveal how a control changes correlations across a temporal cut, but its monotonicity depends on the exact process interface and on the class of allowed superprocesses.  Environmental memory and CMI therefore explain where the task error comes from; they do not define independent resources in the present theory.

The contraction underlying our ordering is proved on complete experiments
through physical pullback.  It does not assume data processing for a
distance between process Choi operators, in keeping with the distinction
between Choi and operational distinguishability in Ref.~\cite{zambon2024}.

The superprocess resource theories of Refs.~\cite{berk2021,berk2023} identify valuable temporal structure through their chosen free transformations; their noise-reduction protocols exploit temporal correlations before presenting a controlled process at a coarser interface.  The distinction between physical control followed by interface restriction and passive restriction alone is essential here.  Proposition~\ref{thm:fine-advantage} compares capability for a fixed output task when every coarse protocol is also available at the fine input interface.  It does not forbid a fine control from producing a cleaner endpoint process after coarse-graining.  Indeed the echo in Theorem~\ref{thm:strict-access} produces an ideal endpoint channel from $A$, whereas passive coarse-graining produces $\mathcal D_\phi$.

Thus temporal purification in the present architecture means converting an internally accessible noisy process to a lower-error task process through control and subsequent restriction, with the control budget charged.  It can consume internal access while improving raw endpoint performance.  The budgeted monotone does not increase, because the input capability already optimizes over that same conversion.  This reconciles error reduction by a free control with the resource order, and keeps the complete-record error as the only performance target.  The choice of monotone remains tied to the control class, consistent with Refs.~\cite{berk2025monotones,berk2025quantifiers}.  Randomized compiling and random dynamical decoupling supply established control mechanisms \cite{wallman2016,viola2005}; the complete-record continuous bound used here belongs to Ref.~\cite{wang2026cprc}.

The result also explains why fixed-input, fixed-readout tests are insufficient as a resource definition.  Such a test may assign zero error to an operation that discards the input and prepares the one tested output.  The task interface must therefore include the logical interventions whose behavior is to be protected, and the ideal comparison must use the same complete experiment on both processes.

\section{Discussion and scope}
\label{sec:discussion}

The main conclusion is operational: an uncontrolled endpoint channel cannot determine the task capability accessible through internal time controls. The two-window theorem proves this failure of identification at a finite, explicitly counted control budget and establishes strict fine-versus-coarse advantage for one and the same process. The common endpoint channel fixes neither the fine optimum nor the resource advantage. Its best possible capability prediction still has worst-case error at least $p/2$. The fixed-total-phase family resolves the mechanism, and the sharp action threshold quantifies the investment needed for a nontrivial perfect echo.

The resource theory turns these physical statements into a consistent conversion order. Complete entrance erasure supplies a free family at every compatible access layer. Causal slot substitution and physical experiment pullback give identity and composition closure, free-set invariance, and budgeted monotonicity. For reference-assisted transmission, the benchmark $1-d^{-2}$ calibrates a process's ability to retain the unknown input. The closed priced task subtheory applies to realizable cPTs at feasible budgets in the declared access and control architecture. For general tasks, the same construction orders optimized task error $E$; the calibrated advantage $R$ is monotone across interfaces when their free baselines agree. Thus the resource ordering is specified by the task, access structure, and available control hardware.

Task covariance establishes the connection to logical gates. Same-dimensional unitary targets have the exact free benchmark $1-d^{-2}$ whenever the budget admits a depolarizing free realization. Output-frame covariance and joint input--output transport identify the distinct interface and cost correspondences needed to transfer optimized errors. The concrete two-qubit model gives a direct physical result: a finite echo pair and a paid boundary gate realize the same $0/p/p$ separation for every $U\in U(2)$ whenever the operations fit the common budget. For $U=-iX$, one pulse implements both the gate and the echo at action $\pi/2$, illustrating why an additive gate charge is not automatic. This construction supplies a gate-level access advantage independently of an abstract exact-budget translation.

Implementation stability makes the comparison quantitative beyond the nominal control network. Complete-record discrepancy bounds apply to any fixed task, with coherent and dissipative estimates evaluated on the full retained dynamics. They include pulse, controller, instrument, and record imperfections while keeping the environmental action assigned to the cPT explicit. A common commuting window interaction preserves exact endpoint equality and a strictly positive gap below the stated threshold. A covariant control-layer depolarizer gives the exact attenuated gap $\alpha p$. These are distinct physical comparisons, each with its declared supports, interfaces, and control costs.

Continuous process language carries the implementation information needed for these comparisons: finite instrument support, internal control ports, generator peak and action, full record laws, and compatible limits of control paths. The exact separation is realized on a finite-dimensional system--environment network. The continuous pullback results additionally treat open instruments and their specified completions, preserving the causal, ancillary, and event structure. This joins the analytic model to control protocols whose physical specification contains more information than an endpoint channel.

CPRC provides such protocols when its physical instrument pullback obeys the declared architecture. Operational contraction and compatible completion give a conditional extension of these conversions. The complete-record dynamical theorem of Ref.~\cite{wang2026cprc}, under centered bounded interaction, logical initial independence, autonomous control, and covariant instruments, supplies achievable-error certificates. The theory converts those certificates into budgeted capability guarantees and specifies how finite implementation discrepancies enter them. Exact optima belong to the finite-window control class; mixing and refresh bounds certify the selected CPRC implementations.

The next computational task is to obtain matching upper and lower certificates for larger declared control classes. High-dimensional searches can use realizable protocols for upper bounds and the fixed task interface for lower bounds. Families of tasks and supplies of quantum memory or entanglement can likewise be introduced through typed interfaces and costs, with their own compatible free benchmarks. CMI, environmental memory, and particle amplitudes can guide such searches by locating the temporal mechanisms that contribute to complete-record error.

The results identify a measurable distinction between endpoint noise and controllable temporal structure. Temporal access determines which interactions can be canceled; the control budget determines whether the required conversion is feasible. Protecting a logical task therefore requires characterization of the internal process and its admissible control architecture, together with an error criterion that compares the same complete experiment with its prescribed ideal.

\renewcommand{\href}[2]{#2}
\def\Eprint#1#2{#2}
\bibliography{references}

\end{document}